\title{Tight Algorithmic Applications of Clique-Width Generalizations} 
\author{Vera Chekan}{Humboldt-Universität zu Berlin, Germany}{vera.chekan@informatik.hu-berlin.de}{https://orcid.org/0000-0002-6165-1566}{Supported by DFG Research Training Group 2434 “Facets of Complexity”.}
\author{Stefan Kratsch}{Humboldt-Universität zu Berlin, Germany}{kratsch@informatik.hu-berlin.de}{https://orcid.org/0000-0002-0193-7239}{}
\authorrunning{V.\ Chekan and S.\ Kratsch} 
\keywords{Parameterized complexity, connectivity problems, clique-width} 
\begin{document}

\maketitle

\begin{abstract}
    In this work, we study two natural generalizations of clique-width introduced by Martin Fürer.
    Multi-clique-width~(mcw) allows every vertex to hold multiple labels~[ITCS 2017], while for fusion-width~(fw) we have a possibility to merge all vertices of a certain label~[LATIN 2014].
    Fürer has shown that both parameters are upper-bounded by treewidth thus making them more appealing from an algorithmic perspective than clique-width and asked for applications of these parameters for problem solving.
    First, we determine the relation between these two parameters by showing that 
    $\mcw \leq \fw + 1$.
    Then we show that when parameterized by multi-clique-width, many problems (e.g., \textsc{Connected Dominating Set}) admit algorithms with the same running time as for clique-width despite the exponential gap between these two parameters.
    For some problems (e.g., \textsc{Hamiltonian Cycle}) we show an analogous result for fusion-width:
    For this we present an alternative view on fusion-width by introducing so-called glue-expressions which might be interesting on their own.
    All algorithms obtained in this work are tight up to (Strong) Exponential Time Hypothesis.
\end{abstract}

    \section{ Introduction}

In parameterized complexity apart from the input size we consider a so-called parameter and study the complexity of problems depending on both the input size and the parameter where the allowed dependency on the input size is polynomial. 
In a more fine-grained setting one is interested in the best possible dependency on the parameter under reasonable conjectures.
A broad line of research is devoted to so-called \emph{structural parameters} measuring how simple the graph structure is: different parameters quantify various notions of possibly useful input structure.
Probably the most prominent structural parameter is treewidth, which reflects how well a graph can be decomposed using small vertex separators. 
For a variety of problems, the tight complexity parameterized by treewidth (or its path-like analogue pathwidth) has been determined under the so-called Strong Exponential Time Hypothesis (e.g., \cite{RooijBR09,JansenLS14,Pilipczuk17,CyganKN18,LokshtanovMS18,CyganNPPRW22,FockeMINSSW23}).
However, the main drawback of treewidth is that it is only bounded in sparse graphs: a graph on $n$ vertices of treewidth $k$ has no more than $nk$ edges.

To capture the structure of dense graphs, several parameters have been introduced and considered.
One of the most studied is clique-width. 
The clique-width of a graph is at most~$k$ if it can be constructed using the following four operations on $k$-labeled graphs: create a vertex with some label from $1, \dots, k$; form a disjoint union of two already constructed graphs; give all vertices with label $i$ label $j$ instead; or create all edges between vertices with labels $i$ and $j$.
It is known that if a graph has treewidth $k$, then it has clique-width at most~$3 \cdot 2^{k-1}$ and it is also known that an exponential dependence in this bound is necessary~\cite{CorneilR05}.
Conversely, cliques have clique-width at most 2 and unbounded treewidth.
So on the one hand, clique-width is strictly more expressive than treewidth in the sense that if we can solve a problem efficiently on classes of graphs of bounded clique-width, then this is also true for classes of graphs of bounded treewidth. 
On the other hand, the exponential gap has the effect that as the price of solving the problem for larger graph classes we potentially obtain worse running times for some graph families.

Fürer introduced and studied two natural generalizations of clique-width, namely fusion-width (fw)~\cite{Furer14} and multi-clique-width (mcw)~\cite{Furer17}.
For fusion-width, additionally to the clique-width operations, he allows an operator that fuses (i.e., merges) all vertices of label~$i$.
Originally, fusion-width (under a different name) was introduced by Courcelle and Makowsky~\cite{CourcelleM02}.
However, they did not suggested studying it as a new width parameter since it is parametrically (i.e., up to some function) equivalent to clique-width.
For multi-clique-width, the operations remain roughly the same as for clique-width but now every vertex is allowed to have multiple labels. 
For these parameters, Fürer showed the following relations to clique-width (cw) and treewidth (tw):
\begin{align}\label{eq:relation-cw}
    \fw \leq \cw \leq \fw \cdot 2^{\fw} && \mcw \leq \cw \leq 2^{\mcw} && \fw \leq \tw + 2 && \mcw \leq \tw + 2
\end{align}
Fürer also observed that the exponential gaps between clique-width and both fusion- and multi-clique-width are necessary.
As our first result, we determine the relation between fusion-width and multi-clique-width: 
\begin{theorem}\label{thm:mcw-fw-relation-intro}
    For every graph $G$, it holds that $\mcw(G) \leq \fw(G) + 1$.
    Moreover, given a fuse-$k$-expression $\phi$ of $G$, a multi-clique-width-$(k + 1)$-expression of $G$ can be created in time polynomial in $|\phi|$ and $k$.
\end{theorem}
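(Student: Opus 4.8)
The plan is to turn the given fuse-$k$-expression $\phi$ into a multi-clique-width-$(k+1)$-expression by a single bottom-up pass that simulates $\phi$ operation by operation. The four classical operations---introduce, disjoint union, relabel $\rho_{a\to b}$, and edge insertion $\eta_{a,b}$---are shared between clique-width and multi-clique-width, a single label being the special case of a singleton label set, so the only operation that needs genuinely new treatment is the fuse operation $\theta_i$, which merges all vertices currently labelled $i$ into one vertex. The central encoding idea is to represent each vertex of the final graph $G$ by exactly one vertex of the constructed multi-labelled graph, and to let the label set of this representative, at any moment, be the union of the (single) labels carried at that moment by the pre-fusion \emph{pieces} that will eventually be fused into it. A short preprocessing (a union--find over the fuse nodes) computes, for every leaf of $\phi$, the vertex of $G$ it ends up in, so that pieces are known in advance even across nested fusions.

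Under this encoding the simulation of relabel and edge insertion is forced: I apply $\rho_{a\to b}$ and $\eta_{a,b}$ verbatim, since these act purely through labels and hence treat a representative exactly as the collection of its pieces would be treated; in particular a verbatim $\rho_{a\to b}$ both adds and removes labels on a representative in step with its pieces, so the invariant is preserved for free. The key correctness check is on adjacencies: two distinct representatives become adjacent precisely when, at some $\eta_{a,b}$ step, one carries $a$ and the other carries $b$, i.e.\ precisely when some piece of the first and some piece of the second are joined---which is exactly the condition for the corresponding vertices to be adjacent in $G$---while an $\eta_{a,b}$ seeing both $a$ and $b$ inside a single representative only produces a self-loop, matching the fact that edges internal to a fused class vanish. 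Thus the constructed graph is exactly $G$.

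The fuse operation is then remarkably cheap \emph{if} the encoding can be maintained: just before $\theta_i$ all pieces of the fused vertex carry label $i$, so its representative already carries the singleton $\{i\}$ and behaves as an ordinary label-$i$ vertex, whence $\theta_i$ is simulated by doing nothing and all operations above the fuse node (including further, nested fusions) are simulated unchanged within the same label budget. Because multi-clique-width has neither vertex deletion nor vertex merging, however, a fused class can never be collapsed after the fact; its representative must be carried as a single vertex from the creation of its first piece. Consequently, when a \emph{later} piece of an already-started class is introduced with some label $c$, I must not create a new vertex but instead add $c$ to the existing representative's label set, and this is where the $(k+1)$-th label $s:=k+1$ is spent: the representative is singled out via $s$ and a relabelling that adds $c$ to every vertex carrying $s$ inserts the new label.

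I expect the main obstacle to be exactly this single-representative requirement in the presence of disjoint unions and scattered piece creation: the pieces of one fused vertex may be introduced at different times and in different subtrees of $\phi$, in particular on opposite sides of a disjoint union, yet one representative cannot live in both operands of a disjoint union and cannot be merged afterwards. Overcoming this is the technical heart of the construction---one must introduce each representative once, route onto it every edge that its various pieces receive, and reconcile the two sides of each disjoint union (as well as several classes that are simultaneously ``under construction''), all with the single spare label $s$ and without ever materialising a piece as a separate vertex. The argument must therefore exhibit an invariant guaranteeing that one extra label always suffices for this bookkeeping, together with a verification that every adjacency of $G$, and no other, is produced. Once this is in place the running time is immediate, as the transformation makes one pass over $\phi$ with only polynomial overhead per node.
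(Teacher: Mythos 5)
Your high-level plan --- simulate $\phi$ bottom-up, keep one representative per vertex of $G$, let its label set be the union of the labels of its not-yet-fused pieces, and treat $\theta_i$ as a no-op --- is a reasonable instinct, and your handling of relabel, join, and of a fuse all of whose pieces already live in the current graph is sound. But the proof has a genuine gap exactly where you flag it, and the encoding you commit to makes that gap hard to bridge as stated. By introducing the representative at the creation of its \emph{first} piece, you force it to live in one operand of every subsequent disjoint union, while later pieces of the same fused class are introduced, relabelled, and \emph{joined to other vertices} inside the other operand before the union takes place. Those edges cannot be attached to the representative at the moment the simulation reaches the corresponding $\eta_{a,b}$-node, because the representative is not present in that sub-expression; they must be remembered and created later. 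Remembering them requires recording, on the \emph{other} endpoint of each such postponed edge, which fused classes that endpoint must eventually be joined to --- and since up to $k$ classes can be pending simultaneously, a single vertex may need to carry on the order of $k$ such markers at once, not the one spare label $s$ you budget for. Your proposal explicitly names this reconciliation as ``the technical heart of the construction'' but supplies no invariant or mechanism for it, so the claim that one extra label suffices is unsupported.

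The paper resolves this differently: after normalising the expression (suppressing nested/skippable fuses and useless nodes), the representative of a fused class is introduced not at its first piece but at the fuse node itself, which is rewritten as a union with a fresh introduce-child, so the representative never has to straddle a disjoint union. Every edge incident with a pre-fuse piece is postponed by adding a marker label $\widehat{i}$ to the \emph{other} endpoint, drawn from a second palette $\widehat{1},\dots,\widehat{k}$, and the postponed edges are created by a single join $\eta_{i,\widehat{i}}$ immediately after the representative appears, after which $\widehat{i}$ is erased. This costs $2k$ labels a priori; the reduction to $k+1$ rests on a separate structural observation that labels $i$ and $\widehat{i}$ co-occur only inside the short ``postponed sequence'' following each new introduce-node, so each hatted label can be identified with its unhatted counterpart at the price of one additional label $\star$ used transiently inside those sequences. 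Both the postponement mechanism and this co-occurrence argument are absent from your proposal, and without them the bound $\mcw(G)\leq\fw(G)+1$ does not follow.
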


The relations in \eqref{eq:relation-cw} imply that a problem is FPT parameterized by fusion-width resp.\ multi-clique-width if and only if this is the case for clique-width.
However, the running times of such algorithms might strongly differ.
Fürer initiated a fine-grained study of problem complexities relative to multi-clique-width, starting with the \textsc{Independent Set} problem.
He showed that this problem can be solved in $\ostar(2^{\mcw})$ where $\ostar$ hides factors polynomial in the input size.
On the other hand, Lokshtanov et al.\ proved that under SETH no algorithm can solve this problem in $\ostar((2-\varepsilon)^{\pw})$ where $\pw$ denotes the parameter called pathwidth~\cite{LokshtanovMS18}.
Clique-width of a graph is at most its pathwidth plus two~\cite{FellowsRRS09} so the same lower bound holds for clique-width and hence, multi-clique-width as well.
Therefore, the tight dependence on both clique-width and multi-clique-width is the same, namely $\ostar(2^k)$. 
We show that this is the case for many further problems.
\begin{theorem}
    Let $G$ be a graph given together with a multi-$k$-expression of $G$. Then:
    \begin{itemize}
        \item \textsc{Dominating Set} can be solved in time $\ostar(4^k)$; 
        \item \textsc{$q$-Coloring} can be solved in time $\ostar((2^q - 2)^k)$;
        \item \textsc{Connected Vertex Cover} can be solved in time $\ostar(6^k)$;
        \item \textsc{Connected Dominating Set} can be solved in time $\ostar(5^k)$.
    \end{itemize}
    And these results are tight under SETH. 

    Further, \textsc{Chromatic Number} can be solved in time $f(k) \cdot n^{2^{\mathcal{O}(k)}}$ and this is tight under~ETH.
\end{theorem}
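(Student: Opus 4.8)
The plan is to establish the upper bounds by dynamic programming run directly on the given multi-$k$-expression, and to inherit all tightness from the corresponding clique-width results. The first point to stress is why the naive route fails: by the relation $\cw \le 2^{\mcw}$ one could turn a multi-$k$-expression into a clique-expression, but of width $2^{\Theta(k)}$, which would replace the bases $4,\ 2^q-2,\ 6,\ 5$ by doubly-exponential quantities and would turn the $n^{2^{\mathcal{O}(\cw)}}$ bound for \textsc{Chromatic Number} into a double-exponential tower in $k$. Hence every table must be indexed by the $k$ labels themselves, never by the up to $2^k$ label classes of the associated clique-expression.

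For the four basic problems I would keep a table indexed by an assignment of a constant-size \emph{type} to each label $i\in[k]$, so that the number of states is $c^k$ with $c$ the claimed base. For \textsc{Dominating Set} each label carries one of four types aggregating whether its class already contains a selected vertex and whether its vertices are already dominated, matching the $\ostar(4^{\cw})$ clique-width algorithm. For \textsc{$q$-Coloring} a label stores the set of colors occurring on its vertices, and a refinement of the DP (as in the clique-width algorithm) leaves $2^q-2$ relevant patterns. For \textsc{Connected Vertex Cover} and \textsc{Connected Dominating Set} the per-label membership/domination type is augmented by a partition of the currently live labels recording which classes have been merged into a common component; storing only a representative family of these weighted partitions via the rank-based approach keeps the table size at $\ostar(6^k)$ and $\ostar(5^k)$, respectively.

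The transitions for vertex introduction, disjoint union and relabeling are routine bookkeeping; the delicate operation is edge creation $\eta_{i,j}$, and I expect it to be the main obstacle throughout. Because a vertex may hold several labels at once, $\eta_{i,j}$ creates more edges than in the single-label setting, and, crucially, it may act on a vertex carrying both $i$ and $j$: such a vertex receives no self-loop, so the per-label types must be updated without double counting it and, in the coloring case, without forbidding a color merely because it sits on one vertex bearing both endpoint labels. I would tame this by first normalizing the multi-expression and then refining each type just enough to record this corner case, proving by induction along the expression that the refined types still determine the optimum while leaving the number of types --- and hence the base --- unchanged; the connectivity bookkeeping for the two connected variants then carries over from the clique-width rank-based algorithms.

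For \textsc{Chromatic Number} I would index states by \emph{color types} instead: the type of a color is the set $T\subseteq[k]$ of labels occurring on its vertices, and a state records, for every $T$, how many colors have type $T$. This gives at most $(n+1)^{2^k}$ states and the claimed $f(k)\cdot n^{2^{\mathcal{O}(k)}}$ running time, again avoiding the detour through clique-width that would cost a second exponential; edge creation forces the same multi-label care, since a color whose type contains both $i$ and $j$ is admissible under $\eta_{i,j}$ only when this is witnessed by a single vertex holding both labels. Finally, the tightness claims require no new reductions: every clique-$k$-expression is in particular a multi-$k$-expression in which each vertex carries a single label, so any $\ostar((c-\varepsilon)^{\mcw})$ algorithm --- or any $n^{2^{o(\mcw)}}$ algorithm for \textsc{Chromatic Number} --- would, when run on these single-label expressions, solve the problem within the same bound in terms of clique-width. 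It therefore suffices to invoke the matching lower bounds at the level of clique-width, with bases $4,\ 2^q-2,\ 6,\ 5$ under SETH and the impossibility of an $n^{2^{o(\cw)}}$ algorithm under ETH; these follow from the standard SETH/ETH reductions, whose output graphs come equipped with clique-expressions, and hence multi-expressions, of the required width.
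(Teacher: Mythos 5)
Your overall strategy --- running the dynamic programs directly on the multi-$k$-expression with per-label states, and inheriting every lower bound from clique-width because any $k$-expression is in particular a multi-$k$-expression --- is exactly the paper's. However, two of your concrete design choices would not go through. First, the ``corner case'' you organize the join step around cannot occur: by definition, $\eta_{i,j}$ is only permitted on a multi-labeled graph in which no vertex holds both $i$ and $j$ simultaneously. The genuine multi-label difficulty with joins is a different one, and you do not identify it: a join $\eta_{a,b}$ may create edges incident with \emph{some} vertices of another label $i$ (namely those that also carry $a$) without giving \emph{all} vertices of label $i$ a common neighbor. Lampis' exclusion of the full color set --- which is precisely what brings $2^q$ down to $2^q-2$ for \textsc{$q$-Coloring}, and $8$ down to $6$ for \textsc{Connected Vertex Cover} --- rests on every live label eventually receiving a common neighbor. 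In the multi-label setting this forces a new, relabeling-aware notion of an \emph{active} label ($i$ is active at $t$ if some future join $\eta_{a,b}$ is such that every vertex currently holding $i$ will hold $a$ at that join, and $b$ is then non-empty). Without this redefinition, your induction that ``the refined types still determine the optimum while leaving the number of types unchanged'' has no basis, since a label holding all $q$ colors need not be discardable.

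Second, for the two connectivity problems your state space --- per-label types ``augmented by a partition of the currently live labels'' maintained via rank-based representative families --- does not yield the claimed bases: a representative family of weighted partitions over up to $k$ blocks costs an extra $2^{\Theta(k)}$ factor on top of the $6^k$ (resp.\ $5^k$) base states, so this route does not give $\ostar(6^k)$ or $\ostar(5^k)$. The tight clique-width algorithms being lifted (Hegerfeld--Kratsch) are Cut\&Count algorithms, not rank-based ones: connectivity is enforced by counting, modulo $2$, consistent cuts $(L,R)$ of the solution, and the bases $6$ and $5$ are exactly the numbers of admissible per-label states in that formulation. The paper follows this route, additionally observing that the irredundancy of the expression assumed by Hegerfeld--Kratsch (not known to be achievable for multi-expressions) can be replaced by the active-label notion, and, for \textsc{Connected Dominating Set}, deferring the domination check to the root of the expression. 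A further small point for \textsc{Chromatic Number}: a vertex can lose all of its labels via $\rho_{i\to\emptyset}$, after which its color silently disappears from every color-type count in your scheme; the paper prevents this by giving every vertex an additional, never-removed label $k+1$. Your tightness argument is correct and matches the paper.
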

We prove this by providing algorithms for multi-clique-width with the same running time as the known tight algorithms for clique-width.
The lower bounds for clique-width known from the literature then apply to multi-clique-width as well proving the tightness of our results.
By \cref{thm:mcw-fw-relation-intro}, these results also apply to fusion-width.
For the following three problems we obtain similar tight bounds relative to fusion-width as for clique-width, but it remains open whether the same is true relative to multi-clique-width:
\begin{theorem}
    Let $G$ be a graph given together with a fuse-$k$-expression of $G$. Then:
    \begin{itemize}
        \item \textsc{Max Cut} can be solved in time $f(k) \cdot n^{\mathcal{O}(k)}$;
        \item \textsc{Edge Dominating Set} can be solved in time $f(k) \cdot n^{\mathcal{O}(k)}$;
        \item \textsc{Hamiltonian Cycle} can be solved in time $f(k) \cdot n^{\mathcal{O}(k)}$.
    \end{itemize}
    And these results are tight under ETH.
\end{theorem}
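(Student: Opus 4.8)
The plan is to establish the three upper bounds by dynamic programming run directly over a fuse-$k$-expression, recast through the glue-expression view introduced in this work, and to obtain the matching lower bounds essentially for free. For the lower bounds, observe that every clique-$k$-expression is a fuse-$k$-expression in which the fuse operator is simply never used; this is exactly the identity transformation witnessing $\fw(G) \leq \cw(G)$ from \eqref{eq:relation-cw}. Consequently, an algorithm solving any of the three problems in time $f(\fw) \cdot n^{o(\fw)}$ would, when handed an ordinary clique-$w$-expression, run in $f(w) \cdot n^{o(w)}$ and thereby refute the known ETH-based lower bounds for \textsc{Max Cut}, \textsc{Edge Dominating Set}, and \textsc{Hamiltonian Cycle} parameterized by clique-width. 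So the entire difficulty lies in the upper bounds.

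A tempting shortcut---convert the fuse-expression into a clique-width expression and invoke the existing $n^{\mathcal{O}(\cw)}$ algorithms---fails decisively: by \eqref{eq:relation-cw} the conversion may blow the width up to $\cw \leq \fw \cdot 2^{\fw}$, yielding only $n^{2^{\mathcal{O}(k)}}$ rather than the promised $n^{\mathcal{O}(k)}$. I must therefore process fuse operations natively while keeping the number of labels, and hence the exponent, linear in $k$. First I would set up, for each problem, a DP whose state aggregates per label only counting information rather than the identity of individual vertices: for \textsc{Max Cut}, for each label $i$ the number of vertices of that label placed on the left side of the cut; for \textsc{Edge Dominating Set}, for each label the number of vertices already covered resp.\ still uncovered by the chosen edge set; and for \textsc{Hamiltonian Cycle}, exactly the boundary information used by the known clique-width program, namely per-label counts of degree-$0$, degree-$1$, and degree-$2$ vertices in the current linear forest together with the connectivity pattern linking path endpoints across label classes. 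In every case the state is described by the same data as in the clique-width algorithm, so there are $n^{\mathcal{O}(k)}$ states. The transitions for the four clique-width operations are routine: introducing a vertex increments one count; disjoint union adds the corresponding counts of the two children; relabeling $i \to j$ folds the $i$-counts into the $j$-counts; and an edge-join between labels $i$ and $j$ updates the objective using the stored counts, e.g.\ contributing $a_i(t_j - a_j) + (t_i - a_i)a_j$ cut edges for \textsc{Max Cut}, where $t_\ell$ is the total and $a_\ell$ the left count of label $\ell$.

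The main obstacle is the fuse operation, which is precisely what separates fusion-width from clique-width and what the naive reduction cannot afford. Fusing all vertices of label $i$ into a single vertex $v$ collapses their neighborhoods (so multi-edges and intra-class edges vanish) and forces all of them into one combined vertex state: for \textsc{Max Cut} the many label-$i$ vertices, possibly split across both sides, must become one vertex on a single side; for \textsc{Hamiltonian Cycle} their individual path-segments and partial degrees must consolidate into a single vertex that ultimately needs degree exactly $2$, which can create or destroy connections between path endpoints. The key is that the glue-expression normal form lets me control the interaction between edge-joins and fuses so that, at the moment a label is fused, the effect on the aggregated state depends only on the stored per-label counts and the bounded connectivity pattern, not on the history of which individual vertices received which edges. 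Concretely, I would show that a glue-expression orders the operations so that the neighborhood of the future fused vertex, as seen through the label classes, is already summarized by the state, allowing the fuse transition to be realized as a computable reshuffling of the $\mathcal{O}(k)$ counts---and, for \textsc{Hamiltonian Cycle}, a merge of the endpoint-pairings---in time $n^{\mathcal{O}(k)}$.

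Putting this together, each problem is solved by one bottom-up pass over the glue-expression with $n^{\mathcal{O}(k)}$ states and $n^{\mathcal{O}(k)}$-time transitions, giving the claimed $f(k) \cdot n^{\mathcal{O}(k)}$ bounds; combined with the transferred clique-width lower bounds this yields tightness under ETH. I expect the genuinely delicate part to be the \textsc{Hamiltonian Cycle} fuse transition, where consolidating several path endpoints into one degree-two vertex must neither close a short sub-cycle prematurely nor lose track of global connectivity; verifying that the glue-expression ordering makes this merge expressible purely in terms of the bounded boundary state is where the real work sits.
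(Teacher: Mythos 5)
Your overall blueprint coincides with the paper's: transfer the ETH lower bounds via the trivial observation that every clique-$k$-expression is a fuse-$k$-expression, convert the given fuse-expression into a (reduced) glue-expression, and extend the known $n^{\mathcal{O}(k)}$ clique-width dynamic programs (Fomin et al.\ for \textsc{Max Cut} and \textsc{Edge Dominating Set}, Bergougnoux et al.\ for \textsc{Hamiltonian Cycle}) by a procedure for the one new node type. The lower-bound half is complete and correct as you state it, and your per-label DP states for \textsc{Max Cut} and \textsc{Edge Dominating Set} match the paper's.

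The upper-bound half, however, omits the ideas that actually make the glue transition sound, and one concrete claim fails as written. First, you never invoke the two structural guarantees of \emph{reduced} glue-expressions on which every combination step rests: (i) each glued vertex is the \emph{unique} vertex of its label on both sides, which is what lets the aggregate per-label counts pin down the exact state (side of the cut, covered or not, path-degree) of each shared vertex and hence decide compatibility of two partial solutions; and (ii) the two glued subgraphs are \emph{edge-disjoint}, without which the additive recurrences you propose (e.g.\ summing the numbers of cut edges, or the sizes of the chosen edge sets) double-count edges present on both sides. Neither property is automatic; the paper proves normalization lemmas to obtain them, and your picture of the transition happening ``at the moment a label is fused'' is precisely the situation the paper shows to be intractable for these records, since fusing vertices that share neighbours collapses an unknown number of parallel edges. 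Second, for \textsc{Hamiltonian Cycle} the state you describe --- per-label degree counts plus ``the connectivity pattern linking path endpoints across label classes'' --- has $n^{\Theta(k^2)}$ distinct values (one multiplicity in $[0,n]$ per pair of labels), not $n^{\mathcal{O}(k)}$; the clique-width algorithm only reaches $n^{\mathcal{O}(k)}$ via rank-based representative sets, and proving that gluing two representative families yields a representative family of the glued graph (while excluding premature cycles and degree-three vertices at the shared vertices) is the bulk of the technical work here, which your sketch correctly locates but does not supply.
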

To prove these upper bounds, we provide an alternative view on fuse-expressions, called \emph{glue-expressions}, interesting on its own.
We show that a fuse-$k$-expression can be transformed into a glue-$k$-expression in polynomial time and then present dynamic-programming algorithms on glue-expressions.
Due to the exponential gap between clique-width and both fusion- and multi-clique-width, our results provide exponentially faster algorithms on graphs witnessing these gaps.

\subparagraph*{Related Work}
Two parameters related to both treewidth and clique-width are modular treewidth (mtw)~\cite{BodlaenderJ00,HegerfeldK23-mtw} and twinclass-treewidth~\cite{Mengel16,PaulusmaSS16,Lampis20} (unfortunately, sometimes also referred to as modular treewidth).
It is known that $\mcw \leq \operatorname{mtw} + 3$ (personal communication with Falko Hegerfeld).
Further dense parameters have been widely studied in the literature. 
Rank-width (rw) was introduced by Oum and Seymour and it reflects the $\FF_2$-rank of the adjacency matrices in the so-called branch decompositions.
Originally, it was defined to obtain a fixed-parameter approximation of clique-width~\cite{OumS06} by showing that $\operatorname{rw} \leq \cw \leq 2^{\operatorname{rw} + 1} - 1$.
Later, Bui-Xuan et al.\ started the study of algorithmic properties of rank-width~\cite{Bui-XuanTV10}. 
Recently, Bergougnoux et al.\ proved the tightness of first ETH-tight lower bounds for this parameterization~\cite{BergougnouxKN23}.
Another parameter defined via branch-decompositions and reflecting the number of different neighborhoods across certain cuts is boolean-width (boolw), introduced by Bui-Xuan et al.~\cite{Bui-XuanTV11,Bui-XuanTV13}.
Fürer~\cite{Furer17} showed that $\operatorname{boolw} \leq \mcw \leq 2^{\operatorname{boolw}}$.
Recently, Eiben et~al.\ presented a framework unifying the definitions and algorithms for computation of many graph parameters~\cite{EibenGHJK22}.

\subparagraph*{Organization}
We start with some required definitions and notations in \cref{sec:preliminaries}.
In \cref{app:sec:fw-vs-mcw} we prove the relation between fusion-width and multi-clique-width from \cref{thm:mcw-fw-relation-intro}.
After that, in \cref{app:sec:useful-glue-expressions} we introduce glue-$k$-expressions and show how to obtain such an expression given a fuse-$k$-expression of a graph. 
Then in \cref{app:sec:fw-algorithms} we employ these expressions to obtain algorithms parameterized by fusion-width.
In \cref{app:sec:mcw-algorithms} we present algorithms parameterized by multi-clique-width.
We conclude with some open questions in \cref{sec:conclusion}.
    \section{ Preliminaries}\label{sec:preliminaries}
For $k \in \NN_0$, we denote by $[k]$ the set $\{1, \dots, k\}$ and we denote by $[k]_0$ the set $[k] \cup \{0\}$.

We use standard graph-theoretic notation.
Our graphs are simple and undirected if not explicitly stated otherwise. 
For a graph $H$ and a partition $(V_1, V_2)$ of $V(H)$, by $E_H(V_1, V_2) = \{ \{v_1, v_2\} \mid v_1 \in V_1, v_2 \in V_2\}$ we denote the set of edges between $V_1$ and $V_2$.
For a set $S$ of edges in a graph $H$, by $V(S)$ we denote the set of vertices incident with the edges in $S$.

A $k$-labeled graph is a pair $(H, \lab_H)$ where $\lab_H \colon V(H) \to [k]$ is a \emph{labeling function} of~$H$. 
Sometimes to simplify the notation in our proofs we will allow the labeling function to map to some set of cardinality $k$ instead of the set $[k]$.
In the following, if the number $k$ of labels does not matter, or it is clear from the context, we omit~$k$ from the notions (e.g., a labeled graph instead of a $k$-labeled graph).
Also, if the labeling function is clear from the context, then we simply call $H$ a labeled graph as well. 
Also we sometimes omit the subscript $H$ of the labeling function $\lab_H$ for simplicity.
For~$i \in [k]$, by $U^H_i = \lab_H^{-1}(i)$ we denote the set of vertices of $H$ with label $i$.
We consider the following four operations on $k$-labeled graphs.
\begin{enumerate}
    \item \emph{Introduce}: For $i \in [k]$, the operator $v\langle i \rangle$ creates a graph with a single vertex $v$ that has label $i$. We call $v$ the \emph{title} of the vertex.
    \item \emph{Union}: The operator $\oplus$ takes two vertex-disjoint $k$-labeled graphs and creates their disjoint union. The labels are preserved.
    \item \emph{Join}: For $i \neq j \in [k]$, the operator $\eta_{i, j}$ takes a $k$-labeled graph $H$ and creates the supergraph $H'$ on the same vertex set with $E(H') = E(H) \cup \{\{u, v\} \mid \lab_H(u) =  i, \lab_H(v) =  j\}$. 
    The labels are preserved.
    \item \emph{Relabel}: For $i \neq j$, the operator $\rho_{i \to j}$ takes a $k$-labeled graph $H$ and creates the same $k$-labeled graph $H'$ apart from the fact that every vertex that with label $i$ in $H$ instead has label $j$ in $H'$.
\end{enumerate}
A well-formed sequence of such operations is called a \emph{$k$-expression} or a \emph{clique-expression}. 
With a $k$-expression $\phi$ one can associate a rooted tree such that every node corresponds to an operator, this tree is called a \emph{parse tree} of $\phi$.
With a slight abuse of notation, we denote it by $\phi$ as well.
By $G^\phi$ we denote the labeled graph arising in $\phi$.
And for a node~$t$ of $\phi$ by $G^\phi_t$ we denote the labeled graph arising in the subtree (sometimes also called a \emph{sub-expression}) rooted at $t$, this subtree is denoted by $\phi_t$.
The graph $G^\phi_t$ is then a subgraph of $G^\phi$.
A graph~$H$ has \emph{clique-width} of at most $k$ if there is a labeling function $\lab_H$ of $H$ and a $k$-expression $\phi$ such that $G^\phi$ is equal to $(H, \lab_H)$.
By $\cw(H)$ we denote the smallest integer $k$ such that $H$ has clique-width at most $k$.
Fürer has studied two generalizations of $k$-expressions~\cite{Furer14,Furer17}.

\emph{Fuse}: For $i \in [k]$, the operator $\theta_i$ takes a $k$-labeled graph $H$ with $\lab^{-1}_H(i) \neq \emptyset$ and \emph{fuses} the vertices with label $i$, i.e., the arising graph $H'$ has vertex set $(V(H) - \lab^{-1}_H(i)) \dot\cup \{v\}$, the edge relation in $V(H) - \lab^{-1}_H(i)$ is preserved, and $N_{H'}(v) = N_{H}(\lab^{-1}_H(i))$.
The labels of vertices in $V(H') - v$ are preserved, and vertex $v$ has label $i$.
A \emph{fuse-$k$-expression} is a well-formed expression that additionally to the above four operations is allowed to use fuses.
We adopt the above notations from $k$-expressions to fuse-$k$-expressions. 
Let us only remark that for a node $t$ of a fuse-$k$-expression $\phi$, the graph $G^\phi_t$ is not necessarily a subgraph of $G^\phi$ since some vertices of $G^\phi_t$ might be fused later in $\phi$.
\begin{remark}\label{remark:fw-larger-introduce}
    Originally, Fürer allows that a single introduce-node creates multiple, say~$q$, vertices with the same label. 
    However, we can eliminate such operations from a fuse-expression~$\phi$ as follows. 
    If the vertices introduced at some node participate in some fuse later in the expression, then it suffices to introduce only one of them. 
    Otherwise, we can replace this introduce-node by $q$ nodes introducing single vertices combined using union-nodes.
    These vertices are then also the vertices of $G^\phi$.
    So in total, replacing all such introduce-nodes would increase the number of nodes of the parse tree by at most $\mathcal{O}(|V(G^\phi)|)$, which is not a problem for our algorithmic applications.  
\end{remark}

Another generalization of clique-width introduced by Fürer is multi-clique-width (mcw)~\cite{Furer17}. 
A multi-$k$-labeled graph is a pair $(H, \lab_H)$ where $\lab_H \colon V(H) \to 2^{[k]}$ is a multi-labeling function.
We consider the following four operations of multi-$k$-labeled graphs.
\begin{enumerate}
    \item \emph{Introduce}: For $q \in [k]$ and $i_1, \dots i_q \in [k]$, the operator $v \langle i_1, \dots, i_q \rangle$ creates a multi-$k$-labeled graph with a single vertex that has label set $\{i_1, \dots, i_q\}$.
    \item \emph{Union}: The operator $\oplus$ takes two vertex-disjoint multi-$k$-labeled graphs and creates their disjoint union. The labels are preserved.
    \item \emph{Join}: For $i \neq j \in [k]$, the operator $\eta_{i, j}$ takes a multi-$k$-labeled graph $H$ and creates its supergraph $H'$ on the same vertex set with $E(H') = E(H) \cup \{\{u, v\} \mid i \in \lab_H(u), j \in \lab_H(v)\}$.
    This operation is only allowed when there is no vertex in $H$ with labels $i$ and~$j$ simultaneously, i.e., for every vertex $v$ of $H$ we have $\{i, j\} \not\subseteq \lab_H(v)$.
    The labels are preserved.
    \item \emph{Relabel}: For $i \in [k]$ and $S \subseteq [k]$, the operator $\rho_{i \to S}$ takes a multi-$k$-labeled graph $H$ and creates the same multi-labeled graph apart from the fact that every vertex with label set~$L \subseteq [k]$ such that $i \in L$ in $H$ instead has label set $(L \setminus \{i\}) \cup S$ in $H'$.
    Note that~$S = \emptyset$ is allowed.
\end{enumerate}
A well-formed sequence of these four operations is called a \emph{multi-$k$-expression}. 
As for fuse-expressions, Fürer allows introduce-nodes to create multiple vertices but we can eliminate this by increasing the number of nodes in the expression by at most $\mathcal{O}(|V(G^\phi)|)$.
We adopt the analogous notations from $k$-expressions to multi-$k$-expressions. 

\subparagraph*{Complexity}
To the best of our knowledge, the only known way to approximate multi-clique-width and fusion-width is via clique-width, i.e., to employ the relation \eqref{eq:relation-cw}.
The only known way to approximate clique-width is, in turn, via rank-width.
This way we obtain a~$2^{2^k}$-approximation of multi-clique-width and fusion-width running in FPT time. 
For this reason, to obtain tight running times in our algorithms we always assume that a fuse- or multi-$k$-expression is provided.
Let us emphasize that this is also the case for all tight results for clique-width in the literature~(see e.g., \cite{BergougnouxKK20,Lampis20}).
In this work, we will show that if a graph admits a multi-$k$-expression resp.\ a fuse-$k$-expression, then it also admits one whose size is polynomial in the size of the graph.
Moreover, such a ``compression'' can be carried out in time polynomial in the size of the original expression.
Therefore, we delegate this compression to a black-box algorithm computing or approximating multi-clique-width or fusion-width and assume that provided expressions have size polynomial in the graph size.

\subparagraph{(Strong) Exponential Time Hypothesis}
The algorithms in this work are tight under one of the following conjectures formulated by Impagliazzo et al.~\cite{ImpagliazzoPZ01}. 
The Exponential Time Hypothesis (ETH) states that there is $0 < \varepsilon < 1$ such that \textsc{3-Sat} with $n$ variables and $m$ clauses cannot be solved in time $\ostar(2^{\varepsilon n})$.
The Strong Exponential Time Hypothesis (SETH) states that for every $0 < \varepsilon < 1$ there is an integer $q$ such that \textsc{$q$-Sat} cannot be solved in time~$\ostar(2^{\varepsilon n})$.
In this work, $\ostar$ hides factors polynomial in the input size.

\subparagraph{Simplifications} 
If the graph is clear from the context, by $n$ we denote the number of its vertices.
If not stated otherwise, the number of labels is denoted by $k$ and a label is a number from $[k]$.


    \section{ Relation Between Fusion-Width and Multi-Clique-Width}\label{app:sec:fw-vs-mcw}

In this section, we show that for every graph, its multi-clique-width is at most as large as its fusion-width plus one.
Since we are interested in parameterized complexity of problems, the constant additive term to the value of a parameter does not matter.
To prove the statement, we show how to transform a fuse-$k$-expression of a graph $H$ into a multi-$(k+1)$-expression of $H$.
Fürer has proven the following relation:
\begin{lemma}[\cite{Furer17}] \label{app:lem:furer-fw-cw}
    For every graph $H$, it holds that $\cw(H) \leq \fw(H) \cdot 2^{\fw(H)}$.
\end{lemma}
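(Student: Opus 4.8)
The plan is to prove the bound by structural induction on a given fuse-$k$-expression $\phi$ of $H$, transforming it into an ordinary clique-expression over the enlarged label set $[k] \times 2^{[k]}$, which has exactly $k \cdot 2^{k}$ labels. A vertex carrying label $(\ell, T)$ will have $\ell$ as its ``genuine'' label (the one it holds in $\phi$) and $T \subseteq [k]$ as a bookkeeping set. Since the fuse $\theta_i$ is the only operation that is not already a clique-width operation, the four ordinary operations should essentially be inherited, acting on the first coordinate: each join $\eta_{i,j}$ is replaced by the batch of joins over all enriched labels whose first coordinate is $i$ resp.\ $j$, and each relabel $\rho_{i \to j}$ by the corresponding renaming of first coordinates. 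The entire difficulty is concentrated in the fuse.

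The key idea for the fuse is to never materialize the vertices that are eventually fused away, and instead to introduce, at the corresponding $\theta$-node, a single representative vertex for the whole group. Since we build the clique-expression with full knowledge of $\phi$, we can decide in advance, for every introduce-node, whether its vertex survives to $H$ or is fused away by some ancestor $\theta$-node. Concretely, I would maintain the invariant that the clique-expression graph at node $t$ is exactly the subgraph of $G^\phi_t$ induced by its surviving vertices, where each survivor $w$ carries the label $(\lab(w), T(w))$ with
\[
    T(w) = \{\, i \in [k] \mid w \text{ is adjacent in } G^\phi_t \text{ to some fused-away vertex with label } i \,\}.
\]
Thus $T(w)$ records, bucketed by label, to which soon-to-disappear groups $w$ is attached. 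Under this invariant a fuse $\theta_i$ becomes manageable: all label-$i$ vertices are fused away (hence already absent from the clique graph), and a survivor $w$ is adjacent to the new fusion vertex $v$ precisely when $i \in T(w)$. Hence I introduce $v$ under an auxiliary label, join it to exactly those survivors whose bucket set contains $i$, and then relabel it to $(i, T(v))$ with its own correctly computed bucket set; if $v$ is itself fused again later it is simply never introduced, and one checks that the bucket sets of its neighbors already account for it.

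It remains to verify that the inherited operations preserve the invariant, which is the routine but delicate part: a join $\eta_{i,j}$ not only creates the survivor--survivor edges but may also attach label-$i$ survivors to fused-away label-$j$ vertices (and vice versa), so the bucket sets must be updated by a bounded number of relabels conditioned on whether such fused-away vertices are currently present---information we have since $\phi$ is fully known; a relabel $\rho_{i \to j}$ must rename both coordinates consistently, merging the $i$- and $j$-buckets; and a union simply juxtaposes the two survivor subgraphs. The main obstacle is exactly the mismatch between fusing and clique-width: a fuse decreases the vertex count and replaces a whole group by a single vertex carrying the \emph{union} of their neighborhoods, whereas no clique-width operation deletes vertices and a single clique-width label cannot simultaneously react to the several join-histories of the merged vertices. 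Carrying the $2^{k}$ bucket information on the surviving vertices, together with the lazy introduction of fusion vertices, is what resolves this and produces the factor $2^{\fw}$. I expect the most error-prone step to be maintaining the invariant through the interaction of joins and relabels with vertices destined to be fused away, especially in the recursive situation where a fusion vertex is fused once more.
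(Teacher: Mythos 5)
Your proposal is correct and follows essentially the same route as Fürer's construction, which the paper recounts and builds on in the proof of \cref{app:thm:mcw-fw-relation}: labels from $[k]\times 2^{[k]}$, never materializing the vertices that will be fused away, introducing a single representative at each fuse-node, and using the second label component to record the postponed adjacencies to future fusion vertices (the paper's $\widehat{[k]}$-labels play exactly the role of your bucket sets $T(w)$). The delicate points you flag (interaction of joins/relabels with doomed vertices, and fusion vertices that are themselves fused again) are the same ones the paper handles by first normalizing the fuse-expression, e.g.\ suppressing skippable fuse-nodes so that each bucket corresponds to a unique future fusion event.
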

We will use his idea behind the proof of this lemma to prove our result.

\begin{theorem}\label{app:thm:mcw-fw-relation}
    For every graph $H$, it holds that $\mcw(H) \leq \fw(H) + 1$.
    Moreover, given a fuse-$k$-expression $\phi$ of $H$, a multi-$(k + 1)$-expression of $H$ can be created in time polynomial in $|\phi|$ and $k$.
\end{theorem}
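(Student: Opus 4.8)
The plan is to simulate a given fuse-$k$-expression $\phi$ by a multi-$(k+1)$-expression that never physically merges vertices but instead \emph{anticipates} every fusion, following the idea behind \cref{app:lem:furer-fw-cw}. The place where multi-labels buy us the improvement is exactly the one place where Fürer had to pay an exponential price: to handle a fuse in clique-width he records, inside a single label, the whole \emph{set} of original labels with which a vertex will eventually be identified, which forces $2^k$ labels. Since a multi-labeled vertex may natively carry an arbitrary subset of $[k]$, we can store the same information directly in the label set of one vertex, and spend only a single spare label on top; this is what turns the bound $\fw\cdot 2^{\fw}$ into $\fw+1$.

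Concretely, call two vertices introduced in $\phi$ \emph{equivalent} if the fuses of $\phi$ map them to the same vertex of $G^\phi$; the equivalence classes are precisely the vertices of $G^\phi$. First I would traverse $\phi$ bottom-up and build the multi-expression while maintaining the invariant that, at the node corresponding to $t$, there is exactly one multi-vertex for every class meeting $G^\phi_t$, and that its label set equals the set of labels that the representatives of that class currently carry in $G^\phi_t$. Under this invariant the four clique-width operations translate almost verbatim: an \emph{introduce} $v\langle i\rangle$ either opens a new class-vertex with label set $\{i\}$ or contributes the label $i$ to an existing class-vertex; a \emph{union} is the disjoint union of the two partial multi-graphs; a \emph{relabel} $\rho_{i\to j}$ becomes the multi-relabel $\rho_{i\to\{j\}}$, which updates all label sets at once and correctly, since relabeling is global in both models; and a \emph{fuse} $\theta_i$ becomes essentially a no-op on label sets, because all vertices it merges are equivalent and are therefore already a single multi-vertex, whose set is unchanged when several label-$i$ representatives collapse into one label-$i$ representative.

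The step that forces the extra label, and the main obstacle, is the \emph{join}. Applying $\eta_{i,j}$ in $\phi$ connects all label-$i$ to all label-$j$ vertices, and after all fuses this amounts exactly to the complete join between the class-vertices that \emph{carry} $i$ and those that \emph{carry} $j$, with self-loops discarded. If no class-vertex carries both $i$ and $j$, I would simply apply the multi-join $\eta_{i,j}$. The difficulty is a class-vertex $w$ carrying $i$ and $j$ simultaneously: such a $w$ stands for a group containing both a current label-$i$ and a current label-$j$ representative, so the join would create an internal edge that fusion later turns into a nonexistent self-loop, which is precisely the configuration the multi-clique-width join forbids. Here I would use the single spare label $k+1$ as a private scratch register to route the join around the conflict: temporarily split the two roles of the conflicting vertices onto $k+1$ by a short sequence of global multi-relabels, perform the now-admissible joins against the $i$- and $j$-carriers, and then restore the labels, so that every conflicting vertex is joined to all external $i$- and $j$-carriers while no self-loop is introduced.

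Finally, I would verify correctness by induction on the parse tree, showing that at the root the constructed multi-graph equals $G^\phi$ under the class bijection, and bound the running time by observing that each node of $\phi$ spawns only $\mathcal{O}(k)$ nodes of the multi-expression (the join gadget being the only one that is not $\mathcal{O}(1)$), so the transformation runs in time polynomial in $|\phi|$ and $k$ and outputs an expression of polynomial size. The part I expect to be most delicate is making the join gadget work with only one reserved label: the conflict ``carries both $i$ and $j$'' is a conjunctive condition, so separating these vertices from the merely-$i$ and merely-$j$ carriers using global multi-relabels alone, together with keeping a single representative per class consistent across unions, is where the real work lies.
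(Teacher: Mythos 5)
Your proposal identifies the right target (store Fürer's ``second component'' in the label set of a multi-labeled vertex) but it breaks at the union step, and this is not a detail: it is the central difficulty of the theorem. Your invariant is that at each node there is exactly one multi-vertex per fuse-equivalence class. But representatives of the same class are typically introduced in \emph{different} subtrees of a union-node and only merged by a fuse above it; your translation of union as ``disjoint union of the two partial multi-graphs'' then produces two multi-vertices for that class, and multi-clique-width has no operation that can ever merge them again (that is precisely the operation fusion-width adds). Likewise, ``contributes the label $i$ to an existing class-vertex'' is not an available operation when that class-vertex lives in a sibling subtree, nor can a relabel target one specific vertex. The paper's proof resolves exactly this: it introduces each fused class only \emph{once}, at the position of its (unique, after removing skippable fuses) fuse-node, deletes all other introduce-leaves of that class, and \emph{postpones} every edge those deleted representatives would have acquired by recording, in the label sets of the other endpoints, a marked copy $\widehat{\imath}$ of the future vertex's label; the deferred edges are then realized by a short sequence $\rho_{\widehat{\imath}\to\emptyset}\circ\eta_{i,\widehat{\imath}}$ at the introduction point. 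The extra label ($+1$) is spent on making the marked and unmarked copies coexist only inside these short postponed sequences. Your proposal contains no mechanism for deferring edges to a vertex that does not yet exist, so the construction cannot be completed as sketched.

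A secondary issue: the obstacle you single out as the delicate one --- a class-vertex carrying both $i$ and $j$ at a join --- is in fact vacuous. In a fuse-expression every vertex carries a single label, so all label-$i$ vertices at a node that participate in any future fuse participate in the \emph{same} fuse and form one class; hence if some class carries both $i$ and $j$, it is the unique $i$-carrier and the unique $j$-carrier, all edges created by $\eta_{i,j}$ collapse into that single fused vertex and disappear, and the join can simply be suppressed. So the scratch-label gadget you reserve the extra label for is not needed there; the extra label is needed for the postponement bookkeeping described above, which your write-up omits.
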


\begin{proof}
    Let $H$ be a graph.
    We start by showing that $\mcw(H) \leq 2 \cdot \fw(H)$ holds. 
    To prove this, we will consider a fuse-$k$-expression of $H$ and from it, we will construct a multi-$2k$-expression of $H$ using labels $\{1, \dots, k, \widehat{1}, \dots, \widehat{k}\}$.
    For simplicity of notation, let $\widehat{[k]} = \{\widehat{1}, \dots, \widehat{k}\}$.
    For this first step, we strongly follow the construction of Fürer in his proof of \cref{app:lem:furer-fw-cw}.
    There he uses $k \cdot 2^k$ labels from the set $[k] \times 2^{[k]}$ so the second component of such a label is a subset of $[k]$.
    We will use that multi-clique-width perspective already allows vertices to have sets of labels and model the second component of a label via subsets of $\widehat{[k]}$.
    Then we will make an observation that allows us to (almost) unify labels $i$ and $\widehat{i}$ for every $i \in [k]$. 
    Using one additional label $\star$, we will then obtain a multi-$(k+1)$-expression of $H$ using labels $[k] \cup \{\star\}$.

    First of all, we perform several simple transformations on $\phi$ without changing the arising graph.
    We suppress all join-nodes that do not create new edges, i.e., we suppress a join-node $t$ if for its child $t'$ it holds $G_t = G_{t'}$.
    Then we suppress all nodes fusing less than two vertices, i.e., a $\theta_i$-node $t$ for some $i \in [k]$ is suppressed if for its child $t'$, the labeled graph $G^\phi_{t'}$ contains less than two vertices with label $i$.
    Now we provide a short intuition for the upcoming transformation.
    Let $x$ be a $\theta_i$-node creating a new vertex, say $u$, by fusing some vertices, say $U$. 
    And let $y$ be an ancestor of $x$ such that $y$ is a fuse-node that fuses vertex $u$ with some further vertices, say $W$.
    Then we can safely suppress the node $x$: the fuse of vertices from $U$ is then simply postponed to $y$, where these vertices are fused together with $W$.
    Now we fix some notation used in the rest of the proof.
    Let $x$ be a node, let $y$ be an ancestor of $x$, and let $t_1, \dots, t_q$ be all inner relabel-nodes on the path from $x$ to $y$ in the order they occur on this path.
    Further, let $s_1, \dots, s_q \in [k]$ and $r_1, \dots, r_q \in [k]$ be such that the node $t_j$ is a $\rho_{s_j \to r_j}$-node for every $j \in [q]$. 
    Then for all $i \in [k]$, we define
    \[
        \rho^*_{x,y}(i) = \sigma_q ( \sigma_{q-1} ( \dots \sigma_1(i) ) ) 
    \]
    where
    \[
        \sigma_j(i') =
        \begin{cases}
            i' & \text{if } i' \neq s_j  \\
            r_j & \text{if } i' = s_j \\ 
        \end{cases}
    \]
    for $j \in [q]$.
    Intuitively, if we have some vertex $v$ of label $i$ in $G^\phi_x$, then $\rho^*_{x, y}(i)$ denotes the label of $v$ in $G^\phi_{y'}$ where $y'$ denotes the child of $y$, i.e., $\rho^*_{x, y}(i)$ is the label of $v$ right before the application of $y$.
    Now for every $i \in [k]$ and every $\theta_i$-node $x$, if there exists an ancestor $y$ of $x$ in $\phi$ such that $y$ is a $\theta_{\rho^*_{x,y}(i)}$-node, we suppress the node $x$.
    In this case, we call $x$ \emph{skippable}.
    Finally, we transform the expression in such a way that a parent of every leaf is a union-node as follows.
    Let $x$ be a leaf with introducing a vertex $v$ of label $i$ for some $i \in [k]$.
    As a result of the previous transformations, we know that the parent $y$ of $x$ is either a relabel- or a union-node. 
    In the latter case, we skip this node.
    Otherwise, let $i_1 \neq i_2 \in [k]$ be such that $y$ is a $\rho_{i_1 \to i_2}$-node.
    If $i_1 \neq i$, then we suppress $y$.
    Otherwise, we suppress $y$ and replace $x$ with a node introducing the same vertex but with label $i_2$.
    This process is repeated for every leaf.
    We denote the arising fuse-$k$-expression of $H$ by $\psi$.

    Now let $x$ be a node of $\psi$ and let $v$ be a vertex of $G^\psi_x$.
    We say that $v$ is a \emph{fuse-vertex} at $x$ if $v$ \emph{participates} in some fuse-operation above $x$, that is, there is an ancestor $y$ of $x$ (in $\psi$) such that $y$ is a $\theta_{\rho^*_{x, y}(i)}$-node.
    Note that first, since we have removed skippable fuse-nodes, if such a node $y$ exists for $x$, then it is unique.
    And second, in this case all vertices of label $i$ in $G^\psi_x$ will participate in the fuse-operation.
    So we also say $i$ is a \emph{fuse-label} at $x$.
    Hence, instead of first, creating these vertices via introduce-nodes and then fusing them, we will introduce only one vertex representing the result of the fusion.
    And the creation of the edges incident with these vertices needs to be postponed until the moment where the new vertex is introduced. 
    For this, we will store the label of the new vertex in the label set of the other end-vertex.
    But for postpone-purposes we will use labels from $\widehat{[k]}$ to distinguish from the original labels. 
    
    We now formalize this idea to obtain a multi-$2k$-expression $\xi$ of $H$.
    In the following, the constructed expression will temporarily contain at the same time vertices with multiple labels and fuse-nodes, we call such an expression \emph{mixed}.
    First, we mark all fuse-nodes in $\psi$ as unprocessed and start with $\xi := \psi$. 
    We proceed for every leaf $\ell$ of $\psi$ as follows.
    Let $v$ and $i \in [k]$ be such that $\ell$ is a $v\langle i \rangle$-node.
    If $v$ is not a fuse-vertex at $\ell$ in $\psi$, we simply change the operation at $\ell$ in $\xi$ to be $1 \langle \{i\} \rangle$.
    Otherwise, let $x$ be the fuse-node in $\psi$ in which $v$ participates.
    Note that since we have suppressed skippable fuse-nodes, such a node $x$ is unique.
    Let $i \in [k]$ be such that $x$ is a $\theta_i$-node. 
    First, we remove the leaf $\ell$ from $\xi$ and suppress its parent in $\xi$.
    Note that since the parent of $\ell$ is $\psi$ is a union-node, the mixed expression remains well-formed.
    Second, if $x$ is marked as unprocessed, we replace the operation at $x$ in $\xi$ to be a union, add a new $1 \langle \{i\} \rangle$-node as a child of $x$, and mark $x$ as processed.
    We refer to the introduce-nodes created in this process as well as to the vertices introduced by these nodes as \emph{new}.
    Observe that first, the arising mixed expression does not contain any fuse-nodes.
    Second, the set of leaves of $\xi$ is now in bijection with the set of vertices of $H$.
    Also, the set of edges induced by vertices, that do not participate in any fuse-operation in $\psi$, has not been affected.
    So it remains to correctly create the edges for which at least one end-point is new.
    This will be handled by adapting the label sets of vertices.

    First, for every $i \neq j \in [k]$, every $\rho_{i \to j}$-node is replaced with a path consisting of a $\rho_{i \to \{j\}}$-node and a $\rho_{\widehat{i} \to \{\widehat{j}\}}$-node.
    Now let $i \neq j \in [k]$ and let $x$ be a $\eta_{i, j}$-node in $\xi$.
    In order to correctly handle the join-operation, we make a case distinction.
    If both $i$ and $j$ are not fuse-labels at $x$ in $\psi$, we skip $x$.
    Next, assume that exactly one of the labels $i$ and $j$, say $i$, is a fuse-label at $x$ in $\psi$.
    Then we replace the operation in $x$ in $\xi$ with $\rho_{j \to \{j, \widehat{i}\}}$ to store the information about the postponed edges in the vertices of label $j$.
    From now on, we may assume that both $i$ and $j$ are fuse-labels at $x$ in $\psi$.
    Observe that $x$ creates only one edge of $H$ since all vertices of label $i$ (resp.\ $j$) are fused into one vertex later in $\psi$.
    Let $x_i$ (resp.\ $x_j$) be the ancestor of $x$ in $\psi$ such that $x_i$ (resp.\ $x_j$) is a $\theta_{p_i}$-node (resp.\ $\theta_{p_j}$-node) where $p_i = \rho^*_{x, x_i}(i)$ (resp.\ $p_j = \rho^*_{x, x_j}(j)$).
    Since we have suppressed skippable fuse-nodes, the nodes $x_i$ and $x_j$ are unique.
    By our construction, $x_i$ (resp.\ $x_j$) is in $\xi$ a union-node that has a child $y_i$ (resp.\ $y_j$) being an introduce-node.     
    Without loss of generality, we may assume that $x_i$ is above $x_j$ in $\xi$.
    Then, we store the information about the postponed edge in $y_j$ as follows.
    Let $S \subseteq [k] \cup \widehat{[k]}$ be the label set such that $y_j$ is currently a $1 \langle S \rangle$-node.
    Note: initially $S$ consists of a single label $p_j$ but after processing several join-nodes, this is not the case in general.
    We now replace the operation in $y_j$ with $1 \langle S \cup \{\widehat{\rho^*_{x,x_j}(i)}\} \rangle$.
    After all join-nodes are processed, we create the postponed edges at every new introduce-node $x$ of $\xi$ as follows.
    Let $y$ be the parent of $x$ in $\xi$ and let $S \subseteq [k]$ be such that $x$ is an $1 \langle S \rangle$-node.
    By construction, there exists a unique label $i \in [k] \cap S$.
    Then right above $y$, we add the sequence $\rho_{\widehat{i} \to \emptyset} \circ \eta_{i, \widehat{i}}$ and we refer to this sequence together with $y$ as the \emph{postponed sequence} of $x$. 
           
    This concludes the construction of a multi-$2k$-expression, say $\alpha$, of $H$.
    It can be verified that we have not changed the construction of Fürer~\cite{Furer17} but only stated it in terms of multi-clique-width.
    Therefore, the construction is correct.
    
    Now as promised, we argue that the number of required labels can be decreased to $k + 1$.
    Before formally stating this, we provide an intuition.
    First, observe that moving from $\xi$ to $\alpha$, we did not change the unique label from $[k]$ kept by each vertex at any step, only the labels from $\widehat{[k]}$ have been affected. 
    We claim that for $i \in [k]$, both labels $i$ and $\widehat i$ may appear in a subgraph $G^\alpha_y$ only in very restricted cases, namely when $y$ belongs to a postponed sequence of a new introduce-node.
    We now sketch why this is the case.
    Let $x$ be a node such that $G^\alpha_x$ contains a vertex with label $\widehat i$.
    This can only occur if $i$ is a fuse-label at $x$ in $\psi$, i.e., there exists a unique fuse-node $z$ such that $z$ is an ancestor of $x$ and the vertices from $G^\psi_x$ of label $i$ participate in the fuse at $z$.
    By the construction of $\xi$, all introduce-nodes creating these vertices have been removed so $G^\alpha_z$ contains a unique vertex holding the label $j := \rho^*_{x, z}(i)$, namely the one introduced at its child, say $t$. 
    Then in the end of the postponed sequence of $t$, the label $\widehat{j}$ is removed from all vertices.
    So the only moment where both labels $j$ and $\widehat{j}$ occur is during the postponed sequence of $t$.
    Also note that postponed sequences do not overlap so if such $j$ exists, then it is unique.
    This is formalized as follows.
    \begin{observation}    
        Let $y$ be a node in $\alpha$ and let $i \in [k]$ be such that the labeled graph $G^{\alpha}_y$ contains a vertex containing label $i$ and a vertex containing label $\widehat{i}$. 
        Then $y$ belongs to the postponed sequence of some new $1 \langle S \rangle$-node $x$ with $S \subseteq [k] \cup \widehat{[k]}$ and $i \in S$.
        Moreover, the only vertex in $G^{\alpha}_y$ containing label $i$ is the vertex introduced at $x$.
        In particular, since the postponed sequences for distinct nodes are disjoint by construction, for every $j \neq i \in [k]$, the graph $G^{\alpha}_y$ does not contain a vertex containing label $j$ or it does not contain a vertex containing label $\widehat{j}$.
    \end{observation}

    So up to postponed sequences, we can unify the labels $i$ and $\widehat{i}$ for every $i \in [k]$.
    And inside postponed sequences, we will use an additional label $\star$ to distinguish between $i$ and $\widehat{i}$.
    So we process new introduce-nodes as follows.
    Let $x$ be a new $1 \langle S \rangle$-node for some $S \subseteq [k] \cup \widehat{[k]}$ and let $i \in [k]$ be the unique value in $S \setminus \widehat{[k]}$.
    We replace the operation in $x$ with a $1 \langle S \setminus \{i\} \cup \{\star\} \rangle$ and we replace the postponed sequence of $x$ with the sequence $\rho_{\star \to i} \circ \rho_{\widehat i \to \emptyset} \circ \eta_{\widehat{i}, \star} \circ \oplus$.
    After processing all new introduce-nodes, we replace every occurrence of label $\widehat i$ with label $i$ for all $i \in [k]$.
    The new multi-expression uses $k+1$ labels and by the above observation, it still creates $H$.
    Also it can be easily seen that the whole transformation can be carried out in polynomial time.
\end{proof}
    \section{ Reduced Glue-Expressions}\label{app:sec:useful-glue-expressions}

In this section, we show that a fuse-$k$-expression can be transformed into a so-called \emph{reduced glue-$k$-expression} of the same graph in polynomial time. 
Such expressions will provide an alternative view on fusion-width helpful for algorithms.
We formally define them later.
In the following, we assume that the titles used in introduce-nodes of a fuse-expression are pairwise distinct.
Along this section, the number of labels is denoted by $k$ and \emph{polynomial} is a shorthand for ``polynomial in the size of the expression and $k$''.

To avoid edge cases, we will assume that any expression in this section does not contain any \emph{useless} nodes in the following sense.
If a join-node does not create new edges, it is suppressed.
Similarly, if a fuse-node fuses at most one node, it is suppressed.
Also during our construction, the nodes of form $\rho_{i \to i}$ might arise, they are also suppressed.
Further, if $\rho_{i \to j}$ is applied to a labeled graph with no vertices of label $i$, it is suppressed.
Clearly, useless nodes can be found and suppressed in polynomial time.
For this reason, from now on we always implicitly assume that useless nodes are not present.

We say that fuse-expressions $\phi_1$ and $\phi_2$ are \emph{equivalent} if there exists a label-preserving isomorphism between $G^{\phi_1}$ and $G^{\phi_2}$.
In this section, we provide rules allowing to replace sub-expressions with equivalent ones.
For simplicity, the arising expression will often be denoted by the same symbol as the original one.

The following equivalencies can be verified straight-forwardly. 
Although some of them might seem to be unnatural to use at first sight, they will be applied in the proofs of \cref{lem:shift-to-union,lem:fuse-exactly-two-from-different-sides}.
\begin{lemma}\label{lem:transformation-rules}
    Let $k \in \NN$, let $H$ be a $k$-labeled graph, let $q \in \NN$, let $i, j, a, b, a_1, \dots, a_q \in [k]$ be integers, let $H_1, H_2$ be $k$-labeled graphs, and let $v$ be a title.
    Then the following holds if none of the operators on the left-hand side is useless:
	\begin{enumerate}
		\item $\theta_i \circ \eta_{a, b} (H) = \eta_{a, b} \circ \theta_i(H)$; \label{rule:join}
		\item If $i \notin \{a, b\}$, then $\theta_i \circ \rho_{a \to b}(H) = \rho_{a \to b} \circ \theta_i(H)$; \label{rule:relabel}
		\item If $a, b \in \{a_1, \dots, a_q, i\}$, then: 
		\[
			\theta_i \circ \rho_{a_1 \to i}\circ \dots \circ \rho_{a_q \to i} \circ \eta_{a, b}(H) = \theta_i \circ \rho_{a_1 \to i}\circ \dots \circ \rho_{a_q \to i}(H);
		\]
        \label{rule:relabel-join-both-sides-fused}
		\item If $a \in \{a_1, \dots, a_q, i\}$ and $b \notin \{a_1, \dots, a_q, i\}$, then: 
		\[
			\theta_i \circ \rho_{a_1 \to i} \circ \dots \circ \rho_{a_q \to i}\circ \eta_{a, b}(H) = \eta_{i, b} \circ \theta_i \circ \rho_{a_1 \to i} \circ \dots \circ \rho_{a_q \to i}(H);
		\]
        \label{rule:relabel-join-one-side-fused}
		\item If $a, b \notin \{a_1, \dots, a_q, i\}$, then: 
		\[
			\theta_i \circ \rho_{a_1 \to i} \circ \dots \circ \rho_{a_q \to i} \circ \eta_{a, b}(H) = \eta_{a, b} \circ \theta_i \circ \rho_{a_1 \to i} \circ \dots \circ \rho_{a_q \to i}(H);
		\]
        \label{rule:relabel-join-no-side-fused}
		\item If $a, b \notin \{a_1, \dots, a_q, i\}$, then 
		\[
			\theta_i \circ \rho_{a_1 \to i} \circ \dots \circ \rho_{a_q \to i} \circ \rho_{a, b}(H) = \rho_{a, b} \circ \theta_i \circ \rho_{a_1 \to i} \circ \dots \circ \rho_{a_q \to i}(H);
		\]
        \label{rule:relabel-relabel}
		\item If $b \in \{a_1, \dots, a_q\}$, then
		\[
			\theta_i \circ \rho_{a_1 \to i} \circ \dots \circ \rho_{a_q \to i} \circ \rho_{a \to b}(H) = \theta_i \circ \rho_{a_1 \to i} \circ \dots \circ \rho_{a_q \to i} \circ \rho_{a \to i}(H);
		\]
        \label{rule:relabel-to-i-directly}
		\item If $b \notin \{a_1, \dots, a_q, i\}$, then 
		\[
			\theta_i \circ \rho_{a_1 \to i} \circ \dots \circ \rho_{a_q \to i} \circ \rho_{i \to b}(H) = \rho_{a_1 \to i} \circ \theta_{a_1} \circ \rho_{a_2 \to a_1} \circ \dots \circ \rho_{a_q \to a_1} \circ \rho_{i \to b}(H);
		\]
        \label{rule:relabel-from-i}
        \item 
		\[
			\theta_i \circ \rho_{a_1 \to i} \circ \dots \circ \rho_{a_q \to i} (H_1 \oplus H_2) = \theta_i \Bigl(\bigl(\rho_{a_1 \to i} \circ \dots \circ \rho_{a_q \to i}(H_1)\bigr) \oplus \bigl(\rho_{a_1 \to i} \circ \dots \circ \rho_{a_q \to i}(H_2)\bigr) \Bigr);
		\]
        \label{rule:relabel-union}
		\item If $b \in \{a_1, \dots, a_q, i\}$, then:
		\[
			\theta_i \circ \rho_{a_1 \to i} \circ \dots \circ \rho_{a_q \to i} \circ \theta_b(H) = \theta_i \circ \rho_{a_1 \to i} \circ \dots \circ \rho_{a_q \to i} (H);
		\]
        \label{rule:relabel-fuse-the-same}
		\item If $b \notin \{a_1, \dots, a_q, i\}$, then:
		\[
			\theta_i \circ \rho_{a_1 \to i} \circ \dots \circ \rho_{a_q \to i} \circ \theta_b(H) = \theta_b \circ \theta_i \circ \rho_{a_1 \to i} \circ \dots \circ \rho_{a_q \to i} (H);
		\]
		\label{rule:relabel-fuse-another}
        \item $\rho_{a \to j} \circ v \langle a \rangle = v \langle j \rangle$; \label{rule:relabel-introduce-1}
        \item $\eta_{i, j} \circ \rho_{a \to i} (H) = \rho_{a \to i} \circ \eta_{i, j} \circ \eta_{a, j} (H)$; \label{rule:join-relabel-2}
        \item If $a, b \notin \{i, j\}$, then:
        $\eta_{i, j} \circ \rho_{a \to b} (H) = \rho_{a \to b} \circ \eta_{i,j} (H)$. \label{rule:join-relabel-3}
	\end{enumerate}
\end{lemma}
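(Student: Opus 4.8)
The plan is to verify the fourteen identities one by one, in each case checking that the two sides build the same labeled graph. Because a fuse-node introduces a fresh vertex, ``same'' must be read as a label-preserving isomorphism, exactly the notion of equivalence fixed above; for every identity the required isomorphism is the obvious one, sending each never-fused vertex to itself and sending a set of vertices that is fused on the left-hand side to the single vertex into which the right-hand side fuses it. Under this bijection it then suffices to compare three things: which vertices survive, the label of each surviving vertex, and the edge relation among them. I would carry out all checks by unfolding the definitions of introduce, union, join, relabel, and fuse from \cref{sec:preliminaries}; no deeper machinery is needed, which is why the statement asserts the identities hold ``straight-forwardly.''

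The single observation that organizes the bulk of the work is a normal form for the composite operator that recurs in identities 3--11, namely $\Theta := \theta_i \circ \rho_{a_1 \to i} \circ \cdots \circ \rho_{a_q \to i}$. Setting $A := \{i, a_1, \dots, a_q\}$, a direct unfolding shows that $\Theta$ replaces all vertices whose label lies in $A$ by one new vertex $v$ of label $i$ whose neighborhood is the union of their neighborhoods, and leaves every other vertex, its label, and every edge not incident to the merged group untouched; in particular edges internal to the group disappear and parallel edges collapse. Granting this description, identities 3--11 reduce to bookkeeping about where the labels touched by the preceding operator lie relative to $A$: if a join, relabel, or fuse acts entirely outside $A$ it commutes with $\Theta$ (identities 5, 6, 11); a join whose two endpoints both lie in $A$ only creates edges inside the group and is therefore redundant (identity 3), while a join with exactly one endpoint in $A$ is realized after the merge by a single join $\eta_{i,b}$ against the merged vertex (identity 4); a relabel into the group can be redirected straight to $i$ (identity 7) and a fuse of a label already in $A$ is absorbed by $\Theta$ (identity 10); and the distributivity over $\oplus$ (identity 9) holds because relabels act label-wise and the disjoint union contributes no cross-edges.

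For the identities that do not feature $\Theta$ the checks are local. Identity 1 holds without any restriction on $i$ relative to $a,b$, since both fuse and join only ever complete neighborhoods by unions and the resulting edge set is insensitive to their order; identity 2 and identity 14 are plain commutations once the relabeled label is disjoint from the label(s) the fuse resp.\ join depends on; identity 12 is immediate from the introduce- and relabel-semantics; and identity 13 is verified by tracking that, after $\rho_{a \to i}$, the former label-$a$ vertices must receive all $j$-edges, which on the right-hand side is exactly what the extra join $\eta_{a,j}$ supplies before the relabel merges the two label classes.

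The one place that needs genuine care, and the step I would treat as the main obstacle, is identity 8, where the operator $\rho_{i \to b}$ moves the vertices of label $i$ out of the group before the fuse. Here the representative label $i$ of the merged vertex is no longer available, so the right-hand side must re-root the grouped fuse at a different label: it relabels $a_2, \dots, a_q$ to $a_1$, fuses on $a_1$, and only then relabels $a_1 \to i$, while $\rho_{i \to b}$ is applied first on both sides. Checking that this re-rooted fuse merges exactly the same vertex set (the original $a_1, \dots, a_q$ classes, with the original $i$-class correctly excluded because it has already been sent to $b \notin A$) and produces a vertex with the same label and neighborhood is the most delicate verification; the remaining identities follow the same pattern but without this subtlety.
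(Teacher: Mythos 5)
Your proposal is correct and matches the paper's intent exactly: the paper gives no explicit proof, stating only that the equivalences "can be verified straight-forwardly" by unfolding the operator definitions, which is precisely the direct verification you carry out. Your normal-form description of $\theta_i \circ \rho_{a_1 \to i} \circ \cdots \circ \rho_{a_q \to i}$ and your identification of rule 8 as the one delicate case are sound and add useful structure beyond what the paper records.
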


We fix some notation. 
Let $t$ be a fuse-node in some fuse-expression $\phi$.
Since $t$ is not useless, there is at least one successor of $t$ being a union-node. 
The union-nodes are the only nodes with more than one child so there exists a unique topmost successor of $t$ being a union-node, we denote it by $t_\oplus$.
The children of $t_\oplus$ are denoted by $t_1$ and $t_2$.
For a node $t$, we call the maximum number of union-nodes on a path from $t$ to any leaf in the subtree rooted at $t$ the \emph{$\oplus$-height} of $t$.

Informally speaking, a fuse-expression we want to achieve in this section has the following two properties. 
First, for any pair of distinct vertices that are fused at some point, their fuse happens ``as early as possible''. 
Namely, two vertices are fused right after the earliest union these vertices participate in together: in particular, these vertices come from different sides of the union.
This will allow us to replace a sequence of fuse-nodes by a so-called glue-node that carries out non-disjoint union of two graphs under a certain restriction.
Second, we want that each edge of the graph is created exactly once. 
We split the transformation into several steps. 
In the very first step we shift every fuse-node to the closest union-node below it (see \cref{fig:glue-expressions-a}.

\begin{figure}[b]
    \includegraphics{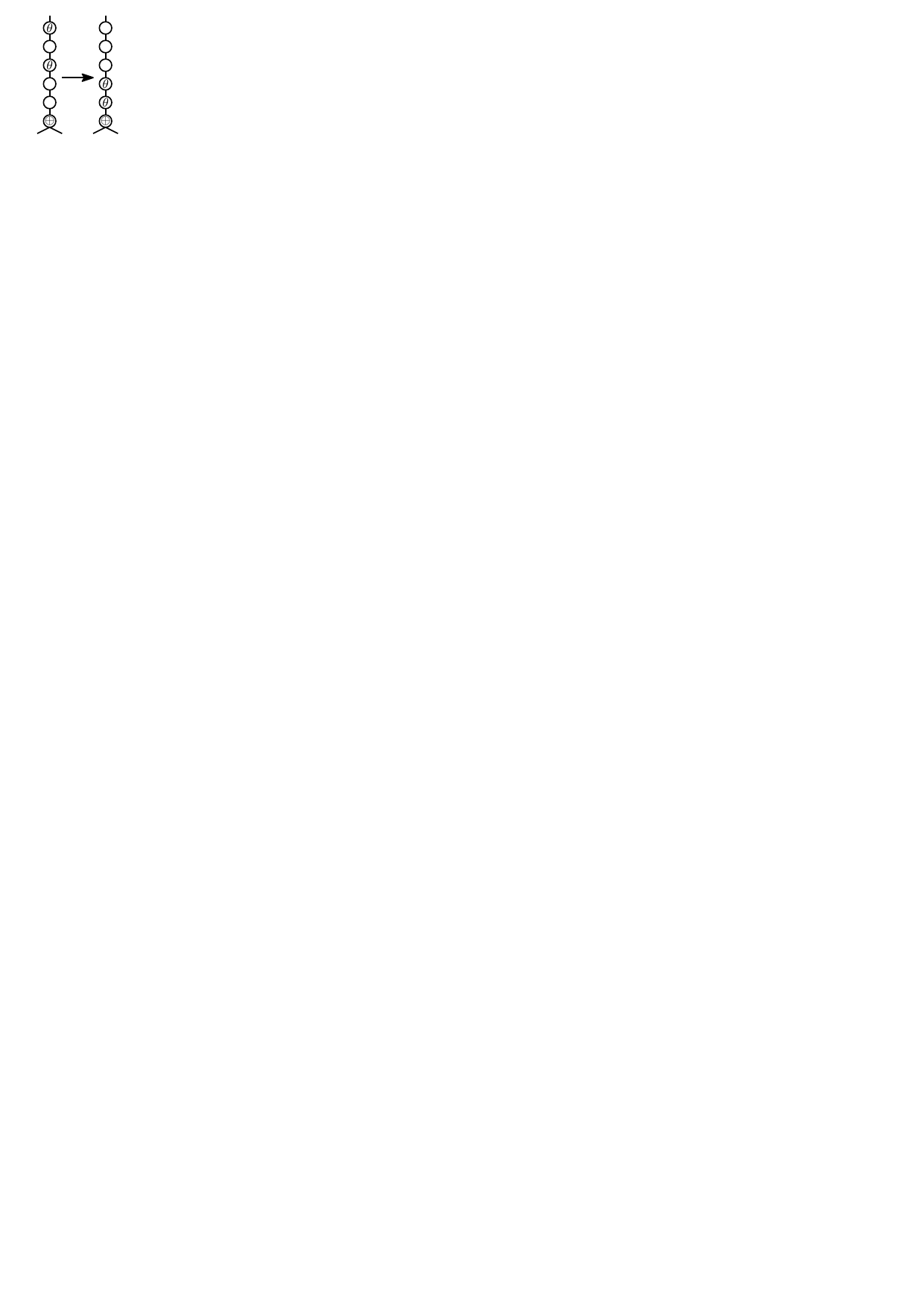}
    \centering
    \caption{Shifting fuse-nodes to union-nodes.} 
    \label{fig:glue-expressions-a}
\end{figure}

\begin{lemma} \label{lem:shift-to-union}
    Let $k \in \NN$ and let $\phi$ be a fuse-$k$-expression. 
    Then in time polynomial in $|\phi| + k$ we can compute a fuse-$k$-expression of the same labeled graph such that for every fuse node $t$, every inner node on the path from $t$ to $t_\oplus$ is a fuse-node.
\end{lemma}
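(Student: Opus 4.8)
The plan is to \emph{sink} every fuse-node down the parse tree until it sits directly above a union-node, so that afterwards no join- or relabel-node survives strictly between a fuse $t$ and its $t_\oplus$ (cf.\ \cref{fig:glue-expressions-a}). Since union-nodes are the only branching nodes, the stretch between a fuse and the first union below it is a single path whose inner nodes are joins, relabels, and possibly further fuses; only the joins and relabels must be removed, while extra fuses on the path are harmless. Every modification will be an application of one of the label-preserving equivalences of \cref{lem:transformation-rules}, so $G^\phi$ is never altered, and I keep the standing assumption that useless nodes are suppressed.

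The object I move is a \emph{fuse-block}: a fuse-node $\theta_i$ together with the maximal chain of relabels $\rho_{a_1 \to i}, \dots, \rho_{a_q \to i}$ feeding directly into label $i$ immediately below it, i.e.\ a subexpression $\theta_i \circ \rho_{a_1 \to i} \circ \dots \circ \rho_{a_q \to i}$; write $A = \{a_1,\dots,a_q,i\}$ for the labels fused into $i$. I repeatedly rewrite the block together with its child $O$ according to the type of $O$, following \cref{lem:transformation-rules} exactly. If $O=\rho_{a\to i}$ is a further relabel into $i$, I absorb it into the block (enlarging $q$); a join $\eta_{a,b}$ is deleted when $a,b\in A$, rewritten to $\eta_{i,b}$ moved above the block when exactly one endpoint lies in $A$, and passed unchanged when $a,b\notin A$; a relabel $\rho_{a\to b}$ is passed when $a,b\notin A$ and turned into an absorbed $\rho_{a\to i}$ when $b\in A\setminus\{i\}$; a fuse $\theta_b$ is deleted when $b\in A$ and otherwise commuted past. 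When the child is finally a union-node, the relabel-across-union equivalence~(\ref{rule:relabel-union}) of \cref{lem:transformation-rules} places $\theta_i$ directly above the union and distributes the block's relabels into the two branches; at this moment the fuse has no non-fuse node between it and its $t_\oplus$, and since the deposited relabels now lie \emph{below} the union they can never re-enter this fuse's downward path. I process the fuses one at a time (a ``macro-step''), pushing the chosen fuse all the way to its union.

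The delicate case, and the crux of the proof, is the child $\rho_{i\to b}$ that relabels the fuse's own label away, handled by equivalence~(\ref{rule:relabel-from-i}): one cannot simply commute, and instead the block is restructured, the fuse $\theta_i$ being replaced by $\theta_{a_1}$, the remaining relabels re-pointed to $a_1$, a new relabel $\rho_{a_1\to i}$ emitted \emph{above} the new fuse, and $\rho_{i\to b}$ kept below it. Because this can deposit a relabel above a fuse, progress is not immediate, so I argue termination on two levels. For a single macro-step I use the lexicographic measure (number of non-block nodes strictly between the block and the first union below it, block size $q$): every pass/absorb/delete strictly drops the first component, while equivalence~(\ref{rule:relabel-from-i}) keeps it fixed but strictly drops $q$, and both components are bounded by $|\phi|+k$, so the block reaches a union in polynomially many steps. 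For the global progress I take the potential to be the number of ``dirty'' fuses (those with a non-fuse inner node on the path to their $t_\oplus$); the key verification is that completing one macro-step cleans the chosen fuse while worsening none of the others — here one checks that nodes pushed above the active fuse were already on any higher fuse's path, that equivalence~(\ref{rule:relabel-from-i}) preserves the non-fuse count on every ancestor's path, and that a fuse temporarily lifted over the block by the fuse-commutation rule is re-cleaned once equivalence~(\ref{rule:relabel-union}) sinks the relabels below the union. Hence the number of dirty fuses strictly decreases each macro-step and reaches zero in at most $|\phi|$ macro-steps. Correctness is immediate from the equivalences, and since each macro-step is polynomial, the whole transformation runs in time polynomial in $|\phi|+k$.
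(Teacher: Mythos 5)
Your proposal is correct and follows essentially the same route as the paper: the same rewrite rules from \cref{lem:transformation-rules}, the same notion of a fuse together with its maximal chain of relabels into its label (your ``fuse-block'' is the paper's sequence $t_0,\dots,t_q$), the same per-fuse termination argument (your lexicographic measure is equivalent to the paper's ``height decreases or $q$ grows, and $q$ is bounded''), and the same global potential counting violating fuses, including the key check that fuses commuted above the active block are re-cleaned once the relabels are distributed below the union. The only cosmetic difference is that the paper fixes a bottom-up processing order of violating fuses, which your case analysis and potential argument render unnecessary.
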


\begin{proof}
    We start with $\phi$ and transform it as long as there is a fuse-node violating the property of the lemma.
    We say that such a node is simply \emph{violating}.
    If there are multiple such nodes, we choose a node $t$ to be processed such that every fuse-node being a proper successor of $t$ satisfies the desired property.
    Since any parse tree is acyclic, such a node exists.
    So let $t$ be a fuse-node to be processed and let $i \in [k]$ be such that $t$ is a $\theta_i$-node. 
    We will shift $t$ to $t_\oplus$ by applying the rules from \cref{lem:transformation-rules} to $t$ and its successors as follows. 
    When we achieve that the child of $t$ is a union- or a fuse-node, we are done with processing $t$.
    
    While processing $t$, with $t_c$ we always refer to the current child of $t$ and by $\alpha$ we denote the operation in $t$. 
    Recall that $t$ is not useless so as long as $t$ is processed, $\alpha$ is a join or a relabel.
    If $\alpha$ is a join-node, then we apply the rule from \cref{rule:join} to swap $t$ and $t_c$.
    Otherwise, we have $\alpha = \rho_{a \to b}$ for some $a \neq b \in [k]$. 
    We proceed depending on the values of $a$ and $b$.
    If $a \neq i$ and $b \neq i$, then the rule from \cref{rule:relabel} is applied to swap $t$ and $\alpha$.
    If $a = i$ and $b \neq i$ (resp. and $b = i$), then $t$ (resp.\ $t_c$) would be useless so this is not the case.
    We are left with the case $a \neq i$ and $b = i$, i.e., $\alpha = \rho_{a \to i}$.
    Note that here we cannot simply swap the nodes $t$ and $t_c$ since the vertices that have label $a$ at the child of $t_c$ also participate in the fuse at $t$.
    So this is where we will have to apply the rules from \cref{lem:transformation-rules} to longer sequences of nodes.
    From now on, we always consider the maximal sequence $(t_0 = t), t_1, \dots, t_q$ (for some $q \in \NN$) such that for every $j \in [q]$, the node $t_j$ is a relabel-node $\rho_{a_j \to i}$ for some $a_j \in [k]$ and $t_j$ is a child of $t_{j-1}$.
    In particular, we have $t_1 = t_c$.
    Since the nodes are not useless, the values $\{a_1, \dots, a_q\}$ are pairwise distinct. 
    Let $t'$ be the child of $t_q$.

    \begin{figure}[t]
        \includegraphics{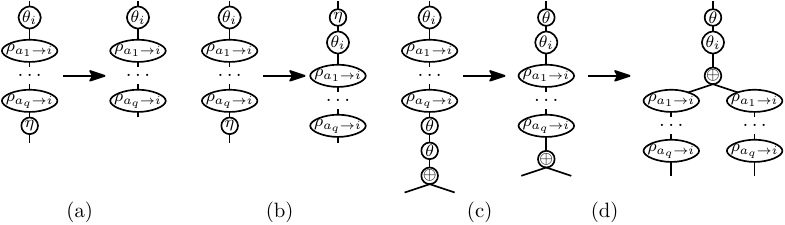}
        \centering
        \caption{Shifts in the proof of \cref{lem:shift-to-union}. (a), (b) A fuse-node $\theta_i$ following relabels to $i$ and a join-node. (c), (d) A fuse-node $\theta_i$ following relabels to $i$ and fuse-nodes.} 
        \label{fig:fuse-shifts-1}
    \end{figure}

    If $t'$ is a join-node, then depending on the joined labels, we apply one of the rules from \cref{rule:relabel-join-both-sides-fused,rule:relabel-join-one-side-fused,rule:relabel-join-no-side-fused} to either suppress $t'$ (see \cref{fig:fuse-shifts-1}~(a)) or shift it above $t$ with possibly changing the labels joined in $t'$ (see \cref{fig:fuse-shifts-1}~(b)).
    If $t'$ is a relabel-node, let $c, d \in [k]$ be such that it is a $\rho_{c \to d}$-node.
    By maximality, we have $d \neq i$.
    Now depending on $c$ and $d$, we can apply one of the rules from \cref{rule:relabel-relabel,rule:relabel-to-i-directly,rule:relabel-from-i}.
    In the case of \cref{rule:relabel-to-i-directly}, the length of the maximal sequence $t_0, \dots, t_q$ increases.
    In the cases of \cref{rule:relabel-relabel,rule:relabel-from-i}, the height of $t$ decreases.
    So in any case we make progress.
    If $t'$ is a union-node, we apply the rule from \cref{rule:relabel-union}. 
    
    Now we may assume that $t'$ is a fuse-node. 
    Observe that while processing $t$ we have not affected the subtree rooted at $t'$ all inner nodes on the path from $t'$ to $t'_\oplus$ are fuse-nodes.
    So there exist $r \in \NN$ with $r > q$, the nodes $(t' = t_{q+1}), \dots, t_r$, and values $b_{q+1}, \dots, b_{r-1} \in [k]$ with the following two properties. 
    First, for every $j \in [q+1, r-1]$, the node $t_j$ is a  $\theta_{b_j}$-node while $t_r$ is a union-node. 
    And second, for every $j \in [q+1, r]$, the node $t_j$ is a child of $t_{j-1}$.
    For $\ell = q+1, \dots, r$ we do the following to achieve that $t_\ell$ is the child of $t_q$.
    This holds at the beginning for $\ell = q+1$.
    Now let $\ell > q + 1$ and suppose this holds for $\ell - 1$.
    Depending on $b_\ell$ we apply the rule from \cref{rule:relabel-fuse-the-same,rule:relabel-fuse-another} to the sequence $(t = t_0), \dots, t_q, t_{\ell-1}$.
    This either suppresses $t_{\ell-1}$ or shifts it to become the parent of $t$.
    In any case, $t_\ell$ becomes the child of $t_q$ as desired.
    In the end, this holds for $\ell = r$, i.e.,
    the vertices $t, t_1, \dots, t_q, t_r$ form a path in the parse tree (see \cref{fig:fuse-shifts-1}~(c)).
    Finally, we apply the rule \cref{rule:relabel-union} to achieve
    that $t$ is a parent of the union-node $t_r$, i.e., $t$ now satisfies the desired condition (see \cref{fig:fuse-shifts-1}~(d)). 
    This concludes the description of the algorithm processing $t$.

    Now we argue that the algorithm terminates and takes only polynomial time.
    We analyze the process for the node $t$ and then conclude about the whole algorithm.
    It can be verified that every application of a rule either decreases the height of $t$ or increases $q$.
    The latter case can only occur at most $k$ times: if $q > k$, then at least one of $t_1, \dots, t_q$ would be redundant.
    So only a polynomial number of rules is applied until $t$ satisfies the property of the lemma.
    The application of any of these rules increases neither the height nor the number of leaves of the parse tree.
    On the other hand, suppressing a useless node below $t$ decreases the height of $t$ as well.
    So to conclude the proof, it suffices to show that for any fuse-node $s$, if $s$ satisfied the property of the lemma before processing $t$, then this still holds after processing $t$, i.e., the number of violating fuse-nodes decreases.
    
    While processing the node $t$, some fuse-node $t^* \in \{t_{q+1}, \dots, t_{r-1}\}$ might violate our desired property when this node is shifted to become the parent of $t$ after the application of the rule from \cref{rule:relabel-fuse-another}.
    But observe that after processing $t$, the path between $t^*$ and $t$ contains only fuse-nodes (which similarly to $s$ have been shifted there as a result of the rule from \cref{rule:relabel-fuse-another}) and the child of $t$ is a union-node.
    So $t^*$ again satisfies the desired condition.
    Therefore, every fuse-node is processed at most once and no new fuse-nodes are created.
    There is a linear number of fuse-nodes and a single application of any rule from \cref{lem:transformation-rules} can be accomplished in polynomial time.
    Above we have argued that per fuse-node, the number of rule applications is polynomial.
    Altogether, the algorithm runs in polynomial time.
\end{proof}

As the next step, we will shift the fuse-nodes further below so that every fuse-node $t$ fuses exactly two vertices, namely one from $G_{t_1}$ with another from $G_{t_2}$.

\begin{lemma}\label{lem:fuse-exactly-two-from-different-sides}
	Let $k \in \NN$ and let $\phi$ be a fuse-$k$-expression. 
	Then in time polynomial in $|\phi| + k$ we can compute a fuse-$k$-expression of the same labeled graph such that for every fuse node $t$, the following holds. 
    First, every inner node on the path from $t$ to $t_\oplus$ is a fuse-node.
    Second, let $i \in [k]$ be such that $t$ is a $\theta_i$-node. 
    Then for every pair $u \neq v \in \lab_t^{-1}(i)$, it holds that $|\{u, v\} \cap V(G_{t_1})| = |\{u, v\} \cap V(G_{t_2})| = 1$.
    In particular, we have $|\lab_t^{-1}(i)| = 2$.
\end{lemma}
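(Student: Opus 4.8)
The plan is to start from an expression already satisfying the conclusion of \cref{lem:shift-to-union} and then push each fuse further down, past its closest union, until it fuses exactly one vertex from each side; throughout I would maintain the property of \cref{lem:shift-to-union} as an invariant. First I would record two structural facts about a fuse-node $t=\theta_i$ sitting above the chain of fuses $s_1=\theta_{j_1},\dots,s_m=\theta_{j_m}$ leading down to $u:=t_\oplus$. Since by \cref{lem:shift-to-union} only fuse-nodes lie strictly between $t$ and $u$, there is no introduce- or relabel-node on this path, and fusing a label $j\neq i$ never changes the number of label-$i$ vertices; hence no $s_\ell$ can be a $\theta_i$-node, as otherwise only one label-$i$ vertex would survive and $t$ would be useless. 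Consequently $\theta_i$ commutes with each $s_\ell$ by \cref{rule:relabel-fuse-another} taken with an empty relabel sequence (i.e.\ $\theta_i\circ\theta_j=\theta_j\circ\theta_i$ for $i\neq j$), so I may move $t$ down to sit directly above $u$. The engine of the argument is the easily verified identity
\[
    \theta_i(H_1 \oplus H_2) = \theta_i\bigl(\theta_i(H_1)\oplus\theta_i(H_2)\bigr),
\]
valid because $H_1\oplus H_2$ has no edges across the two sides: fusing each side's label-$i$ vertices first and then merging the two representatives yields exactly the neighbourhood of one global fuse.

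Armed with this, I would process fuse-nodes one at a time, calling $t=\theta_i$ \emph{good} if it fuses exactly two label-$i$ vertices, one in $G_{t_1}$ and one in $G_{t_2}$, and \emph{bad} otherwise. Let $p,q$ be the numbers of label-$i$ vertices on the two sides of $u$, so $t$ fuses $p+q$ vertices. After commuting $t$ directly above $u$, I apply the identity above: the outer $\theta_i$ then fuses only the two side-representatives and is \emph{good}, while the two new inner $\theta_i$-nodes fuse $p$ resp.\ $q$ vertices strictly inside $H_1=G_{t_1}$ resp.\ $H_2=G_{t_2}$. If a side carries fewer than two label-$i$ vertices the corresponding inner node is useless and is suppressed; when a side is empty the outer node is useless too, so the purely one-sided case simply pushes the whole fuse into one side. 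The two freshly created inner fuse-nodes sit directly above $G_{t_1}$ resp.\ $G_{t_2}$ and may now violate the invariant of \cref{lem:shift-to-union}; I restore it by re-invoking that lemma's shifting procedure on them, which only moves them to their own nearest unions and never creates new fuse-nodes.

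For termination I would use the potential $\Phi=\sum_{t\ \mathrm{fuse}}(f(t)-1)\cdot\ell(t_\oplus)$, where $f(t)$ is the number of vertices fused by $t$ and $\ell(u)$ denotes the number of introduce-nodes in the subtree rooted at the union $u$; think of a fuse of $f(t)$ vertices as $f(t)-1$ elementary merges, each located at $t_\oplus$. The crucial point is that our transformations only insert fuse-nodes and never alter the $\oplus$-structure or the set of leaves, so $\ell(u)$ is the same constant for every fixed union $u$ throughout, and fuse-nodes not touched in the current step keep their contribution. Processing a bad node relocates $p-1$ of its merges to a union inside $H_1$ and $q-1$ to a union inside $H_2$, both with strictly fewer leaves than $u$, leaving one merge at $u$; since a bad node has $p+q\geq 3$ (or $p+q=2$ on one side), at least one merge strictly lowers its $\ell$-value, so $\Phi$ strictly decreases, and the re-shift only decreases it further. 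As $\Phi$ is a nonnegative integer bounded polynomially in $n$, only polynomially many steps occur, each executable in polynomial time, and the process halts exactly when every fuse-node is good, which together with the maintained invariant is the claim.

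I expect the main obstacle to be the bookkeeping that guarantees progress: naively one keeps adding fuse-nodes so the tree grows, and any size- or height-based measure can increase, in particular on the untouched fuses $s_\ell$ whose subtree gains the new inner nodes. The resolving insight is to measure displacement of individual merges by the leaf-count of their host union rather than by any node count, making the measure oblivious to inserted fuse-nodes; the care then lies in checking that commuting, the split identity, the suppression of useless nodes, and the re-shift all respect this measure, and that the structural claim ruling out a second $\theta_i$ on the path genuinely holds.
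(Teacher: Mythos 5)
Your proof is correct and its core transformation is exactly the paper's: insert a fresh $\theta_i$ on each side of $t_\oplus$ (your identity $\theta_i(H_1\oplus H_2)=\theta_i(\theta_i(H_1)\oplus\theta_i(H_2))$ is precisely the paper's subdivision of the edges $t_\oplus t_1$ and $t_\oplus t_2$), suppress whichever of the three resulting fuse-nodes become useless, and re-run the shifting procedure of \cref{lem:shift-to-union} on the surviving new nodes. Where you genuinely diverge is the progress measure. The paper fixes a processing order (always take a violating node of maximum $\oplus$-height) and argues that the pair (maximum $\oplus$-height of a violating node, number of violating nodes at that height) decreases lexicographically, bounding the number of processed nodes by $k|\phi|^2$; your potential $\Phi=\sum_t (f(t)-1)\cdot\ell(t_\oplus)$, which charges each elementary merge to the leaf-count of its host union, strictly decreases at every split regardless of order, which is arguably cleaner and removes the need to schedule the violating nodes. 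Your side observations (no $\theta_i$ can occur strictly between $t$ and $t_\oplus$, and $\theta_i$ commutes with the intervening $\theta_j$'s via \cref{rule:relabel-fuse-another} with an empty relabel prefix) are correct, though the paper gets by without explicitly commuting $t$ down. The one point you gloss over is that "each step is polynomial time" requires the expression itself to stay polynomially sized: the re-shifts can duplicate relabel sequences (via \cref{rule:relabel-union}), so one still needs the paper's observation that no rule increases the number of leaves or the height except for the $+1$ per split, giving an $O(k|\phi|)$ bound on root-to-leaf paths; this is routine but should be said.
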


\begin{figure}[b]
    \includegraphics{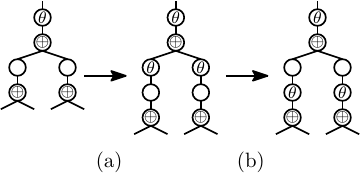}
    \centering
    \caption{(a) Creating copies of fuse-nodes to fuse vertices as early as possible. (b) Shifting the copied fuse-nodes to union-nodes.} 
    \label{fig:glue-expressions-b}
\end{figure}

\begin{proof}
    First of all, we apply \cref{lem:shift-to-union} to transform $\phi$ into a fuse-$k$-expression of the same labeled graph satisfying the properties of that lemma in polynomial time. 
    We still denote the arising fuse-expression by $\phi$ for simplicity.
    We will now describe how to transform $\phi$ to achieve the desired property.
    We will process fuse-nodes one by one and as invariant, we will maintain that after processing any number of fuse-nodes, the expression still satisfies \cref{lem:shift-to-union}.
    
    As long as $\phi$ does not satisfy the desired property, there is a fuse-node $t$ such that at least two fused vertices $u$ and $v$ come from the same side, say $G_{t_1}$ of $t_\oplus$.
    We call $t$ \emph{violating} in this case.
    Since $\phi$ satisfies \cref{lem:shift-to-union}, all inner nodes on the path between $t$ and $t_\oplus$ are fuse-nodes.
    The vertices $u$ and $v$ have therefore the same label in $G_{t_1}$ and they can already be fused before the union. 
    The way we do it might increase the number of fuse-nodes in the expression so we have to proceed carefully to ensure the desired time complexity.
    Now we formalize this idea.
    Among violating fuse-nodes, we always pick a node $t$ with the largest $\oplus$-height to be processed.
    Let $i \in [k]$ be such that $t$ is a $\theta_i$-node.
    First, we subdivide the edge $t_\oplus t_1$ resp.\ $t_\oplus t_2$ with a fresh $\theta_i$-node $t_1'$ resp.\ $t_2'$ (see \cref{fig:glue-expressions-b}~(a)).
    Clearly, this does not change the arising labeled graph and $t$ now fuses at most two vertices: at most one from each side of the union.
    Now the following sets of nodes may become useless: $\{t_1', t\}$, $\{t_2', t\}$, $\{t_1'\}$, $\{t_2'\}$, or $\emptyset$, these nodes are therefore suppressed.
    In particular, if the node $t$ is not suppressed, then it is not violating anymore.
    Let now $\emptyset \neq S \subseteq \{t_1', t_2'\}$ denote the set of non-suppressed nodes in $\{t_1', t_2'\}$.
    These nodes now potentially violate \cref{lem:shift-to-union}.
    So for every node $x'$ in $S$, we proceed as in the proof of \cref{lem:shift-to-union} to ``shift'' $x'$ to $x'_\oplus$ to achieve that all nodes between $x'$ and $x'_\oplus$ are fuse-nodes (see \cref{fig:glue-expressions-b}~(b)).
    Note that this shift only affects the path from $x'$ to $x'_\oplus$.
    
    Observe that every node $x' \in S$ has strictly smaller $\oplus$-height than $t$.
    Thus the order of processing violating fuse-nodes implies that after processing any node, the value $(a, b)$ lexicographically decreases where $a$ denotes the maximum $\oplus$-height over violating fuse-nodes and $b$ denotes the number of violating nodes with $\oplus$-height $a$.
    Indeed, if $t$ was the only violating node with the maximum $\oplus$-height, then $a$ decreases after this process.
    Otherwise, $a$ remains the same, and $b$ decreases.
    Since no new introduce-nodes are created, the maximum $\oplus$-height does not increase and the value of $a$ is always upper-bounded by $|\phi|$.
    Further, recall that after processing a fuse-node, the expression again satisfies \cref{lem:shift-to-union}, i.e., all inner nodes of a path from any fuse-node $s$ to $s_\oplus$ are fuse-nodes. 
    Let us map every fuse-node $s$ to $s_\oplus$.
    The expression never contains useless fuse-nodes so at most $k$ nodes (i.e., one per label) are mapped to any union-node and the value of $b$ never exceeds $k |\phi|$.
    Therefore, the whole process terminates after processing at most $k |\phi|^2$ fuse-nodes.
    Next observe that none of the rules from \cref{lem:transformation-rules} increases the length of some root-to-leaf path.
    Thus, processing a fuse-node $t$ might increase the maximum length of a root-to-leaf path by at most one, namely due to the creation of nodes $t_1'$ and $t_2'$.
    Since on any root-to-leaf path there are at most $|\phi|$ $\oplus$-nodes and there are no useless fuse-nodes, there are at most $k |\phi|$ fuse-nodes on any root-to-leaf path at any moment.
    Initially, the length of any root-to-leaf path is bounded by $|\phi|$ and during the process it increases by at most one for any fuse-node on it.
    Hence, the length of any root-to-leaf path is always bounded by $(k+1)|\phi|$.
    Altogether, processing a single fuse-node can be done in time polynomial in $k$ and $|\phi|$ and the running time of the algorithm is polynomial in $k$ and $|\phi|$.
\end{proof}

Now we may assume that a fuse-expression looks as follows: every union-operation is followed by a sequence of fuse-nodes and each fuse-nodes fuses exactly two vertices from different sides of the corresponding union.
Thus, we can see the sequence consisting of a union-node and following fuse-nodes as a union of two graphs that are \emph{not necessarily vertex-disjoint}.
So these graphs are \emph{glued} at the pairs of fused vertices.
Now we formalize this notion.

A \emph{glue-$k$-expression} is a well-formed expression constructed from introduce-, join-, relabel-, and \emph{glue}-nodes using $k$ labels.
A glue-operation takes as input two $k$-labeled graphs $(H_1, \lab_1)$ and $(H_2, \lab_2)$ satisfying the following two properties:
\begin{itemize}
    \item For every $v \in V(H_1) \cap V(H_2)$, the vertex $v$ has the same label in $H_1$ and $H_2$, i.e., we have $\lab_1(v) = \lab_2(v)$.
    \item For every $v \in V(H_1) \cap V(H_2)$ and every $j \in [2]$, the vertex $v$ is the unique vertex with its label in $H_j$, i.e., we have $|\lab_1^{-1}(\lab_1(v))| = |\lab_2^{-1}(\lab_2(v))| = 1$
\end{itemize}
In this case, we call the $k$-labeled graphs $H_1$ and $H_2$ \emph{glueable}.
The output of this operation is then the union of these graphs denoted by $H_1 \sqcup H_2$, i.e., the labeled graph $(H, \lab)$ with $V(H) = V(H_1) \cup V(H_2)$ and $E(H) = E(H_1) \cup E(H_2)$ where the vertex-labels are preserved, i.e., 
\[
    \lab(v) = 
    \begin{cases}
        \lab_1(v) & \text{if } v \in V(H_1) \\ 
        \lab_2(v) & \text{if } v \in V(H_2)
    \end{cases}
    .
\]
We denote the arising $k$-labeled graph with $H_1 \sqcup H_2$ (omitting $\lab$ for simplicity) and we call the vertices in $V(H_1) \cap V(H_2)$ \emph{glue-vertices}.

\begin{remark}
    Unlike fuse-expressions, if $t$ is a node of a glue-expression $\phi$, then $G^\phi_t$ is a subgraph of $G^\phi$.
\end{remark}

\begin{lemma}
    Let $k \in \NN$ and let $\phi$ be a fuse-$k$-expression. 
    Then in time polynomial in $|\phi| + k$ we can compute a glue-$k$-expression of the same labeled graph.
\end{lemma}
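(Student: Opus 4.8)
The plan is to start from the structural normal form guaranteed by \cref{lem:fuse-exactly-two-from-different-sides}: after applying it, every union-node $t_\oplus$ is followed towards the root by a maximal sequence of fuse-nodes $\theta_{i_1}, \dots, \theta_{i_m}$, where $\theta_{i_1}$ is the parent of $t_\oplus$ and each $\theta_{i_j}$ fuses exactly two vertices of label $i_j$, one lying in $V(G_{t_1})$ and one in $V(G_{t_2})$. The first observation I would make is that the labels $i_1, \dots, i_m$ of such a block are pairwise distinct: a fuse-node merges the (exactly two) vertices of its label into a single vertex, so if two fuse-nodes of the same label occurred in the sequence, the upper one would see only one vertex of that label at its child, contradicting the ``exactly two'' guarantee of \cref{lem:fuse-exactly-two-from-different-sides}.

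The core idea is to replace each such block (a union-node together with the fuse-nodes immediately above it) by a single glue-node. For a fixed block, let $u_j \in V(G_{t_1})$ and $v_j \in V(G_{t_2})$ be the two vertices of label $i_j$ fused by $\theta_{i_j}$. Since the $i_j$ are distinct and a fuse of label $i_{j'}$ changes only the number of label-$i_{j'}$ vertices, the count of label-$i_j$ vertices on each side is unaffected by the other fuses of the block; hence $u_j$ is the unique vertex of label $i_j$ in $G_{t_1}$ and $v_j$ is the unique vertex of label $i_j$ in $G_{t_2}$. This is exactly the uniqueness condition required by a glue-operation. To turn the two sides into glueable graphs I would rename titles: in the whole subtree $\phi_{t_2}$, rename the title $v_j$ to $u_j$ for every $j \in [m]$; since titles are globally distinct this creates no clash and affects neither labels nor edges. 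After the renaming, $G_{t_1}$ and $G_{t_2}$ share exactly the vertices $u_1, \dots, u_m$, each carrying label $i_j$ in both sides and being the unique carrier of that label on each side, so $G_{t_1}$ and $G_{t_2}$ are glueable and $G_{t_1} \sqcup G_{t_2}$ equals (up to label-preserving isomorphism) the union followed by the fuse-sequence. I would therefore delete the fuse-nodes of the block and turn $t_\oplus$ into a glue-node.

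To carry out all replacements consistently across the possibly nested blocks, I would process union-nodes bottom-up (by a post-order traversal), so that whenever a block is processed its two sides are already complete glue-sub-expressions; equivalently, one may first merge all titles that are eventually fused together into one representative per equivalence class and then collapse every block simultaneously. The point is that a vertex acting as a glue-vertex in several nested blocks is, at each relevant union-node, still the unique carrier of its current label on its side --- again guaranteed by \cref{lem:fuse-exactly-two-from-different-sides} for the corresponding fuse-node --- so each individual replacement is valid. An empty fuse-sequence above a union simply becomes a glue with no glue-vertices, i.e.\ an ordinary disjoint union, which is a trivially valid glue. As all relabel-, join-, and introduce-nodes outside the blocks are left untouched, the resulting glue-$k$-expression constructs the same labeled graph.

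Finally I would bound the running time: applying \cref{lem:fuse-exactly-two-from-different-sides} is polynomial, after which identifying the blocks, computing the pairs $(u_j, v_j)$ (by tracking titles, which relabels do not change), performing the renamings, and collapsing each block into a glue-node are straightforward scans over the parse tree. The part where I expect the real work to lie is the \emph{simultaneous} verification of glueability for an entire fuse-sequence at once, rather than one fuse at a time, together with the title bookkeeping that keeps nested gluings consistent; the distinctness of the labels $i_1, \dots, i_m$ is the key fact that makes the per-side uniqueness hold for all glued pairs at the same union-node.
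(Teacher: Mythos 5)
Your proposal is correct and follows essentially the same route as the paper: normalize via \cref{lem:fuse-exactly-two-from-different-sides}, identify the titles of each fused pair (the paper calls this ``identification of $v$ with $w$''), and then collapse each union-node together with its trailing fuse-sequence into a single glue-node, processing union-nodes bottom-up by $\oplus$-height while maintaining that sub-expressions still produce isomorphic labeled graphs. The per-side uniqueness of each glued label and the distinctness of the labels in a block are exactly the facts the paper also relies on (the latter via the standing assumption that no fuse-node is useless).
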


\begin{proof}
    In polynomial time we can obtain a fuse-$k$-expression satisfying \cref{lem:fuse-exactly-two-from-different-sides} that creates the same graph.
    For simplicity, we still denote this expression by $\phi$.
    We assume that the introduce-nodes of $\phi$ use pairwise distinct titles.
    For titles $v$ and $w$, by \emph{identification of $v$ with $w$} we denote the operation that for every $i \in [k]$, replaces every leaf $v \langle i \rangle$ of the current expression with a leaf $w \langle i \rangle$.
    Informally speaking, our goal is to assign the vertices that are fused at some point in the expression the same title.
    Then such vertices will be ``automatically'' glued by a glue-node.
    We start with $\alpha := \phi$ and $\alpha$ will always denote current ``mixed'' expression, i.e., it potentially contains union-, fuse-, and glue-nodes simultaneously.

    We process $\oplus$-nodes in the order of increasing $\oplus$-height as follows.
    Let $t$ be the union-node to be processed.
    Let $f^t_1, \dots, f^t_p$ for some $p \in \NN_0$ denote the maximal sequence of predecessors of $t$ in the parse tree of $\alpha$ such that $f^t_1, \dots, f^t_p$ are fuse-nodes and $t, f^t_1, \dots, f^t_p$ form a path in $\alpha$.
    Simply speaking, $f^t_1, \dots, f^t_p$ are exactly the fuse-operations following $t$.
    For $j \in [p]$, let $i^t_j \in [k]$ be such that $f^t_j$ is a $\theta_{i^t_j}$-node.
    If $p > 0$, we denote $f^t_p$ by $t_{\theta}$ for simplicity.
    If $p = 0$, then with $t_\theta$ we denote $t$ itself.
    If $t$ is clear from the context, we sometimes omit this superscript.

    We will replace the path $t, f^t_1, \dots, f^t_p$ with a single glue-node denoted by $t_\sqcup$ and identify some titles in the sub-expression of $\alpha$ rooted at $t$ so that we maintain the following invariant.
    For every union-node $t$ of $\phi$ such that $t$ has already been processed, the labeled graphs $G^\phi_{t_\theta}$ and $G^\alpha_{t_\sqcup}$ are isomorphic.
    This has the following implication.
    For every node $t$ in $\alpha$ such that $t$ is not a glue-node, the labeled graphs $G^\alpha_t$ and $G^\phi_t$ are isomorphic.
    Up to some formalities, this property just ensures that all sub-expressions still create the same labeled graph.

    Let $s_1$ and $s_2$ be the children of $t$ in $\alpha$.
    The order of processing then implies that $\alpha_{s_1}$ and $\alpha_{s_2}$ are glue-$k$-expressions (i.e., contain no fuse-nodes).
    Let $t_1$ and $t_2$ be the children of $t$ in $\phi$. 
    By invariant, the labeled graphs $G_{t_q}^\phi$ and $G_{s_q}^{\alpha}$ are isomorphic for each $q \in [2]$.
    So since $\phi$ satisfies \cref{lem:fuse-exactly-two-from-different-sides},
    for every $j \in [p]$ and $q \in [2]$, there exists exactly one vertex $v^q_{i_j}$ in $G_{t_q}^\alpha$ with label $i_j$.
    Now for every $j \in [p]$, we identify $v^1_{i_j}$ with $v^2_{i_j}$ in $\alpha$.
    Let $\xi$ denote the arising expression.
    Now we replace the sequence $t, f_1, \dots, f_p$ with a single glue-node denoted by $t_\sqcup$.
    And let $\zeta$ denote the constructed expression.
    
    We claim that $G_{t_\sqcup}^{\zeta}$ is isomorphic to $G^{\alpha}_{t_\theta}$.
    First, note that since union-nodes are processed from bottom to top and in $\phi$ all titles are pairwise distinct, there was no title occurring in both $\alpha_{t_1}$ and $\alpha_{t_2}$. 
    Therefore, after identifying $v^1_{i_j}$ with $v^2_{i_j}$ for every $j \in [p]$, we still have that the labeled graph $G^\xi_{t_1}$ (resp.\ $G^\xi_{t_2}$) is isomorphic to the labeled graph $G^\alpha_{t_1}$ (resp.\ $G^\alpha_{t_2}$). 
    In simple words, no identification has lead to the gluing of vertices inside $G^\alpha_{t_1}$ or $G^\alpha_{t_2}$.
    Moreover, the ordering of processed nodes implies that the titles other than $v^2_{i_1}, \dots, v^2_{i_p}$ are pairwise distinct in $\xi_{t_1}$ and $\xi_{t_2}$.
    Therefore, the glue-node $t_\sqcup$ takes two labeled graphs $G^\xi_{t_1} \cong G^{\alpha}_{t_1}$ and $G^\xi_{t_1} \cong G^{\alpha}_{t_1}$ with shared vertices $\{v^2_{i_1}, \dots, v^2_{i_p}\}$ and produces their union. 
    Observe that this is the same as applying the sequence $f_p \circ \dots \circ f_1 \circ t$ to $G^{\alpha}_{t_1}$ and $G^{\alpha}_{t_2}$.
    Therefore, we have $G^\zeta_t \cong G^\alpha_t$ as desired.
    After $t$ is processed, we set $\alpha := \zeta$ to denote the current expression and move on to the next union-node (if exists). 

    After all nodes are processed, the expression $\alpha$ contains neither union- nor fuse-nodes.
    So $\alpha$ is a glue-$k$-expression such that (by invariant) $G^\alpha \cong G^\phi$ holds, i.e., $\alpha$ creates the same labeled graph.
    The number of union-nodes in $\phi$ is bounded by $|\phi|$ and processing a single node can be done in time polynomial in $k$ and the number of leaves of $\phi$ (i.e., also bounded by $|\phi|$).
    Hence, the transformation takes time polynomial in $k$ and $\phi$.
\end{proof}

\begin{remark}
    Transforming a glue-$k$-expression into a fuse-$k$-expression is trivial: Replace every glue-node by a union-node followed by a sequence $\theta_{i_1}, \dots, \theta_{i_q}$ of fuse-nodes where $i_1, \dots, i_q$ are the labels of vertices shared by the glued graphs.
    This implies that fuse- and glue-expressions are equivalent and there is no reason to define ``glue-width'' as a new parameter.
\end{remark}

As a last step of our transformations, we show that similarly to the existence of irredundant $k$-expressions defined by Courcelle and Olariu~\cite{CourcelleO00} and widely used in dynamic-programming algorithms (e.g., \cite{FominGLS14,BergougnouxKK20,HegerfeldK23}), certain irredundancy can be achieved for glue-expressions.

\begin{lemma}\label{lem:useful-glue-expression}
    Let $k \in \NN$ and let $\phi$ be a fuse-$k$-expression. 
    Then in time polynomial in $|\phi| + k$ we can compute a glue-$k$-expression $\xi$  of the same labeled graph without useless nodes such that: 
    \begin{enumerate}
        \item Let $i, j \in [k]$, let $t$ be a $\eta_{i, j}$-node in $\xi$, and let $t'$ be the child of $t$ in $\xi$. 
        Then $G^\xi_{t'}$ contains no edge $\{v, w\}$ with $\lab^\xi_{t'}(v) = i$ and $\lab^\xi_{t'}(w) = j$.
        \item Let $t$ be a glue-node in $\xi$ and let $t_1$ and $t_2$ be its children.
        Then the graphs $G^{\xi}_{t_1}$ and $G^{\xi}_{t_2}$ are edge-disjoint.
        \item Let $t$ be a glue-node in $\xi$, let $t_1$ and $t_2$ be its children, and let $v$ be a glue-vertex. 
        Then for every $q \in [2]$, the vertex $v$ has an incident edge in $G^\xi_{t_q}$.
    \end{enumerate}
    We call a glue-$k$-expression satisfying these properties \emph{reduced}.
\end{lemma}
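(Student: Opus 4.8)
The plan is to start from a glue-$k$-expression $\phi$ (obtainable from the given fuse-$k$-expression in polynomial time by the previous lemma) and enforce the three reducedness properties one after another by local rewriting, always preserving the created labeled graph and never introducing fuse-nodes. Throughout I would maintain the standing assumption that no useless nodes are present: every join creates a new edge, every relabel actually affects some vertex, and so on. The overall structure mirrors the usual construction of irredundant clique-width expressions in the sense of Courcelle and Olariu, adapted to the glue-operation.

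\textbf{Property (1) (irredundant joins).} First I would handle redundant joins exactly as in the clique-width setting. For a $\eta_{i,j}$-node $t$ whose child $t'$ already contains an edge $\{v,w\}$ with $\lab_{t'}(v)=i$, $\lab_{t'}(w)=j$, the join partially recreates existing edges. I would push the join-operation down the parse tree toward the points where the relevant labels are first brought together, using the commutation rules from \cref{lem:transformation-rules} (in particular the relabel-join interactions \cref{rule:join-relabel-2,rule:join-relabel-3}) and introducing the join as early as possible, right after each relabel or introduce-node that first creates a vertex of label $i$ or $j$. When a join reaches a glue-node, it must be distributed to both sides (and across the glue) so that each edge between an $i$-labeled and a $j$-labeled vertex is created exactly once, at the latest moment before the two labels can no longer be separated. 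The key accounting point is that each join now creates only genuinely new edges; the commutation rules guarantee the underlying graph is unchanged, and a potential-function argument (counting, say, the total number of label-$\{i,j\}$ pairs already adjacent below each join) shows the process terminates in polynomial time.

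\textbf{Property (2) (edge-disjoint glue children).} Once joins are irredundant, I would address edge-disjointness at glue-nodes. Let $t$ be a glue-node with children $t_1,t_2$ and suppose an edge $\{u,v\}$ appears in both $G^\xi_{t_1}$ and $G^\xi_{t_2}$; since the glue identifies only the glue-vertices, such a shared edge must join two glue-vertices, say $u,v$, and it is created on both sides. Because joins are now irredundant and are pushed as low as possible, I would argue that a duplicated edge of this form can be removed from one side: delete the join-occurrence responsible for $\{u,v\}$ in, say, the $t_1$-branch (or equivalently suppress the now-redundant join there), keeping the copy in the $t_2$-branch. This requires checking that removing one copy does not destroy any other edge, which follows because each glue-vertex is the unique vertex of its label on each side, so the offending join creates exactly that one edge and can be safely eliminated on one branch.

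\textbf{Property (3) (no isolated glue-vertices).} Finally, for a glue-node $t$ with children $t_1,t_2$ and a glue-vertex $v$ that has no incident edge in, say, $G^\xi_{t_1}$, the copy of $v$ on the $t_1$-side contributes nothing but its label; I would eliminate it by not gluing along $v$ at all on that side, i.e.\ by giving $v$ a fresh label below $t_1$ so that it is no longer a glue-vertex, and relabeling it back above $t$. This preserves the graph since $v$ is isolated in $G^\xi_{t_1}$ and its single occurrence is retained through the $t_2$-side. I expect the main obstacle to be the \textbf{interaction between the three steps}: enforcing one property must not reintroduce violations of another, and the glue-operation's rigidity (each glued vertex being the unique one of its label on both sides) constrains which local moves are legal. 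The cleanest way to manage this is to process the properties in the order $(1)$, then $(2)$, then $(3)$, and to verify that each later step only deletes edges/vertices or relabels locally, hence cannot create a redundant join or a new shared edge; combined with the polynomial bounds from \cref{lem:shift-to-union,lem:fuse-exactly-two-from-different-sides} on the size of the expression, this yields the claimed polynomial running time.
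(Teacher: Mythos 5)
Your treatment of Property (2) matches the paper's argument (the shared edge joins two glue-vertices, the responsible join on one branch creates exactly that one edge by the uniqueness of the glue-vertices' labels, so it can be suppressed), but the other two steps have genuine problems. For Property (1) you propose to push joins \emph{down} the parse tree and to "distribute" a join across a glue-node. This is the wrong direction and the distribution step would fail: a $\eta_{i,j}$-node sitting above a glue-node creates, among others, edges between an $i$-labeled vertex of $G^\xi_{t_1}\setminus G^\xi_{t_2}$ and a $j$-labeled vertex of $G^\xi_{t_2}\setminus G^\xi_{t_1}$, and no operation strictly below the glue-node can create such a cross edge, so the join cannot be moved below the glue without changing the graph. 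The paper does the opposite and much simpler thing: if the child of a $\eta_{i,j}$-node $t$ already contains an $i$--$j$ edge, that edge was created by some \emph{lower} join $t'$, and since labels only merge going up the expression, every edge created by $t'$ is recreated by $t$; hence one simply suppresses $t'$. Each step removes a join-node, which immediately gives termination in at most $|\phi|$ steps, with no potential function needed.

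For Property (3) your fix is incorrect as stated. Giving the isolated copy of $v$ in the $t_1$-branch a fresh label so that it is no longer a glue-vertex does not make that copy disappear: after the glue you obtain \emph{two} distinct vertices (the isolated, freshly labeled one from $t_1$ and the one from $t_2$), because relabeling never identifies vertices in a glue-expression; only the glue-operation does, and only at glue-vertices. The resulting graph therefore gains a spurious isolated vertex, and the construction may also require a $(k+1)$-st label, violating the claim that $\xi$ is a glue-$k$-expression. The correct move, which the paper uses, is to observe that $G^\xi_{t_1}\sqcup G^\xi_{t_2}=(G^\xi_{t_1}-\{v\})\sqcup G^\xi_{t_2}$ when $v$ is isolated in $G^\xi_{t_1}$, and then to \emph{delete} $v$ from the $t_1$-subtree altogether by suppressing the relabel-nodes above each leaf $v\langle i\rangle$ (via \cref{rule:relabel-introduce-1} of \cref{lem:transformation-rules}), removing the leaf, and suppressing the glue-node that thereby becomes unary. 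With these two corrections the order (1), (2), (3) and the claim that later phases only delete material and hence preserve earlier properties go through as you intend.
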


\begin{proof}
    First, we apply \cref{lem:fuse-exactly-two-from-different-sides} to obtain a glue-$k$-expression $\phi$ of the same graph in polynomial time. 
    As in the previous proofs of this section, we will transform $\phi$ iteratively until it satisfies the desired properties.
    
    In the first phase, as long as there is a join-node $t$ and an edge $\{v, w\}$ violating the first property, we proceed as follows. 
    There exists a successor $t'$ of $t$ in $\phi$ such that $t'$ is a $\eta_{i', j'}$-node for some $i', j' \in [k]$, the vertices $v$ and $w$ are the vertices of $G_{t'}$, and it holds that $\lab^\phi_{t'}(v) = i'$ and $\lab^\phi_{t'}(w) = j'$.
    There can be multiple such nodes $t'$ so we fix an arbitrary one.
    We suppress the node $t'$.
    Let $\phi'$ denote the arising expression.
    Note that once two vertices have the same label, this property is maintained along the expression.
    So similarly to the construction of irredundant clique-expressions (see \cite{CourcelleO00}), it holds that every edge $e'$ created by $t'$ is also created by $t$.
    Formally, the following holds.
    Since $t'$ is a successor of $t$, for all $v' \in V(G^\phi_{t'})$ the property $\lab^\phi_{t'}(v') = \lab^\phi_{t'}(v)$ implies $\lab^\phi_{t}(v') = \lab^\phi_{t}(v)$ and hence, also $\lab^{\phi'}_{t}(v') = \lab^{\phi'}_{t}(v)$.
    The analogous statement holds for vertices $w' \in V(G^\phi_{t'})$ with $\lab^\phi_{t'}(w') = \lab^\phi_{t'}(w)$.
    Therefore, the labeled graphs $G^{\phi}$ and $G^{\phi'}$ are isomorphic.
    Now we set $\phi := \phi'$, and the process is repeated until $\phi$ satisfies the first property.
    As mentioned above, the node $t'$ is not necessarily unique for $t$ so after $t'$ is suppressed, $t$ and $\{v, w\}$ can still violate the first property of the lemma.
    The number of join-nodes decreased by one though.
    Therefore, the process terminates after at most $|\phi|$ steps and it results in a glue-$k$-expression of the same labeled graph.
    Clearly, each step takes only polynomial time so the running time of this transformation is polynomial.

    In the second phase, we proceed similarly to satisfy the second property.
    As long as there exist a glue-node $t$ and an edge $\{v, w\} \in E(G^\phi_{t_1}) \cap E(G^\phi_{t_2})$ violating the second property, we proceed as follows.
    Note that $v$ and $w$ are then glue-vertices.
    There exists a successor $t'$ of $t_1$ such that $t'$ is a $\eta_{i', j'}$-node for some $i', j' \in [k]$, the vertices $v$ and $w$ are the vertices of $G_{t'}$, and it holds that $\lab^\phi_{t'}(v) = i'$ and $\lab^\phi_{t'}(w) = j'$.
    We claim that $e$ is then the only edge created by $t'$, i.e., $v$ (resp.\ $w$) is the unique vertex with label $i'$ (resp.\ $j'$) in $G^{\phi}_{t'}$.
    Suppose not, then without loss of generality, there exists a vertex $v' \neq v$ in $G^{\phi}_{t'}$ with label $i'$.
    Then the vertices $v$ and $v'$ would also have the same label in $G^{\phi}_{t_1}$.
    But the glueabilty of $G_{t_1}^\phi$ and $G_{t_2}^{\phi}$ implies that $v$ is the unique vertex with label $(\lab^{\phi}_{t_1})^{-1}(v)$ in $G^\phi_{t_1}$ -- a contradiction.
    Let now $\phi'$ denote the expression arising from $\phi$ by suppressing $t'$.
    Then it holds $G^{\phi'}_{t_1} = G^{\phi}_{t_1} - e$.
    Therefore, $G^{\phi'}_t = G^{\phi}_t$ and also $G^{\phi'} = G^{\phi}$.
    Now we set $\phi := \phi'$ and repeat the process until $\phi$ satisfies the second property.
    Since in each repetition the expression loses one join-node, the process terminates after at most $|\phi|$ steps.
    Also, one step takes only polynomial time so the total running time is also polynomial.
    Clearly, the first condition remains satisfied during the process. 

    Now we move on to the third property.
    Let $t$ and $v$ violate it, i.e., without loss of generality $v$ has no incident edge in $G^\phi_{t_1}$.
    The crucial observation for the transformation described below is that since $v$ also belongs to $G^\phi_{t_2}$, it holds that
    \[
        G^\phi_{t_1} \sqcup G^\phi_{t_2} = (G^\phi_{t_1} - \{v\}) \sqcup G^\phi_{t_2}
    \]
    (in particular, the glueability precondition applies to the right side are as well).
    Hence, we want to transform the sub-expression rooted at $t_1$ so that it does not create the vertex $v$.
    For this, we simply need to cut off all introduce-nodes using title $v$ from this sub-expression.
    Formally, we start with $\phi' := \phi$ and as long as $\phi'_{t_1}$ contains a leaf $\ell$ being a $v\langle i \rangle$-node for some $i \in [k]$:
    \begin{enumerate}
        \item As long as the parent $t'$ of $\ell$ is not a glue-node we repeat the following.
        Since there are no useless nodes, $t'$ is a $\rho_{i \to j}$-node so we apply the rule from \cref{rule:relabel-introduce-1} to suppress it.
        \item Now $t'$ is a glue-node so we remove $\ell$ and suppress $t'$.
    \end{enumerate}
    Note that when the last item is reached, the parent $t'$ is a glue-node whose one side of the input is the graph that consists of a single vertex $v$.
    A simple inductive argument shows that $G^{\phi'}_{t_1} = G^\phi_{t_1} - \{v\}$ holds as desired.
    Now we set $\phi := \phi'$ and repeat until the third property is satisfied.
    Clearly, the first two properties are maintained and since in each repetition the size of the expression decreases by at least one, the process take only polynomial time.
\end{proof}

It is known that for clique-expressions, 
the number of leaves of an expression is equal to the number of vertices in the arising graph.
For fuse-, and hence, glue-expressions, the situation is different.
Since a fuse-node, in general, decreases the number of vertices in the arising graph, the number of leaves in a fuse-expression can be unbounded.
However, we now show that the number of leaves of a \textbf{reduced} glue-expression is bounded by $\mathcal{O}(m+n)$ where $n$ resp.\ $m$ is the number of vertices resp.\ edges of the arising graph.
This will lead to an upper bound of $\mathcal{O}(k^2(m+n))$ on the number of nodes of a reduced glue-expression.

\begin{theorem}\label{app:thm:useful-expression}
    Let $\phi$ be a fuse-$k$-expression of a graph $H$ on $n$ vertices and $m$ edges.
    Then in time polynomial in $|\phi|$ and $k$ we can compute a reduced glue-$k$-expression $\zeta$ of $H$ such that the parse tree of $\zeta$ contains $\mathcal{O}(k^2(m+n))$ nodes.
\end{theorem}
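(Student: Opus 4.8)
The plan is to first invoke \cref{lem:useful-glue-expression} to obtain, in polynomial time, a reduced glue-$k$-expression $\zeta$ of $H$ with no useless nodes, and then bound the number of its nodes. All three reducedness properties are then already available, so the remaining content is purely combinatorial: I would count leaves, glue-nodes, join-nodes, and relabel-nodes separately.

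The heart of the argument is the bound on the number of leaves (introduce-nodes). Since titles are shared across glue-nodes, a single vertex $v$ of $H$ may be introduced several times; let $c_v$ denote the number of leaves of $\zeta$ with title $v$, so that the number of leaves equals $\sum_{v \in V(H)} c_v$. I would prove $c_v \le \deg_H(v) + 1$. To this end, consider the subtree $T_v$ of the parse tree consisting of all nodes $t$ with $v \in V(G^\zeta_t)$; since vertices are never deleted in a glue-expression, $T_v$ is connected, its leaves are exactly the $c_v$ introduce-nodes of $v$, and its only branching nodes (those with two children inside $T_v$) are the glue-nodes at which $v$ is a glue-vertex. As each such glue-node is binary and merges two copies of $v$ into one, there are exactly $c_v - 1$ of them. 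Now the reducedness properties enter: at every branching node $t$ of $T_v$, the third property of \cref{lem:useful-glue-expression} guarantees that $v$ has an incident edge in each of $G^\zeta_{t_1}$ and $G^\zeta_{t_2}$, and the second property guarantees that these two edges are distinct. Placing each of the $\deg_H(v)$ edges incident to $v$ as a token at its unique creation (join-)node, which necessarily lies in $T_v$ and in the appropriate child-subtree, we obtain that both child-subtrees of every branching node of $T_v$ contain a token. A short induction on $T_v$ then shows that a tree in which every branching node has a token in each of its two child-subtrees has at most (number of tokens)$-1$ branching nodes; hence $c_v - 1 \le \deg_H(v)$. Summing over all vertices yields at most $\sum_{v \in V(H)} (\deg_H(v)+1) = 2m + n = \mathcal{O}(m+n)$ leaves.

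Given the leaf bound, the remaining counts are routine. The glue-nodes are the only nodes of out-degree two, so their number is at most the number of leaves, i.e.\ $\mathcal{O}(m+n)$. Contracting every maximal path of unary nodes (join- and relabel-nodes) to a single edge yields a binary tree with $\mathcal{O}(m+n)$ leaves and hence $\mathcal{O}(m+n)$ maximal unary paths. On each such path I would first push all relabels below the joins using the commutation rules \cref{rule:join-relabel-2,rule:join-relabel-3} (together with \cref{rule:relabel-introduce-1} at leaves); the resulting block of relabels realizes a single map on the label set and can be shortened to at most $k$ relabel-nodes, while in the resulting block of joins every unordered pair $\{i,j\}$ need occur at most once by irredundancy (the first property of \cref{lem:useful-glue-expression}), giving at most $\binom{k}{2}$ join-nodes. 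Thus each maximal unary path can be assumed to contain $\mathcal{O}(k^2)$ nodes, for a total of $\mathcal{O}(k^2(m+n))$ unary nodes and $\mathcal{O}(k^2(m+n))$ nodes overall. Each of these rewrites is local, preserves the created graph and the three reducedness properties, does not increase the leaf count, and runs in polynomial time.

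The main obstacle is the leaf bound $c_v \le \deg_H(v)+1$: everything hinges on modelling the copies of $v$ as the leaves of the subtree $T_v$ and on recognising that the second and third properties of reducedness are exactly what force each merge of two copies of $v$ to be paid for by a distinct incident edge. The counting induction itself is short, but it must be set up carefully so that the edges charged against the $c_v - 1$ merges are genuinely the $\deg_H(v)$ edges at $v$ and are not double-counted; this is precisely where edge-disjointness of the two glued sides is used, and it is the one place where an isolated vertex ($\deg_H(v)=0$, forcing $c_v=1$) must be treated as a boundary case.
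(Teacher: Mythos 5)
Your proposal is correct and matches the paper's proof in all essentials: both obtain the $2m+n$ leaf bound by charging repeated introductions of a vertex $v$ to distinct incident edges using the second and third reducedness properties of \cref{lem:useful-glue-expression} (the paper via an explicit injection from leaves to $V(H)\cup E(H)$ defined through the bottommost glue-node gluing $v$, you via a token induction over the branching nodes of the occurrence tree $T_v$ --- two bookkeepings of the same charging argument), and both then bound the glue-nodes by the leaves and compress each unary path between consecutive glue-nodes to $\mathcal{O}(k^2)$ nodes. The only slip is directional: the rules from \cref{rule:join-relabel-2,rule:join-relabel-3} push relabels \emph{above} joins (duplicating joins as needed), and pushing a relabel $\rho_{a\to i}$ below a join $\eta_{i,j}$ is not possible in general, so the achievable normal form on a unary path has the join block below the relabel block rather than the other way around; the $\mathcal{O}(k^2)$ count per path is unaffected.
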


\begin{proof}
    By \cref{lem:useful-glue-expression}, in polynomial time we can compute a reduced glue-$k$-expression $\xi$ of the same labeled graph. 
    Let $L(\xi)$ denote the set of leaves of $\xi$.
    We start by bounding the size of this set.
    Let $H = G^\xi$.
    We will now define a mapping $h: L(\xi) \to V(H) \cup E(H)$ and then show some properties of it.
    Let $\ell \in L(\xi)$ and let $v \in V(H)$ and $i \in [k]$ be such that $\ell$ is a $v\langle i \rangle$-node.
    If no other leaf in $L(\xi)$ has title $v$, then we simply set $h(\ell) = v$.
    Otherwise, there exists at least one other leaf with title $v$.
    Hence, there exists at least one glue-node $t$ such that $v$ is a glue-vertex at $t$.
    Note that any such $t$ is an ancestor of $\ell$.
    So let $g(\ell)$ denote the bottommost among such glue-nodes and let $s_1$ and $s_2$ be its children.
    Without loss of generality, we assume that $\ell$ belongs to the sub-expression rooted at $s_1$.
    The reducedness of $\xi$ implies that there is an edge $e \in E(G^\xi_{s_1}) \setminus E(G^\xi_{s_2})$ incident with $v$. 
    We set $h(\ell) = e$.
    In particular we have $e \in G^\xi_{g(\ell)}$.
    
    Observe that any vertex is mapped by $h$ either to itself or to an incident edge.
    Now let $e$ be an edge of $H$ and let $v$ be one of its end-vertices.
    We claim that there exists at most one leaf with title $v$ mapped to $e$.
    For the sake of contradiction, suppose there exist leaves $\ell_1 \neq \ell_2$ and $i_1, i_2 \in [k]$ such that $\ell_1$ (resp.\ $\ell_2$) is a $v\langle i_1 \rangle$-node (resp.\ $v\langle i_2 \rangle$-node) and $h(\ell_1) = h(\ell_2) = e$. 
    Then let $t$ denote the lowest common ancestor of $g(\ell_1)$ and $g(\ell_2)$.
    Let $q \in [2]$ be arbitrary.
    Let $t_q$ be the child of $t$ such that $g(\ell_q)$ is a (not necessarily proper) descendant of $t_q$.
    The property $h(\ell_q) = e$ implies that $e \in E(G^\xi_{g(\ell_q)}) \subseteq E(G^\xi_{t_q})$ holds.
    Therefore, we obtain $e \in E(G^\xi_{t_1}) \cap E(G^\xi_{t_2})$ contradicting the reducedness of $\xi$.
    So there indeed exists at most one leaf in $\xi$ with title $v$ mapped to $e$.
    Since $e$ has two end-points, there are at most two leaves of $\xi$ mapped to $e$.
    Finally, for every vertex $u$ of $H$ in the image of $h$, we know that there exists exactly one leaf with title $u$ so there exists at most one leaf of $\xi$ mapped to $u$.
    These two properties imply that the size of preimage of $h$, i.e., the size of $L(\xi)$ is bounded by $2m + n$.
    It is folklore that a rooted tree with at most $2m+n$ leaves has $\mathcal{O}(m+n)$ inner nodes with at least two children.
    Hence, $\xi$ has $\mathcal{O}(m+n)$ glue-nodes.
    
    Now we will apply a simple folklore trick (originally applied to clique-expressions)
    to bound the number of nodes between any two consecutive glue-nodes.
    For this, we apply rules from \cref{rule:join-relabel-2,rule:join-relabel-3} of \cref{lem:transformation-rules} to ensure that for any two consecutive glue-nodes $t_1$ and $t_2$ (where $t_1$ is an ancestor of $t_2$), the path from $t_2$ to $t_1$ first contains join-nodes and then relabel-nodes.
    Any duplicate on this path would be useless.
    Therefore, this path contains $\mathcal{O}(k^2)$ relabel- and join-nodes (at most one per possible operation).
    Finally, by applying the rule from \cref{rule:relabel-introduce-1} of \cref{lem:transformation-rules} we can ensure that a parent of any introduce-node is a glue-node.
    Let $\zeta$ be the arising glue-expression.
    Clearly, $\zeta$ is still reduced.
    The number of relabel- and join-nodes in $\zeta$ is now bounded by $\mathcal{O}(k^2(m+n))$ and so the total numbers of nodes is also bounded by this value as claimed.
\end{proof}
    \section{ Algorithms Parameterized by Fusion-Width}\label{app:sec:fw-algorithms}

In this section, we parameterize by fusion-width. 
We will present three algorithms for problems $W[1]$-hard when parameterized by clique-width, these are \textsc{Hamiltonian Cycle}, \textsc{Max Cut}, and \textsc{Edge Dominating Set}.
The algorithms are XP-algorithms and they have the same running time as the tight (under ETH) algorithms parameterized by clique-width.
Fusion-width is upper-bounded by clique-width and this has two implications.
First, the lower bounds from clique-width apply to fusion-width as well, and hence, our algorithms are also ETH-tight.
And second, our results show that these problems can be solved for a larger class of graphs within the same running time.  
Each of the following algorithms gets a fuse-$k$-expression of the input graph and as the very first step transforms it into a reduced glue-$k$-expression of the same graph in polynomial time (cf.\ \cref{lem:useful-glue-expression}).
By \cref{app:thm:useful-expression}, the size of the expression is then linear in the size of the graph. 

\subsection{ \textsc{Max Cut}}\label{app:subsec:maxcut}
In this problem, given a graph $G = (V, E)$ we are asked about the maximum cardinality of $E_G(V_1, V_2)$ over all partitions $(V_1, V_2)$ of $V$.
In this subsection we solve the \textsc{Max Cut} problem in time $n^{\mathcal{O}(k)}$ given a fuse-$k$-expression of a graph. 
For this, we will rely on the clique-width algorithm by Fomin et al.\ and use the same dynamic-programming tables~\cite{FominGLS14}. 
Then, it suffices to only show how to handle glue-nodes.
Later, in the description of the algorithm, we also sketch why general fuse-expressions (i.e., with unrestricted fuse-nodes) seem to be problematic for the algorithm.

Let $H$ be a $k$-labeled graph.
The table $T_H$ contains all vectors $h = (s_1, \dots, s_k, r)$ with $0 \leq s_i \leq |U^H_i|$ for every $i \in [k]$ and $0 \leq r \leq |E(H)|$ for which there exists a partition $(V_1, V_2)$ of $V(H)$ such that $|V_1 \cap U^H_i| = s_i$ for every $i \in [k]$ and there are at least $r$ edges between $V_1$ and $V_2$ in $H$.
We say that the partition $(V_1, V_2)$ \emph{witnesses} the vector $h$.
For their algorithm, Fomin et al.\ provide how to compute these tables for nodes of a $k$-expression of the input graph.
In particular, they show that this table can be computed for a $k$-labeled graph $H$ with $n$ vertices in time $n^{\mathcal{O}(k)}$ in the following cases:
\begin{itemize}
    \item If $H$ consists of a single vertex.
    \item If $H = \rho_{i \to j}(H')$ for some $i, j \in [k]$ and a $k$-labeled graph $H'$ given $T_{H'}$.
    \item If $H = \eta_{i, j}(H')$ for some $i \neq j \in [k]$ and a $k$-labeled graph $H'$ given $T_{H'}$.
\end{itemize}
The correctness of their algorithm requires that a $k$-expression is irredundant, i.e., no join-node creates an already existing edge.
Our extended algorithm will process a reduced glue-expression and the first property in \cref{lem:useful-glue-expression} will ensure that this property holds so the approach of Fomin et el.\ can indeed be adopted for the above three types of nodes.

First, let us mention that processing a fuse-node $\theta_i$ seems problematic (at least using the same records).
Some vertices of label $i$ might have common neighbors. 
So when fusing these vertices, multiple edges fall together but we do not know how many of them so it is unclear how to update the records correctly.
For this reason, we make use of glue-expressions where, as we will see, the information stored in the records suffices.

To complete the algorithm for the fusion-width parameterization, we provide a way to compute the table $T_H$ if $H = H^1 \sqcup H^2$ for two glueable edge-disjoint $k$-labeled graphs $H^1$ and $H^2$ if the tables $T_{H^1}$ and $T_{H^2}$ are provided. 
Let $\{v_1, \dots, v_q\} = V(H^1) \cap V(H^2)$ for some $q \in \NN_0$ and let $i_1, \dots, i_q$ be the labels of $v_1, \dots, v_q$ in $H^1$, respectively.
The glueability implies that for every $j \in [q]$, it holds that $|U^{H^1}_{i_j}| = |U^{H^2}_{i_j}| = 1$.
Hence, for every entry $(s_1, \dots, s_k, r)$ of $T_{H^1}$ and every $j \in [q]$, it also holds that $s_{i_j} \in \{0, 1\}$ with $s_{i_j} = 1$ if and only if $v_j$ is put into $V_1$ in the partition witnessing this entry.
The same holds for the entries in $T_{H^2}$.
This gives the following way to compute the table $T_H$.
It will be computed in a table $S_H$.
We initialize this table to be empty.
Then we iterate through all pairs of vectors $h^1 = (s_1^1, \dots, s_k^1, r^1)$ from $T_{H^1}$ and $h^2 = (s_1^2, \dots, s_k^2, r^2)$ from $T_{H^2}$.
If there is an index $j \in [q]$ such that $s_{i_j}^1 \neq s_{i_j}^2$, then we skip this pair.
Otherwise, for every $0 \leq r \leq r^1 + r^2$, we add to $S_H$ the vector $h = (s_1, \dots, s_k, r)$ where for all $i \in [k]$
\[
    s_i =
    \begin{cases}
        s_i^1 + s_i^2 & i \notin \{i_1, \dots, i_q\} \\
        s_i^1 \cdot s_i^2 & i \in \{i_1, \dots, i_q\}
    \end{cases}
\]
and we call $h^1$ and $h^2$ \emph{compatible} in this case.

\begin{claim}
    The table $S_H$ contains exactly the same entries as $T_H$.
\end{claim}

\begin{claimproof}
    For the one direction, let $h^1 = (s_1^1, \dots, s_k^1, r^1)$ and $h^2 = (s_1^2, \dots, s_k^2, r^2)$ be compatible entries of $T_{H^1}$ and $T_{H^2}$, respectively. 
    And let $(V^1_1, V^1_2)$ and $(V^2_1,V^2_2)$ be the partitions witnessing $h^1$ in $H^1$ and $h^2$ in $H^2$, respectively.
    Also let $0 \leq r \leq r_1 + r_2$.
    We claim that then $(V_1, V_2)$ with $V_1 = V^1_1 \cup V^2_1$ and $V_2 = V^1_2 \cup V^2_2$ is a partition witnessing a vector $h = (s_1, \dots, s_k, r)$ of $S_H$ constructed as above in $H$ so $h$ belongs to $T_H$.
    First, we show that this is a partition of $V(H)$.
    We have
    \[
        V(H) = V(H^1) \cup V(H^2) = (V^1_1 \cup V^1_2) \cup (V^2_1 \cup V^2_2) = (V^1_1 \cup V^2_1) \cup (V^1_2 \cup V^2_2) = V_1 \cup V_2.
    \]
    Since the sets $V^1_1$ and $V^1_2$ are disjoint and also the sets $V^2_1$ and $V^2_2$ are disjoint, we have:
    \[
       V_1 \cap V_2 = (V^1_1 \cap V^2_2) \cup (V^2_1 \cap V^1_2).
    \]
    Any vertex $v$ in $V^1_1 \cap V^2_2$ belongs to both $H^1$ and $H^2$ so there exists an index $j^* \in [q]$ with $v = v_q$.
    The property $v \in V^1_1$ then implies $s^1_{i_j} = 1$ while the property $v \in V^2_2$ implies $s^2_{i_j} = 0$.
    So we obtain $s^1_{i_j} \neq s^2_{i_j}$ contradicting the fact that $h^1$ and $h^2$ are compatible.
    Therefore, $(V^1_1 \cap V^2_2)$ is empty.
    A symmetric argument shows that $(V^2_1 \cap V^1_2)$ is empty as well.
    Hence $V_1$ and $V_2$ are disjoint and they indeed form a partition of $V(H)$.
    Let $j \in [q]$.
    Since $v_j$ is the unique vertex with label $i_j$ in $H$, the set $V_1$ contains exactly one vertex of this label if $s_{i_j}^1 = s_{i_j}^2 = 1$ and zero such vertices if $s_{i_j}^1 = s_{i_j}^2 = 0$. 
    So we have 
    \[
        |V_1 \cap U^{i_j}_H| = s_{i_j}^1 \cdot s_{i_j}^2 = s_{i_j}.
    \]
    For every $i \in [k] \setminus \{i_1, \dots, i_q\}$ the sets $U^{H^1}_i$ and $V(H^2)$ as well as $U^{H^2}_i$ and $V(H^1)$ are disjoint by definition of $\{i_1, \dots, i_q\}$ and therefore:
    \begin{align*}
        |V_1 \cap U^H| = &|V_1 \cap (U^{H^1}_i \cup U^{H^2}_i)| = \\
        & |(V_1 \cap U^{H^1}_i) \cup (V_1 \cap U^{H^2}_i)| = \\
        & |(V_1 \cap U^{H^1}_i)| + |(V_1 \cap U^{H^2}_i)| = \\
        & |(V_1^1 \cap U^{H^1}_i)| + |(V_1^2 \cap U^{H^2}_i)| = \\
        & s_i^1 + s_i^2 = \\
        & s_i.
    \end{align*}
    Finally, we bound the number of edges in $E_H(V_1, V_2)$.
    It holds that $E_{H^b}(V^b_1, V^b_2) \subseteq E_H(V^b_1, V^b_2) \subseteq E_H(V_1, V_2)$ for every $b \in [2]$ so we obtain
    \[
        E_{H^1}(V^1_1, V^1_2) \cup E_{H^2}(V^2_1, V^2_2) \subseteq E_H(V_1, V_2).
    \]
    Recall that the graphs $H^1$ and $H^2$ are edge-disjoint, then we have
    \[
        r \leq r^1 + r^2 \leq |E_{H^1}(V^1_1, V^1_2)| + |E_{H^2}(V^2_1, V^2_2)| \leq |E_H(V_1, V_2)|.
    \]
    So $(V_1, V_2)$ is indeed a partition witnessing $h$ in $H$.
    
    For the other direction, let $h = (s_1, \dots, s_k, r)$ be an entry of $T_H$.
    Then there exists a partition $(V_1, V_2)$ of $V(H)$ witnessing $h$.
    We will show that there exist entries $h^1$ and $h^2$ of $T_{H^1}$ and $T_{H^2}$, respectively, such that the above algorithm adds $h$ to the table $S_H$ at the iteration of $h^1$ and $h^2$.
    We set $V^{j_2}_{j_1} = V_{j_1} \cap V(H^{j_2})$ for $j_1, j_2 \in [2]$.
    Let $j \in [2]$ be arbitrary but fixed.
    Since $(V_1, V_2)$ is a partition of $V(H)$ and we have $V(H^j) \subseteq V(H)$, the pair $(V_1^j, V_2^j)$ is a partition of $V(H^j)$.
    Let $r^j = |E_{H^j}(V_1^j, V_2^j)|$ and let $h^j = (s_1^j, \dots, s_k^j, r^j)$ be the vector such that $(V_1^j, V_2^j)$ witnesses $h^j$.
    First, consider $i \in \{i_1, \dots, i_q\}$.
    Recall that glueability implies that $s_i^j, s_i \in \{0, 1\}$.
    If $s_i = 1$, then we have $v_i \in V_1$ and therefore also $v_i \in V_1^j$ so we obtain $s_i^j = 1$.
    Similarly, if $s_i = 0$, then we have $v_i \in V_2$ and therefore also $v_i \in V_2^j$ so we obtain $s_i^j = 0$.
    Therefore, it holds that $s_i = s_i^1 \cdot s_i^2$.
    Next, consider $i \in [k] \setminus \{i_1, \dots, i_q\}$.
    The sets $U^{H^1}_i$ and $U^{H^2}_i$ are disjoint.
    Therefore, the sets $V_1 \cap U^{H^1}_i = V_1^1 \cap U^{H^1}_i$ and $V_1 \cap U^{H^2}_i = V_1^2 \cap U^{H^2}_i$ partition $V_1 \cap U^H_i$ so we obtain $s_i = s_i^1 + s_i^2$.
    Let $W^1 = V(H^1) \setminus V(H^2)$, $W^2 = V(H^2) \setminus V(H^1)$, and let $I = V(H^1) \cap V(H^2)$.
    For $j_1, j_2 \in [2]$, let $W^{j_1}_{j_2} = W^{j_1} \cap V_{j_2}$ and let $I_{j_2} = I \cap V_{j_2}$.
    Then $W^1_j$, $W^2_j$, and $I_j$ partition $V_j$ for $j \in [2]$.
    The following holds:
    \begin{align*}\label{eq:maxcut-cut-edges}
        & E_H(V_1, V_2) = \\ 
        & E_H(W^1_1, W_2^1) \cup E_H(W^1_1, I_2) \cup E_H(W^1_1, W_2^2) \cup \\
        & E_H(I_1, W_2^1) \cup E_H(I_1, I_2) \cup E_H(I_1, W_2^2) \cup \\
        & E_H(W^2_1, W_2^1) \cup E_H(W^2_1, I_2) \cup E_H(W^2_1, W_2^2) =\\
        & E_H(W^1_1, W_2^1) \cup E_H(W^1_1, I_2) \cup E_H(W^1_1, W_2^2) \cup \\
        & E_H(I_1, W_2^1) \cup \bigl(E_{H^1}(I_1, I_2) \cup E_{H^2}(I_1, I_2)\bigr) \cup E_H(I_1, W_2^2) \cup \\
        & E_H(W^2_1, W_2^1) \cup E_H(W^2_1, I_2) \cup E_H(W^2_1, W_2^2). \\
    \end{align*}
    Therefore, the sets occurring after the second equality are pairwise disjoint so the size of $E_H(V_1, V_2)$ is the sum of their sizes.
    Next, recall that every edge of $H$ is either an edge of $H^1$ or of $H^2$ and therefore, for every edge of $H$, there exists an index $j^*$ such that both end-points of this edge belong to $H^{j*}$.
    Therefore, the sets $E_H(W^1_1, W_2^2)$ and $E_H(W^2_1, W_2^1)$ are empty.
    This also implies the following equalities:
    \begin{align*}
        &E_H(W^1_1, W_2^1) &=&& E_{H^1}(W^1_1, W_2^1) \\
        &E_H(W^1_1, I_2) &= &&E_{H^1}(W^1_1, I_2) \\
        &E_H(I_1, W_2^1) &= &&E_{H^1}(I_1, W_2^1) \\
        &E_H(I_1, W_2^2) &= &&E_{H^2}(I_1, W_2^2) \\
        &E_H(W^2_1, I_2) &= &&E_{H^2}(W^2_1, I_2) \\
        &E_H(W^2_1, W_2^2) &= &&E_{H^2}(W^2_1, W_2^2) 
    \end{align*}
    Now by using these properties we obtain
    \begin{align*}
        &E_H(V_1, V_2) = \\
        &E_{H^1}(W^1_1, W_2^1) \cup E_{H^1}(W^1_1, I_2) \cup \\
        &E_{H^1}(I_1, W_2^1) \cup E_{H^1}(I_1, I_2) \cup E_{H^2}(I_1, I_2) \cup E_{H^2}(I_1, W_2^2) \cup \\
        &E_{H^2}(W^2_1, I_2) \cup E_{H^2}(W^2_1, W_2^2). \\
    \end{align*}
    By rearranging the terms we get
    \begin{align*}
        &E_H(V_1, V_2) = \\
        &\bigl(E_{H^1}(W^1_1, W_2^1) \cup E_{H^1}(W^1_1, I_2) \cup E_{H^1}(I_1, W_2^1) \cup E_{H^1}(I_1, I_2)\bigr) \cup \\
        &\bigl(E_{H^2}(I_1, I_2) \cup E_{H^2}(I_1, W_2^2) \cup E_{H^2}(W^2_1, I_2) \cup E_{H^2}(W^2_1, W_2^2)\bigr). \\
    \end{align*}
    Finally, note that for $j_1, j_2 \in [2]$, the pair $(W^{j_1}_{j_2}, I_{j_2})$ is a partition of $V^{j_1}_{j_2}$.
    So we obtain 
    \[
        E_H(V_1, V_2) = E_{H^1}(V^1_1, V^1_2) \cup E_{H^2}(V^2_1, V^2_2).
    \]
    Since the graphs $H^1$ and $H^2$ are edge-disjoint, we get
    \[
        r^1 + r^2 = |E_{H^1}(V^1_1, V^1_2)| + |E_{H^2}(V^2_1, V^2_2)| = |E_H(V_1, V_2)| = r.
    \]
    Therefore, at the iteration corresponding to $h^1$ and $h^2$ the algorithm indeed adds $h$ to $S_H$.
    This concludes the proof of the correctness of the algorithm.
\end{claimproof}

Observe that if a graph $H$ has $n$ nodes, the table $T_H$ contains $n^{\mathcal{O}(k)}$ entries.
Therefore, this table can be computed from $T_{H^1}$ and $T_{H^2}$ in time $n^{\mathcal{O}(k)}$ as well.
This results in an algorithm that given a graph $H$ together with a reduced glue-$k$-expression $\xi$ of $H$, traverses the nodes $x$ of $\xi$ in a standard bottom-up manner and computes the tables $T_{G^\xi_x}$ in time $n^{\mathcal{O}(k)}$.
Let $y$ denote the root of $\xi$. 
Then $G^\xi_y$ is exactly the graph $H$ so we output the largest integer $r$ such that $T_{G^\xi_y}$ contains an entry $(s_1, \dots, s_k, r)$ for some $s_1, \dots, s_k \in \NN_0$.
By definition, this value is then the size of the maximum cardinality cut of the graph $H$.

\begin{theorem}
    Given a fuse-$k$-expression of a graph $H$, the \textsc{Max Cut} problem can be solved in time $n^{\mathcal{O}(k)}$.
\end{theorem}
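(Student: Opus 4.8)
The plan is to turn the given fuse-$k$-expression into a structured object on which the \textsc{Max Cut} dynamic program of Fomin et al.\ can be run node by node. First I would invoke \cref{app:thm:useful-expression} to transform the fuse-$k$-expression of $H$ into a reduced glue-$k$-expression $\xi$ of the same graph in polynomial time; by that theorem the parse tree of $\xi$ has only $\mathcal{O}(k^2(m+n))$ nodes, which is polynomial in the size of $H$. The algorithm then computes, for every node $x$ of $\xi$, the table $T_{G^\xi_x}$ defined above, traversing $\xi$ from the leaves to the root.

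For the base case (a single-vertex graph) and for relabel- and join-nodes I would directly reuse the procedures of Fomin et al.~\cite{FominGLS14}. Their correctness requires that no join creates an already-present edge, which is exactly guaranteed by the first property of a reduced glue-expression in \cref{lem:useful-glue-expression}, so their computation transfers verbatim and runs in $n^{\mathcal{O}(k)}$ per node. The only genuinely new case is a glue-node $x$ with $G^\xi_x = H^1 \sqcup H^2$, and here I would apply the merging procedure analyzed above: iterate over all compatible pairs of entries $h^1 \in T_{H^1}$ and $h^2 \in T_{H^2}$ and, for each admissible $r$, insert the combined vector into $S_H$. The preceding claim establishes $S_H = T_H$, so this step is correct.

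The reason the glue-operation is tractable --- and the conceptual obstacle that reduced glue-expressions are designed to overcome --- is precisely the edge-collapse problem for raw fuse-nodes highlighted in the text: fusing several vertices of one label may merge an \emph{a priori} unknown number of edges, so the records cannot be updated. In a reduced glue-expression this difficulty disappears, because glueability forces each glue-vertex to be the unique vertex of its label in each side, and reducedness makes $H^1$ and $H^2$ edge-disjoint; consequently the cut edges split cleanly as $E_H(V_1,V_2) = E_{H^1}(V^1_1,V^1_2)\,\dot\cup\,E_{H^2}(V^2_1,V^2_2)$, which is exactly what the claim exploits. I therefore expect no obstacle beyond this glue-node analysis, which is already settled.

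Finally, since each table $T_{G^\xi_x}$ has at most $n^{\mathcal{O}(k)}$ entries and every node --- base, relabel, join, or glue --- is processed in time $n^{\mathcal{O}(k)}$, while $\xi$ has only polynomially many nodes, the whole computation runs in $n^{\mathcal{O}(k)}$ time. At the root $y$ we have $G^\xi_y = H$, so I would output the largest $r$ for which $T_{G^\xi_y}$ contains an entry $(s_1,\dots,s_k,r)$; by definition of the table this value equals the maximum cut size of $H$, completing the proof.
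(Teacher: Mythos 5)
Your proposal is correct and follows essentially the same route as the paper: convert the fuse-$k$-expression into a reduced glue-$k$-expression via \cref{app:thm:useful-expression}, reuse the Fomin et al.\ procedures for introduce-, relabel-, and join-nodes (justified by the first reducedness property), handle glue-nodes with the compatible-pair merging whose correctness is the preceding claim, and read off the answer at the root. The running-time accounting and the observation that edge-disjointness plus glueability make the cut edges split cleanly are exactly the points the paper makes as well.
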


Fomin et al.\ have also shown the following lower bound:
\begin{theorem}\cite{FominGLS14}
    Let $H$ be an $n$-vertex graph given together with a $k$-expression of $H$. 
    Then the \textsc{Max Cut} problem cannot be solved in time $f(k) \cdot n^{o(k)}$ for any computable function $f$ unless the ETH fails.
\end{theorem}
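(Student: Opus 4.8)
The plan is to prove this lower bound by reduction from a problem that provably admits no $f(k)\cdot n^{o(k)}$ algorithm under ETH. A natural starting point is \textsc{$k$-Clique} (or its \textsc{Multicolored Clique} variant): by the result of Chen et al., an algorithm deciding whether an $n$-vertex graph contains a clique on $k$ vertices in time $f(k)\cdot n^{o(k)}$ would refute ETH. Given such an instance $(G,k)$, I would construct in polynomial time a simple graph $G'$, an integer bound $B$, and an explicit $g(k)$-expression of $G'$ with $g(k)=\mathcal{O}(k)$ and $|V(G')|=n^{\mathcal{O}(1)}$, such that $G'$ has a cut of cardinality at least $B$ if and only if $G$ contains a clique on $k$ vertices.

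The reduction uses two kinds of gadgets. First, for each of the $k$ colour classes there is a \emph{selection gadget} whose crossing-edge contribution is calibrated so that any cut attaining the target is forced to ``choose'' exactly one vertex of that class; intuitively, the only way to realise the optimal number of crossing edges inside a gadget is a bipartition that singles out one vertex. Second, a \emph{verification} part awards a fixed number of extra crossing edges for every chosen pair of vertices that is adjacent in $G$, so that the global maximum $B$ is attainable precisely when all $\binom{k}{2}$ chosen pairs are edges, i.e.\ when the selected vertices form a clique. Since \textsc{Max Cut} here is unweighted, the edge multiplicities needed for this calibration are simulated by bundles of length-two paths (equivalently, twin copies of vertices), which can be created using only $\mathcal{O}(1)$ additional labels.

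The crucial point, and the main obstacle, is to realise all of this with only $\mathcal{O}(k)$ labels. The $k$ selection gadgets must be built ``in parallel'' so that a constant number of labels per class suffices, and the joins installing the verification edges have to respect this budget while still connecting the correct vertices across classes. This is delicate because \textsc{Max Cut} offers no direct handle on \emph{which} vertex each gadget selects: the combinatorial constraints (one vertex per class, pairwise adjacency) must be encoded purely through the arithmetic of cut sizes, and yet the resulting labelled structure must remain simple enough to admit an $\mathcal{O}(k)$-expression. Once the gadgets are designed, writing down the clique-expression explicitly certifies $\cw(G')=\mathcal{O}(k)$.

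Finally I would compose. Assume, for contradiction, that \textsc{Max Cut} can be solved in time $f(k)\cdot n^{o(k)}$ when a $k$-expression is given. Running the reduction on $(G,k)$ and then this algorithm on $(G',B)$ together with the computed $\mathcal{O}(k)$-expression decides \textsc{$k$-Clique} in time $f(\mathcal{O}(k))\cdot (n^{\mathcal{O}(1)})^{o(k)} = h(k)\cdot n^{o(k)}$ for some computable $h$, contradicting the ETH lower bound for \textsc{$k$-Clique}. Hence no such algorithm for \textsc{Max Cut} parameterized by clique-width exists unless ETH fails.
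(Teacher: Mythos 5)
This statement is not proven in the paper at all: it is imported verbatim from Fomin, Golovach, Lokshtanov and Saurabh \cite{FominGLS14}, and the paper's only use of it is to transfer the bound from $k$-expressions to fuse-$k$-expressions (which is immediate, since every $k$-expression is a fuse-$k$-expression). So the relevant comparison is between your plan and the cited proof. Your high-level strategy — reduce from \textsc{Clique}/\textsc{Multicolored Clique}, which has no $f(k)\cdot n^{o(k)}$ algorithm under ETH by Chen et al., build a target graph of clique-width $\mathcal{O}(k)$ with an explicit expression, and encode selection and verification purely in the arithmetic of cut sizes — is indeed the shape of the argument in \cite{FominGLS14}, and your final composition step (including the bookkeeping that $g(k)=\mathcal{O}(k)$ preserves the $n^{o(k)}$ bound) is sound.

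However, there is a genuine gap: the entire technical content of the proof is the part you defer. You yourself identify the obstacle — \textsc{Max Cut} has no hard constraints, so ``exactly one vertex per colour class'' and ``all $\binom{k}{2}$ selected pairs are adjacent'' must both be enforced solely by calibrating a global count of crossing edges, and simultaneously the construction must admit an $\mathcal{O}(k)$-expression — but you do not resolve it. A selection gadget whose \emph{only} optimal bipartitions single out one vertex is not a routine object: naive calibrations admit spurious optima (e.g.\ cuts that select zero or two vertices in one class while compensating elsewhere), and ruling these out requires a careful exchange/counting argument over all bipartitions, which is where the cited proof does its real work. Likewise, ``bundles of length-two paths simulate multiplicities with $\mathcal{O}(1)$ extra labels'' is plausible but unverified, and the claim that the verification joins between all $\binom{k}{2}$ class pairs fit within an $\mathcal{O}(k)$ label budget needs an explicit expression to certify. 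As written, the proposal is a correct outline of the known approach rather than a proof; none of its load-bearing steps are established.
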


Since any $k$-expression of a graph is, in particular, a fuse-$k$-expression of the same graph, the lower bound transfers to fuse-$k$-expressions as well thus showing that our algorithm is tight under ETH.
\begin{theorem}
    Let $H$ be an $n$-vertex graph given together with a fuse-$k$-expression of $H$. 
    Then the \textsc{Max Cut} problem cannot be solved in time $f(k) \cdot n^{o(k)}$ for any computable function $f$ unless the ETH fails.
\end{theorem}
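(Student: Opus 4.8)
The plan is to obtain this lower bound essentially for free from the already-established clique-width lower bound of Fomin et al.\ (the immediately preceding theorem) by exploiting the fact that clique-expressions form a syntactic subclass of fuse-expressions. Concretely, I would first record the simple observation that every $k$-expression $\phi$ of a graph $H$ is, verbatim, also a fuse-$k$-expression of $H$: it merely happens to use none of the additional fuse-operations $\theta_i$. Since the four operations introduce, union, join, and relabel are available in both formalisms with identical semantics, no transformation of $\phi$ is needed, and in particular both $|\phi|$ and the number of labels $k$ are left unchanged.

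Given this embedding, I would argue by contradiction. Suppose there were an algorithm $A$ and a computable function $f$ such that $A$ solves \textsc{Max Cut} in time $f(k) \cdot n^{o(k)}$ whenever it is supplied with a fuse-$k$-expression of an $n$-vertex input graph. I would then feed $A$ an arbitrary $k$-expression $\phi$ of $H$, viewing $\phi$ as a fuse-$k$-expression via the observation above. The algorithm $A$ would compute the maximum cut of $H$ in time $f(k) \cdot n^{o(k)}$, which is precisely an algorithm for \textsc{Max Cut} parameterized by clique-width with running time $f(k) \cdot n^{o(k)}$. This contradicts the cited theorem of Fomin et al., under the assumption that ETH holds, which completes the argument.

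There is essentially no technical obstacle here: the entire content is the syntactic inclusion of the two classes of expressions, and the parameter value $k$, the instance size $n$, and the size $|\phi|$ of the expression are all preserved exactly under this inclusion. The only subtlety worth stating explicitly is that the hard instances produced in the clique-width reduction come equipped with an actual $k$-expression (rather than merely a promise that clique-width is bounded), so that the very same witnesses serve directly as fuse-$k$-expression instances and no expression needs to be recomputed or approximated. This is exactly why the lower bound transfers from the more restricted formalism of clique-expressions to the more general one of fuse-expressions.
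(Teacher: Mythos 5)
Your proposal is correct and follows exactly the paper's own (one-line) argument: a $k$-expression is syntactically a fuse-$k$-expression with the same $k$ and $n$, so the ETH lower bound of Fomin et al.\ for clique-width transfers verbatim. Your extra remark that the hard instances come equipped with an explicit $k$-expression is a reasonable clarification, but the substance of the argument is identical to the paper's.
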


\subsection{ \textsc{Edge Dominating Set}}\label{app:subsec:eds}

In this problem, given a graph $G = (V, E)$ we are asked about the cardinality of a minimum set $X \subseteq E$ such that every edge in $E$ either belongs to $X$ itself or it has an incident edge in $X$.
In this section, we provide a way to handle the glue-nodes in order to solve the \textsc{Edge Dominating Set} problem.
As in the previous subsection, we rely on the dynamic programming algorithm for the clique-width parameterization by Fomin et al.\ and use their set of records defined as follows.
For a $k$-labeled graph $H$, the table $T_H$ contains all vectors $(s_1, \dots, s_k, r_1, \dots, r_k, \ell)$ of non-negative integers such that there exists a set $S \subseteq E(H)$ and a set $R \subseteq V(H) \setminus V(S)$ with the following properties:
\begin{itemize}
    \item $|S| \leq \ell \leq |E(H)|$;
    \item for every $i \in [k]$, exactly $s_i$ vertices of $U^H_i$ are incident with the edges of $S$;
    \item for every $i \in [k]$, we have $|R \cap U^H_i| = r_i$;
    \item every edge of $H$ undominated by $S$ has an end-vertex in $R$.
\end{itemize}
We say that the pair $(S, R)$ witnesses the vector $(s_1, \dots, s_k, r_1, \dots, r_k, \ell)$ in $H$.
The last property reflects that it is possible to attach a \emph{pendant} edge to every vertex in $R$ so that the set $S$ together with these pendant edges dominates all edges of $H$.
In the following, we will sometimes use this view in our arguments and denote the set of edges pendant at vertices of $R$ by $E^R$.
Note that since no vertex incident with $S$ belongs to $R$, we have $s_i + r_i \leq |U^H_i|$ for any $i \in [k]$.
In particular, for every $i \in \{i_1, \dots, i_q\}$, the property $r_i = 1$ implies $s_i = 0$ and therefore
\begin{equation}\label{eq:s-i-r-i-implication}
    r_i \land \neg s_i = r_i.
\end{equation}

For their algorithm, Fomin et al.\ provide how to compute these tables for nodes of a $k$-expression of the input graph.
In particular, they show that this table can be computed for a $k$-labeled graph $H$ with $n$ vertices in time $n^{\mathcal{O}(k)}$ in the following cases:
\begin{itemize}
    \item If $H$ consists of a single vertex.
    \item If $H = \rho_{i \to j}(H')$ for some $i, j \in [k]$ and a $k$-labeled graph $H'$ given the table $T_{H'}$.
    \item If $H = \eta_{i, j}(H')$ for some $i \neq j \in [k]$ and a $k$-labeled graph $H'$ given the table $T_{H'}$.
\end{itemize}

Similarly to the previous subsection, let us mention that processing a fuse-node $\theta_i$ seems problematic (at least using the same records).
Some vertices of label $i$ might have common neighbors. 
So when fusing these vertices, multiple edges of the set $S$ of a partial solution fall together but we do not know how many of them so it is unclear how to update the records correctly.
For this reason, we make use of glue-expressions where, as we will see, the information stored in the records suffices.

To complete the algorithm for the fusion-width parameterization, we provide a way to compute the table $T_H$ if $H = H^1 \sqcup H^2$ for two glueable edge-disjoint $k$-labeled graphs $H^1$ and $H^2$ if the tables $T_{H^1}$ and $T_{H^2}$ are provided. 
Let $\{v_1, \dots, v_q\} = V(H^1) \cap V(H^2)$ for some $q \in \NN_0$ and let $i_1, \dots, i_q$ be the labels of $v_1, \dots, v_q$ in $H^1$, respectively.
Then for every $j \in [q]$, it holds that $|U^{H^1}_{i_j}| = |U^{H^2}_{i_j}| = 1$.
Hence, for every entry $(s_1, \dots, s_k, r_1, \dots, r_k, \ell)$ of $T_{H^1}$ and every $j \in [q]$, it holds that $s_{i_j} + r_{i_j} \leq 1$.
The same holds for the entries in $T_{H^2}$.
This motivates the following way to compute the table $T_H$.
It will be computed in a table $S_H$.
We initialize this table to be empty.
Then we iterate through all pairs of vectors $h^1 = (s_1^1, \dots, s_k^1, r_1^1, \dots, r_k^1, \ell^1)$ from $T_{H^1}$ and $h^2 = (s_1^2, \dots, s_k^2, r_1^2, \dots, r_k^2, \ell^2)$ from $T_{H^2}$ and for every $\ell^1 + \ell^2 \leq \ell \leq |E(H)|$, we add the vector $(s_1, \dots, s_k, r_1, \dots, r_k, \ell)$ defined as follows.
For every $i \in [k] \setminus \{i_1, \dots, i_q\}$, it holds that $s_i = s_i^1 + s_i^2$ and $r_i = r_i^1 + r_i^2$.
And for every $i \in \{i_1, \dots, i_q\}$, it holds that $s_i = s_i^1 \lor s_i^2$ and $r_i = \neg s_i^1 \land \neg s_i^2 \land (r_i^1 \lor r_i^2)$.

\begin{claim}
    The table $S_H$ contains exactly the same entries as $T_H$.
\end{claim}

\begin{claimproof}
    For the one direction, let $h^1 = (s_1^1, \dots, s_k^1,$ $r_1^1, \dots, r_k^1, \ell^1)$ be an entry of $T_{H^1}$ and $h^2 = (s_1^2, \dots, s_k^2, r_1^2, \dots, r_k^2, \ell^2)$ be an entry of $T_{H^2}$.
    So for $j \in [2]$, there exists a pair $(S^j, R^j)$ witnessing $h^j$ in $H^j$.
    Also let $\ell \in \NN_0$ be such that $\ell^1 + \ell^2 \leq \ell \leq |E(H)|$ and let $h = (s_1, \dots, s_k, r_1, \dots, r_k, \ell)$ be the entry constructed by the algorithm from $h^1$ and $h^2$.
    We now show how to construct a pair $(S, R)$ witnessing $h$ in $H$.
    First, let $S = S^1 \cup S^2$.
    Then we have
    \[
        |S| \leq |S^1| + |S^2| \leq \ell^1 + \ell^2 \leq \ell.
    \]
    Now for every $i \in [k]$, we determine the number $s_i'$ of vertices in $U^H_i = U^{H^1}_{i_1} \cup U^{H^2}_{i_2}$ incident with an edge of $S$. 
    For $i \in [k] \setminus \{i_1, \dots, i_q\}$, the sets $U^{H^1}_{i_1}$ and $U^{H^2}_{i_2}$ are disjoint so we obtain $s_i' = s^1_i + s^2_i = s_i$.
    For every $j \in \{1, \dots, q\}$, the value $s_{i_j}'$ reflects whether the vertex $v_j$ has an incident edge in $S$. 
    Similarly, the values $s_{i_j}^1$ and $s_{i_j}^2$ reflect whether $v_j$ has an incident edge in $S^1$ and $S^2$, respectively.
    Due to $S = S^1 \cup S^2$, we obtain $s_{i_j}' = s_{i_j}^1 \lor s_{i_j}^2 = s_{i_j}$.
    Altogether, we obtain $s_i = s_i'$ for every $i \in [k]$.
    
    Next we set $R = (R^1 \cup R^2) \setminus V(S)$.
    Now for every $i \in [k]$, let $r_i'$ denote the size of $R \cap U^H_i$, i.e., the number of vertices with label $i$ that have a pendant edge attached to it.
    Recall that we have $U^H_i = U^{H^1}_i \cup U^{H^2}_i$.
    First, consider $i \in [k] \setminus \{i_1, \dots, i_q\}$.
    In this case, the sets $U^{H^1}_i$ and $U^{H^2}_i$ are disjoint.
    We claim that in this case we simply have $R \cap U^H_i = (R^1 \cup R^2) \cap U^H_i$.
    Consider a vertex $v \in R^1 \cap U^{H^1}_i$.
    Since $(S^1, R^1)$ witnesses $h^1$ in $H$, the vertex $v$ has no incident edge in $S^1$ and since $v$ does not belong to $H^2$, it also has no incident edge in $S^2$.
    So $v$ has no incident edge in $S$ and therefore belongs to $R$.
    The symmetric argument shows that the vertices of $R^2 \cap U^{H^2}_i$ belong to $R$.
    So we obtain $R \cap U^H_i = (R^1 \cup R^2) \cap U^{H^1}_i$ and since the sets $R^1 \cap U^{H^1}_i$ and $R^2 \cap U^{H^2}_i$ are disjoint, we get 
    \[
        |R \cap U^H_i| = |R^1 \cap U^{H^1}_i| + |R^2 \cap U^{H^2}_i| = r^1_i + r^2_i = r_i.
    \]
    Now let $j \in [q]$.
    Recall that there exists a unique vertex $v_j \in U^H_{i_j}$.
    Also, the vertex $v_j$ is the unique vertex in the set $U^{H^1}_{i_j}$ as well as in $U^{H^2}_{i_j}$.
    By construction, this vertex belongs to $R$ if and only if it belongs to $R^1 \cup R^2$ and has no incident edge in $S$, i.e.,
    \[
        r_{i_j}' = (r^1_{i_j} \lor r^2_{i_j}) \land \neg s_{i_j} =
        (r^1_{i_j} \lor r^2_{i_j}) \land \neg (s^1_{i_j} \lor s^2_{i_j}) = 
        (r^1_{i_j} \lor r^2_{i_j}) \land \neg s^1_{i_j} \land \neg s^2_{i_j}
        = r_{i_j}.
    \]
    So we obtain $r_i' = r_i$ for every $i \in [k]$.
    
    It remains to prove that pendant edges from $R^E$ dominate all edges of $E(H)$ undominated by $S$.
    So let $e$ be an edge of $E(H)$ undominated $S$.
    Without loss of generality, assume that $e$ belongs to $H^1$.
    First, $e$ is an edge of $H^1$ undominated by $S^1$ and therefore, it has an end-point $v$ in $R^1$.
    Second, since $e$ is not dominated by $S$, in particular, the vertex $v$ has no incident edge in $S$ and therefore, by construction, the edge $e$ also belongs to $R$ as desired.
    Altogether, we have shown that $(S, R)$ witnesses $h$ in $H$ and therefore, the vector $h$ belongs to $T_H$.
    
    For the other direction, we consider a vector $h = (s_1, r_1, \dots, s_k, r_k, \ell)$ from $T_H$.
    Let $(S, R)$ be the pair witnessing $h$ in $H$.
    For $j \in [2]$, let $S^j = S \cap E(H^j)$ and let $\ell^j = |S^j|$.
    We then have $\ell^j \leq |E(H^j)|$.
    Since the graphs $H^1$ and $H^2$ are edge-disjoint, we obtain
    \[
        \ell^1 + \ell^2 = |S^1| + |S^2| = |S| \leq \ell.
    \]
    For $i \in [k]$ and $j \in [2]$, let $s_i^j = |S^j \cap U^{H^j}_i|$.
    First, let $i \in [k] \setminus \{i_1, \dots, i_q\}$, $j \in [2]$, and let $v$ be a vertex from $U^H_i \cap V(H^j)$.
    Recall that $U^{H^1}_i$ and $U^{H^2}_i$ are disjoint.
    Then by construction, the vertex $v$ has an incident edge in $S$ if and only if it has one in $S^j$.
    This, together with, again, the disjointness of $U^{H^1}_i$ and $U^{H^2}_i$ implies that $s_i = s_i^1 + s_i^2$ holds.
    Now let $j \in [q]$.
    Then the unique vertex $v_j \in U^H_{i_j}$ has an incident edge in $S$ if and only if it has an incident edge in $S^1$ or in $S^2$, i.e., we have $s_{i_j} = s^1_{i_j} \lor s^2_{i_j}$.
    
    Now we construct the sets $R^1$ and $R^2$ from $S$ and $R$ as follows.
    For $j \in [2]$, we set
    \[
        R^j = \bigl(R \cap V(H^j)\bigr) \cup \{v_p \mid p \in [q], s_{i_p} = 1, s_{i_p}^j = 0\}.
    \]
    And for $i \in [k]$ and $j \in [2]$, we set $r^j_i = R^j \cap U^{H^j}_i$.
    Observe that for $i \in \{i_1, \dots, i_q\}$ and $j \in [2]$, we then have 
    \begin{equation}\label{eq:property-set-R-j-glue}
        r^j_i = r_i \lor (s_i \land \neg s_i^j)
    \end{equation}
    and for $i \in [k] \setminus \{i_1, \dots, i_1\}$, we have \begin{equation}\label{eq:property-set-R-j-not-glue}
        r_i = r^1_i + r^2_i
    \end{equation}
    since $U^{H^1}_i$ and $U^{H^2}_i$ are disjoint.
    
    Let now $j \in [2]$ be arbitrary but fixed.
    We show that the edges of $S^j$ together with pendant edges at vertices in $R^j$ dominate all edges of $H^j$.
    So let $e$ be an edge of $H^j$. 
    Since $e$ is also an edge of $H$, it is dominated by $S \cup E^R$.
    So there exists an end-vertex $u$ of $e$ such that $u$ is incident with an edge in $S$ or $u$ belongs to $R$.
    If $u$ belongs to $R$, then by construction $u$ also belongs to $R^j$ and so $e$ is dominated by a pendant edge from $E^{R^j}$.
    So we now may assume that $u$ does not belong to $R$ and it has an incident edge $e'$ of $H$ in $S$.
    First, assume that we have $u \notin \{v_1, \dots, v_q\}$.
    Since $u$ is not a glue-vertex, any edge of $H$ incident with $u$ must be an edge of $H^j$, i.e., we have $e' \in S \cap E(H^j) = S^j$.
    So $e$ is dominated by $S^j$.
    Now we may assume that $u \in \{v_1, \dots, v_q\}$ holds and let $p \in [q]$ be such that $u = v_p$.
    Suppose $e$ is not dominated by $S^j$ and the edges pendant at $R^j$.
    In particular, it implies that $u$ does not belong to $R^j$.
    Since $u = v_p$ is the unique vertex in $U^{H^j}_{i_p}$, we have $s^j_{i_p} = 0$ and $r^j_{i_p} = 0$.
    Recall that by the above assumption, the vertex $u$ has an incident edge in $S$, i.e., $s_{i_p} = 1$.
    But this contradicts the equality \cref{eq:property-set-R-j-glue}.
    Thus, the edge $e$ is dominated by $S^j \cup E^{R^j}$.
    Since $e$ was chosen arbitrarily, this holds for any edge of $H^j$.
    Altogether, we have shown that $(S^j, R^j)$ witnesses $h^j$ in $H^j$ and therefore, $h^j$ is an entry of $T_{H^j}$.
    
    It remains to show that at the iteration corresponding to $h^1$ and $h^2$, the algorithm adds $h$ to $S_H$.
    So let $(s_1', \dots, s_k', r_1', \dots, r_k', \ell')$ be an entry added to $S_H$ such that $\ell' = \ell$ holds.
    Above we have shown that $\ell^1 + \ell^2 \leq \ell$ holds so such an entry indeed exists.
    Also, we have already shown that $s_i = s_i^1 + s_i^2$ for any $i \in [k] \setminus \{i_1, \dots, i_q\}$ and $s_i = s_i^1 \lor s_i^2$ for any $i \in \{i_1, \dots, i_q\}$.
    So it holds that $s_i = s_i'$ for any $i \in [k]$.
    It remains to show that $r_i = r_i'$ holds for any $i \in [k]$ as well.
    Recall that by the construction of the algorithm, we have $r_i' = r_i^1 + r_i^2$ for any $i \in [k] \setminus \{i_1, \dots, i_q\}$.
    The equality \eqref{eq:property-set-R-j-not-glue} then implies that $r_i = r_i'$ holds.
    For $i \in \{i_1, \dots, i_q\}$ the algorithm sets $r_i' = \neg s_i^1 \land \neg s_i^2 \land (r_i^1 \lor r_i^2)$.
    We then obtain
    \begin{align*}
        r_i' = &\neg s_i^1 \land \neg s_i^2 \land (r_i^1 \lor r_i^2) \stackrel{\eqref{eq:property-set-R-j-glue}}{=} \\
        &\neg s_i^1 \land \neg s_i^2 \land \Bigl(\bigl(r_i \lor [s_i \land \neg s_i^1]\bigr) \lor \bigl(r_i \lor [s_i \land \neg s_i^2]\bigr)\Bigr) \stackrel{s_i = s_i^1 \lor s_i^2}{=} \\
        &\neg s_i^1 \land \neg s_i^2 \land \Bigl(\bigl(r_i \lor [(s_i^1 \lor s_i^2) \land \neg s_i^1]\bigr) \lor \bigl(r_i \lor [(s_i^1 \lor s_i^2) \land \neg s_i^2]\bigr)\Bigr) =\\
        &\neg s_i^1 \land \neg s_i^2 \land \Bigl(\bigl(r_i \lor (s_i^2 \land \neg s_i^1)\bigr) \lor \bigl(r_i \lor (s_i^1 \land \neg s_i^2)\bigr)\Bigr) =\\
        &\neg s_i^1 \land \neg s_i^2 \land \Bigl(r_i \lor (s_i^2 \land \neg s_i^1) \lor (s_i^1 \land \neg s_i^2)\Bigr) =\\
        &\neg s_i^1 \land \neg s_i^2 \land r_i =\\
        &\neg (s_i^1 \lor s_i^2) \land r_i =\\
        &\neg s_i \land r_i \stackrel{\eqref{eq:s-i-r-i-implication}}{=}\\
        &r_i
    \end{align*}
    So we indeed obtain $r_i = r_i'$ for every $i \in [k]$.
    Therefore, at the iteration corresponding to the entries $h^1$ and $h^2$ of $T_{H^1}$ and $T_{H^2}$, respectively, the algorithm indeed adds the entry $h$ to $S_H$.
    Altogether, we obtain that $S_H = T_H$ and the provided algorithm indeed computes the table $T_H$ given the tables $T_{H^1}$ and $T_{H^2}$.
\end{claimproof}

Observe that if a graph $H$ has $n$ nodes, the table $T_H$ contains $n^{\mathcal{O}(k)}$ entries.
Therefore, this table can be computed from $T_{H^1}$ and $T_{H^2}$ in time $n^{\mathcal{O}(k)}$ as well.
This results in an algorithm that given a graph $H$ together with a reduced glue-$k$-expression $\xi$ of $H$, traverses the nodes $x$ of the expression in a standard bottom-up manner and computes the tables $T_{G^\xi_x}$ in time $n^{\mathcal{O}(k)}$.
Let $y$ denote the root of $\xi$. 
Then $G^\xi_y$ is exactly the graph $H$. 
As noted by Fomin et al., the size of the minimum edge dominating set of $H$ is the smallest integer $\ell$ such that the table $T_{G^\xi_y}$ contains an entry $(s_1, \dots, s_k, 0, \dots, 0, \ell)$ for some $s_1, \dots, s_k \in \NN_0$.
So the algorithm outputs this value. 

\begin{theorem}
    Given a fuse-$k$-expression of a graph $H$, the \textsc{Edge Dominating Set} problem can be solved in time $n^{\mathcal{O}(k)}$.
\end{theorem}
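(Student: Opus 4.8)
The plan is to assemble the algorithm from two ingredients that are already in place: the dynamic-programming tables of Fomin et al.\ for the introduce-, relabel-, and join-nodes, and the merging procedure for glue-nodes whose correctness is the content of the preceding claim. First I would take the given fuse-$k$-expression $\phi$ of $H$ and, invoking \cref{app:thm:useful-expression}, transform it in time polynomial in $|\phi|$ and $k$ into a \emph{reduced} glue-$k$-expression $\xi$ of $H$ whose parse tree has $\mathcal{O}(k^2(m+n))$ nodes. Reducedness is exactly what makes both ingredients applicable: the first property of \cref{lem:useful-glue-expression} guarantees that no join-node recreates an already existing edge, so the routines of Fomin et al.\ remain correct, while the second and third properties—edge-disjointness of the two sides of a glue, and incidence of every glue-vertex in both sides—are precisely the hypotheses under which the claim above shows $S_H = T_H$.

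Next I would process the nodes of $\xi$ bottom-up, computing for each node $x$ the table $T_{G^\xi_x}$. For an introduce-node, a relabel-node $\rho_{i \to j}$, or a join-node $\eta_{i, j}$ I would use the corresponding computation of Fomin et al.; for a glue-node with $H = H^1 \sqcup H^2$ I would use the merging procedure from the claim, combining $T_{H^1}$ and $T_{H^2}$ into $T_H$. Since each coordinate $s_i$ and $r_i$ ranges over $\{0, \dots, n\}$ and $\ell$ ranges over $\{0, \dots, m\}$, every table has $n^{\mathcal{O}(k)}$ entries, and the table at any node is computed from the table(s) of its child (children) in time $n^{\mathcal{O}(k)}$. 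As there are only $\mathcal{O}(k^2(m+n))$ nodes, the overall running time is $n^{\mathcal{O}(k)}$.

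Finally, letting $y$ be the root of $\xi$, the graph $G^\xi_y$ is $H$ itself, and—as already observed—the size of a minimum edge dominating set of $H$ is the smallest $\ell$ for which $T_{G^\xi_y}$ contains an entry of the form $(s_1, \dots, s_k, 0, \dots, 0, \ell)$; the algorithm outputs this value. I do not expect a genuine obstacle in this wrap-up: the only delicate step—handling a glue-node, where fusing vertices could a priori cause edges to coincide or be double-counted—has already been dispatched by the claim, which crucially exploits that a reduced glue-expression glues \emph{edge-disjoint} graphs along vertices each carrying a \emph{unique} label on their respective sides. The remaining work is routine bookkeeping over the parse tree together with the table-size and node-count bounds supplied by \cref{app:thm:useful-expression}.
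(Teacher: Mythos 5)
Your proposal is correct and follows essentially the same route as the paper: convert the fuse-$k$-expression into a reduced glue-$k$-expression via \cref{app:thm:useful-expression}, reuse the Fomin et al.\ routines for introduce-, relabel-, and join-nodes (justified by the first reducedness property), handle glue-nodes with the merging procedure validated by the preceding claim, and read off the answer at the root as the smallest $\ell$ with an entry $(s_1, \dots, s_k, 0, \dots, 0, \ell)$. The table-size and node-count bookkeeping matches the paper's as well, so there is nothing to add.
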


Fomin et al.\ have also shown the following lower bound:
\begin{theorem}\cite{FominGLS14}
    Let $H$ be an $n$-vertex graph given together with a $k$-expression of $H$. 
    Then the \textsc{Edge Dominating Set} problem cannot be solved in time $f(k)n^{o(k)}$ for any computable function $f$ unless the ETH fails.
\end{theorem}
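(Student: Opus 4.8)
The statement is an ETH-based lower bound, so the plan is to exhibit a polynomial-time many-one reduction from a source problem whose $n^{o(k)}$-hardness under ETH is already established, into \textsc{Edge Dominating Set} on graphs of clique-width $\mathcal{O}(k)$. I would take as source the \textsc{Multicolored Clique} problem (equivalently $k$-\textsc{Clique}): it is known that, assuming ETH, no algorithm decides whether a graph with vertex set partitioned into $k$ color classes $V_1, \dots, V_k$ (each of size $n$) contains a clique using exactly one vertex per class in time $f(k)\,n^{o(k)}$ for any computable $f$. The goal is a reduction that, from such an instance, builds in polynomial time a graph $G$ on $N = \mathrm{poly}(n)$ vertices together with an $\mathcal{O}(k)$-expression of $G$ and a budget $B$, such that $G$ admits an edge dominating set of size at most $B$ if and only if the input has a multicolored clique. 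Given this, an $f(k')\,N^{o(k')}$ algorithm for \textsc{Edge Dominating Set} parameterized by clique-width would, composed with the reduction, decide \textsc{Multicolored Clique} in time $f(\mathcal{O}(k))\,n^{o(k)}$, contradicting ETH and proving the theorem.

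The reduction uses two families of gadgets. For each class $i \in [k]$ I would attach a \emph{selection gadget} whose only cheap maximal matchings correspond to committing to a single vertex $\mathrm{sel}(i) \in V_i$; the gadget is balanced so that every choice costs the same fixed number of edges in $S$, while the designated ``choice vertex'' remains in the residual set $R$ that still must be covered by a pendant edge. For each pair $\{i,j\}$ and each \emph{non-edge} of the input between $V_i$ and $V_j$ I would attach a \emph{consistency gadget} that is cheaply dominated exactly when $\mathrm{sel}(i)$ and $\mathrm{sel}(j)$ are \emph{not} both the endpoints of that non-edge; the budget $B$ is tuned so that it is met precisely when, for every pair of classes, the two selected vertices are adjacent in the input, i.e.\ the selection forms a multicolored clique. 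Phrasing everything in terms of the witness pairs $(S,R)$ used by the algorithm above keeps the accounting transparent: a choice is encoded by which gadget edges enter $S$ and which endpoints remain in $R$.

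The crucial and most delicate step is keeping the clique-width at $\mathcal{O}(k)$ while realizing the $\binom{k}{2}$ pairwise compatibility checks across classes, each of size $n$. Since only $\mathcal{O}(k)$ labels are available, one cannot address the $n$ vertices of a class individually; instead I would encode the value $\mathrm{sel}(i) \in [n]$ through the order in which the corresponding introduce- and join-operations are applied, building each class gadget incrementally so that the bipartite patterns demanded by the consistency gadgets are produced by a bounded number of join-operations $\eta_{i,j}$ in a carefully chosen sequence, with relabelings $\rho_{i \to j}$ used to recycle labels once a class has been fully connected to the others. A constant number of labels per active class suffices, and labels are freed as soon as a class is ``retired'', so at most $\mathcal{O}(k)$ labels are ever live; the resulting expression is emitted in polynomial time together with $G$.

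I expect the clique-width control to be the main obstacle: a naive realization of arbitrary bipartite compatibility relations between two classes would need $\Omega(n)$ labels, so the real work is to give the compatibility relation enough structure that each cross-class pattern has bounded ``join complexity'' and is expressible with $\mathcal{O}(1)$ joins, and then to verify that the budget $B$ separates yes- from no-instances exactly and that label recycling creates no spurious edges. Once the gadgets are fixed, their local correctness is routine; balancing the budget and proving the $\mathcal{O}(k)$ bound on the number of simultaneously live labels in the explicit expression are where the care lies.
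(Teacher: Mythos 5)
You should first note that the paper you were asked to match contains no proof of this statement at all: it is quoted verbatim from the cited reference [FominGLS14] (Fomin, Golovach, Lokshtanov, Saurabh), and the surrounding text only \emph{transfers} the bound from $k$-expressions to fuse-$k$-expressions by the trivial observation that every $k$-expression is a fuse-$k$-expression. So the comparison is against the cited proof. Your outer frame agrees with it: an ETH-preserving polynomial reduction from \textsc{Multicolored Clique} that outputs a graph together with an $\mathcal{O}(k)$-expression and a budget, so that an $f(k)\,N^{o(k)}$ algorithm for \textsc{Edge Dominating Set} would contradict the known $f(k)\,n^{o(k)}$ lower bound for \textsc{Multicolored Clique}.

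The genuine gap is exactly the step you flag as ``the real work,'' and it is not merely delicate --- as you state it, it cannot work. You attach a consistency gadget \emph{per non-edge} to the two specific endpoints in $V_i$ and $V_j$, and hope that each cross-class pattern has ``bounded join complexity,'' i.e.\ is realizable by $\mathcal{O}(1)$ join operations. But every join creates a complete bipartite pattern between two label classes, so any bipartite relation between $V_i$ and $V_j$ built from a bounded number of joins over $\mathcal{O}(k)$ labels has bounded complexity (few distinct neighborhoods across the cut), whereas a \textsc{Multicolored Clique} instance induces an \emph{arbitrary} bipartite relation between classes; if arbitrary relations were expressible with $\mathcal{O}(1)$ joins, all graphs would have bounded clique-width. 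The cited proof avoids this entirely: it never realizes the input adjacency as edges between selection gadgets. Instead the selected index $\mathrm{sel}(i)\in[n]$ is encoded \emph{numerically} (in unary, through the sizes of attached substructures and the count of matched versus pendant edges charged to the budget), this number is transmitted through complete joins using $\mathcal{O}(1)$ labels per interaction, and the adjacency relation is hard-coded inside a pair gadget of size $\mathcal{O}(n^2)$ --- one ``cell'' per pair $(u,v)\in V_i\times V_j$ --- whose cells are cheap to dominate precisely when the transmitted indices select an edge; consistency of an index across gadgets is enforced by forcing two counted quantities to sum to $n$. In short, budget arithmetic rather than join structure carries the adjacency information, and without this arithmetization idea your construction cannot keep the clique-width at $\mathcal{O}(k)$, so the proposal as written does not yield the theorem.
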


Since any $k$-expression of a graph is, in particular, its fuse-$k$-expression, the lower bound transfers to fuse-$k$-expressions as well thus showing that our algorithm is tight under ETH.
\begin{theorem}
    Let $H$ be an $n$-vertex graph given together with a fuse-$k$-expression of $H$. 
    Then the \textsc{Edge Dominating Set} problem cannot be solved in time $f(k)n^{o(k)}$ for any computable function $f$ unless the ETH fails.
\end{theorem}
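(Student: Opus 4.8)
The plan is to derive this lower bound as an immediate consequence of the Fomin et al.\ bound for clique-expressions stated just above, exploiting that fuse-$k$-expressions subsume ordinary $k$-expressions without changing the parameter value. First I would observe that every $k$-expression is, syntactically, already a fuse-$k$-expression: it is simply a fuse-$k$-expression that happens to use none of the fuse-operations $\theta_i$. In particular, whenever a graph of clique-width at most $k$ is handed to us together with a $k$-expression, that very object is a valid fuse-$k$-expression over the same label set of size $k$, so the parameter is preserved exactly (no multiplicative blow-up and not even an additive constant) when we reinterpret the expression.

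With this observation in place, the argument is a standard reduction by contradiction. Suppose, for the sake of contradiction, that for some computable function $f$ there were an algorithm $\mathcal{A}$ solving \textsc{Edge Dominating Set} in time $f(k)\cdot n^{o(k)}$ whenever the input graph $H$ on $n$ vertices is given together with a fuse-$k$-expression. I would then take an arbitrary instance of \textsc{Edge Dominating Set} specified by a graph $H$ together with a $k$-expression $\phi$, reinterpret $\phi$ as a fuse-$k$-expression of the same graph (which costs nothing and leaves $k$ unchanged), and run $\mathcal{A}$ on this input. By the observation above this is a legal input for $\mathcal{A}$, and $\mathcal{A}$ returns the correct answer within time $f(k)\cdot n^{o(k)}$.

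But this would mean that \textsc{Edge Dominating Set} can be solved in time $f(k)\cdot n^{o(k)}$ given a $k$-expression, which is exactly what the Fomin et al.\ lower bound forbids unless the ETH fails. This contradiction shows that no such algorithm $\mathcal{A}$ can exist unless the ETH fails, establishing the claimed bound. I do not expect any genuine obstacle here: the only point that needs care is confirming that the reinterpretation step preserves the parameter value $k$ and introduces no additional overhead, so that the running-time bound $f(k)\cdot n^{o(k)}$ carries over verbatim; everything else is just the syntactic containment of the class of $k$-expressions in the class of fuse-$k$-expressions.
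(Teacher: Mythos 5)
Your proposal is correct and matches the paper's argument exactly: the paper also derives this bound by noting that any $k$-expression is in particular a fuse-$k$-expression, so the Fomin et al.\ lower bound for clique-expressions transfers verbatim. Your write-up simply spells out the contradiction argument that the paper leaves implicit.
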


\subsection{ Hamiltonian Cycle}\label{app:subsec:ham-cycle}

In this problem, given a graph $G = (V, E)$ we are asked about the existence of a cycle visiting each vertex exactly once.
Here we provide an algorithm solving this problem.
Similarly, to the previous two problems, we will rely on the existing algorithm for the parameterization by clique-width.
The algorithm is by Bergougnoux et al.\ and runs in time $n^{\mathcal{O}(\cw)}$~\cite{BergougnouxKK20}.
We will show how to handle glue-nodes in the same running time. 
We will follow the general idea for union-nodes from the original paper.
However, with multiple vertices being glued, the situation becomes more complicated and the proof of correctness gets more involved.

We start with some preliminary definitions, most of which were already  introduced by Bergougnoux et al~\cite{BergougnouxKK20}. 
A \emph{path packing} $\PP$ is a graph such that each of its connected components is a path.
We say that a path packing $\PP$ is a path packing in $H$ if $\PP$ is a subgraph of $H$.
A \emph{maximal path packing} of a graph $H$ is a spanning subgraph of $H$ that is a path packing.
Note that no restrictions on the length of the paths are imposed so in particular, paths consisting of a single vertex are allowed.
With a slight abuse of notation, depending on the context, we will refer to $\PP$ as a graph or as a set of paths.
If not stated otherwise, speaking about \emph{paths} in a path packing $\PP$ we always refer to its connected components, i.e., maximal paths in $\PP$.
We sometimes refer to maximal path packings of a graph as \emph{partial solutions} and we denote the set of all partial solutions of a graph $H$ by $\Pi(H)$.
With a (not necessarily maximal) path packing $\PP$ in a $k$-labeled graph $H$ we associate an auxiliary multigraph $\aux_H(\PP)$ on the vertex set $[k]$ such that for every $i \neq j \in [k]$, the multiplicity of the edge $\{i, j\}$ is equal to the number of paths in $\PP$ whose end-points have labels $i$ and $j$; and for every $i \in [k]$, the multiplicity of the loop at the vertex $i$ is equal to the number of paths whose both end-vertices have label $i$ (in particular, this contains the paths consisting of a single vertex of label $i$).
Note that this multigraph depends on the labeling of $H$.
The edges of such a multigraph will often be referred to as \emph{red}, this will allow us to speak about red-blue trails later.

In their work, Bergougnoux et al.\ use the technique of so-called \emph{representative sets}~\cite{BergougnouxKK20}. 
For two maximal path packings $\PP_1$ and $\PP_2$ of a $k$-labeled graph $H$ they write $\PP_1 \simeq_H \PP_2$ if for every $i \in [k]$, it holds that $\deg_{\aux_H(\PP_1)}(i) = \deg_{\aux_H(\PP_2)}(i)$ and the graphs $\aux_H(\PP_1)$ and $\aux_H(\PP_2)$ have the same set of connected components.
This defines an equivalence relation on $\Pi(H)$.
For a set $\AAA \subseteq \Pi(H)$ of partial solutions, the operation $\rreduce_H(\AAA)$ returns a set containing one element of each equivalence class of $\AAA / \simeq_H$. 
The following has been shown by Bergougnoux et al.~\cite{BergougnouxKK20}:
\begin{lemma}[\cite{BergougnouxKK20}]\label{lem:size-of-repr}
    For every $\AAA \subseteq \Pi(H)$, we have $|\rreduce(\AAA)| \leq n^k \cdot 2^{k(\log_2(k)+1)}$ and we can moreover compute $\rreduce_H(\AAA)$ in time $\mathcal{O}(|\AAA| \cdot n k^2 \log_2(nk))$.
\end{lemma}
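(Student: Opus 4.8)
The plan is to prove the two assertions separately, treating the cardinality bound as the main content and the running time as a routine consequence of computing a canonical signature for each partial solution. The key observation for the size bound is that $\rreduce_H(\AAA)$ retains exactly one partial solution per $\simeq_H$-class that is represented in $\AAA$, so $|\rreduce_H(\AAA)|$ is at most the total number of equivalence classes of $\simeq_H$ on $\Pi(H)$, \emph{independently} of $\AAA$. By the definition of $\simeq_H$, the class of a partial solution $\PP$ is completely determined by two pieces of data: the degree vector $(\deg_{\aux_H(\PP)}(i))_{i \in [k]}$ and the partition of $[k]$ induced by the connected components of $\aux_H(\PP)$. Hence it suffices to bound the number of pairs (degree vector, component partition) that can arise; since finer information could only make the class smaller, this interpretation yields a valid upper bound.

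First I would bound each factor. Since $\PP$ is a maximal path packing it spans $V(H)$ and therefore consists of at most $n$ paths; as every path contributes exactly $2$ to $\sum_{i} \deg_{\aux_H(\PP)}(i)$ (a single-vertex path being a loop, counted twice), the degree vector is a sequence of $k$ nonnegative integers with entries bounded by $2n$, giving at most $(2n+1)^k$ possibilities, i.e.\ a factor of $n^k 2^{\mathcal{O}(k)}$. The component data is a set partition of $[k]$, of which there are $B_k$, the $k$-th Bell number, and $B_k \le 2^{k \log_2 k}$. Multiplying the two bounds and simplifying yields at most $n^k \cdot 2^{k(\log_2 k + 1)}$ pairs, as claimed. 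I expect this arithmetic to be the only delicate point: one must use a sufficiently sharp bound on $B_k$ rather than the crude estimate $B_k \le k^k$, in order to absorb the $2n$-versus-$n$ slack coming from the degree factor into the stated exponent.

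For the running time, the plan is to compute a canonical encoding of the signature of each $\PP \in \AAA$ and then deduplicate. Reading $\PP$, which has $\mathcal{O}(n)$ vertices and edges, lets me identify the two endpoints of each path together with their labels, from which $\aux_H(\PP)$ (a multigraph on the vertex set $[k]$ with at most $n$ edges) is assembled; its degree vector is then read off directly and its component partition is obtained by a union--find pass. Encoding this signature as a string of length $\mathcal{O}(k \log_2(nk))$ and inserting the pairs $(\text{signature}, \PP)$ into a balanced search tree keyed on signatures, keeping the first representative seen for each key, removes all but one partial solution per class. Comparing two signatures costs $\mathcal{O}(k \log_2(nk))$, and since the number of distinct classes, hence the tree size, is $2^{\mathcal{O}(k \log(nk))}$ by the first part, each insertion performs $\mathcal{O}(k \log_2(nk))$ comparisons; together with the $\mathcal{O}(n)$ cost of building the auxiliary multigraph per element this stays within the stated $\mathcal{O}(|\AAA| \cdot n k^2 \log_2(nk))$ budget. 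The conceptual crux is entirely in the counting of the previous paragraph, while this final step is bookkeeping.
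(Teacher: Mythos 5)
This lemma is imported from Bergougnoux et al.~\cite{BergougnouxKK20}; the paper you were given contains no proof of it, so there is nothing internal to compare against. Your reconstruction is the natural counting argument and is structurally correct: bound the number of $\simeq_H$-classes by (number of degree vectors of $\aux_H(\PP)$) times (number of set partitions of $[k]$), and deduplicate by canonical signatures for the running time. The one point that does not close as written is the arithmetic you yourself flag: $(2n+1)^k\cdot 2^{k\log_2 k}=(2n+1)^k k^k$ does not fit under $n^k\cdot 2^{k(\log_2 k+1)}=n^k(2k)^k$, since that would need $(2n+1)^k\le(2n)^k$. The fix is to use the two bounds jointly rather than separately: the degrees sum to at most $2n$ (twice the number of paths), so the number of degree vectors is $\binom{2n+k}{k}\le (2n+k)^k/k!$, and combining with $B_k\le k!$ gives at most $(2n+k)^k\le (2nk)^k$ classes for $k\ge 2$ (the case $k=1$ is checked directly, using that the single degree is an even number in $\{2,\dots,2n\}$). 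Your running-time analysis is fine: the $\mathcal{O}(n)$ per-element cost of assembling $\aux_H(\PP)$ and the $\mathcal{O}(k\log_2(nk))$-length signatures compared over a search structure of depth $\mathcal{O}(k\log_2(nk))$ stay within $\mathcal{O}(|\AAA|\cdot nk^2\log_2(nk))$, using $\log_2(nk)\le n$.
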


In the following, we will work a lot with multigraphs on the vertex set $[k]$. 
For two such multigraphs $A$ and $B$, with $A \uplus B$ we denote the multigraph on the vertex set $[k]$ such that the multiplicity of every edge is given by the sum of multiplicities of this edge in $A$ and $B$.
As in the work of Bergougnoux et al.~\cite{BergougnouxKK20}, we say that the edges of a multigraph on the left resp.\ right of~$\uplus$ are colored red resp.\ blue.
They also use the following notion of representativity.
\begin{definition}[\cite{BergougnouxKK20}]
    Let $\AAA_1, \AAA_2 \subseteq \Pi(H)$ be two families of partial solutions of a $k$-labeled graph $H$.
    We write $\AAA_1 \lesssim_H \AAA_2$ and say that $\AAA_1$ $H$-represents $\AAA_2$ if, for every multigraph $\MM$ on the vertex set $[k]$ whose edges are colored blue, whenever there exists a path packing $\PP_2 \in \AAA_2$ such that $\aux_H(\PP_2) \uplus \MM$ admits \rbet{}, there also exists a path packing $\PP_1 \in \AAA_1$ such that $\aux_H(\PP_1) \uplus \MM$ admits a red-blue Eulerian trail, where a \emph{red-blue Eulerian trail} is a closed walk containing each edge exactly once and such that red and blue edges alternate on this walk.
\end{definition}
Crucially, they have shown the following lemma:
\begin{lemma}[\cite{BergougnouxKK20}]
    For every $\AAA \subseteq \Pi(H)$, it holds that $\rreduce_H(\AAA) \lesssim_H \AAA$.
\end{lemma}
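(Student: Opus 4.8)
The plan is to unfold the definition of $\lesssim_H$ and reduce the statement to a single invariance property of the relation $\simeq_H$. So I would fix a blue multigraph $\MM$ on $[k]$ and suppose there is a partial solution $\PP_2 \in \AAA$ such that $\aux_H(\PP_2) \uplus \MM$ admits a red-blue Eulerian trail; the task is to produce some $\PP_1 \in \rreduce_H(\AAA)$ with the same property. By the very definition of $\rreduce_H$, the set $\rreduce_H(\AAA)$ contains a representative of the $\simeq_H$-class of $\PP_2$, i.e.\ there is $\PP_1 \in \rreduce_H(\AAA)$ with $\PP_1 \simeq_H \PP_2$. Writing $R_1 = \aux_H(\PP_1)$ and $R_2 = \aux_H(\PP_2)$, it then suffices to show that whether $R \uplus \MM$ admits a red-blue Eulerian trail depends only on the $\simeq_H$-class of the underlying partial solution, that is, only on the vertex-degrees of $R$ and the connected components of $R$.

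The key step is the following characterization, which I would either cite as folklore or establish as a separate lemma: a multigraph $G$ with red and blue edges admits a red-blue Eulerian trail if and only if (\emph{balance}) every vertex has equal red- and blue-degree, and (\emph{connectivity}) all edges of $G$ lie in a single connected component. The ``only if'' direction is immediate: a closed alternating walk enters and leaves each visited vertex on alternating colors, forcing red-degree $=$ blue-degree everywhere, and one closed walk cannot meet two distinct edge-carrying components. The ``if'' direction is the main obstacle; it is the classical statement on alternating Eulerian circuits in $2$-edge-colored graphs, obtainable by fixing at each vertex an arbitrary pairing of its incident red and blue edge-ends (possible exactly because of balance), which decomposes the edge set into alternating closed walks, and then repeatedly merging two such walks through a shared vertex by re-pairing their transitions in a color-preserving way until a single walk remains.

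With the characterization in hand I would transfer both conditions from $R_2 \uplus \MM$ to $R_1 \uplus \MM$ using $\PP_1 \simeq_H \PP_2$. For balance, note that in $R \uplus \MM$ the red-degree of a vertex $i$ equals $\deg_{R}(i)$ and its blue-degree equals $\deg_{\MM}(i)$; since $R_2 \uplus \MM$ is balanced we have $\deg_{R_2}(i) = \deg_{\MM}(i)$ for all $i$, and $\simeq_H$ gives $\deg_{R_1}(i) = \deg_{R_2}(i)$, whence $R_1 \uplus \MM$ is balanced as well. Equal degrees also give $\operatorname{supp}(R_1) = \operatorname{supp}(R_2)$, so the two combined graphs have the same vertex support. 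For connectivity, I would use that the partition of $[k]$ into the connected components of $R \uplus \MM$ is exactly the join, in the partition lattice, of the component-partition of $R$ and that of $\MM$ (reachability of a union of edge sets is the transitive closure of the union of reachabilities); since $\simeq_H$ forces $R_1$ and $R_2$ to induce the same component-partition, joining with the fixed partition of $\MM$ yields the same partition for $R_1 \uplus \MM$ as for $R_2 \uplus \MM$. As all edges of $R_2 \uplus \MM$ lie in one component, the same holds for $R_1 \uplus \MM$. Thus $R_1 \uplus \MM$ is balanced and connected and, by the characterization, admits a red-blue Eulerian trail, which is exactly what the definition of $\rreduce_H(\AAA) \lesssim_H \AAA$ demands.

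Besides the sufficiency direction of the characterization, the only points needing care are the bookkeeping for loops, which contribute $2$ to a degree and must be counted consistently in the balance argument, and the precise reading of ``same set of connected components'' in $\simeq_H$: I interpret it as inducing the same partition of the support, and the degree condition guarantees that the supports and the isolated vertices of $R_1$ and $R_2$ agree, so the partition-join argument for connectivity goes through without a hidden gap.
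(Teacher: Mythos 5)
Your proposal is correct. Note that this paper does not prove the lemma at all: it is imported verbatim from the cited work of Bergougnoux et al., so there is no in-paper proof to compare against. Your argument — pick the $\simeq_H$-representative of $\PP_2$ guaranteed by the definition of $\rreduce_H$, then show that the existence of a red-blue Eulerian trail in $\aux_H(\PP) \uplus \MM$ depends only on the degree sequence and the component partition of $\aux_H(\PP)$ — is the standard route and matches what the cited source does. The characterization you flag as the main obstacle (balance plus all edges in one component) is exactly Kotzig's theorem on alternating Eulerian circuits in $2$-edge-colored multigraphs, which this paper itself invokes (for the degree direction) in the proof of its replacement observation via the reference to Kotzig; your splice-and-merge sketch of the sufficiency direction and your handling of loops and of the support/isolated-vertex agreement are all sound.
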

Together with \cref{lem:size-of-repr}, this allows to keep the number of partial solutions maintained along the dynamic programming small.
Recall that we aim at handling glue-nodes.
As in standard algorithms based on representative sets, our goal is the following: given two $k$-labeled glueable edge-disjoint graphs $H_1$ and $H_2$ and families $\AAA_1 \lesssim_{H_1} \Pi(H_1)$ and $\AAA_2 \lesssim_{H_2} \Pi(H_2)$ of partial solutions of $H_1$ and $H_2$, respectively, we would like to compute a family $\AAA$ of partial solutions of $H = H_1 \sqcup H_2$ with $\AAA \lesssim_H \Pi(H)$ such that $\AAA$ has bounded size.
After that, the operation $\rreduce_H$ can be applied to $\AAA$ to obtain a smaller representative. 

Bergougnoux et al.\ have shown that for two vertex-disjoint graphs $H_1$ and $H_2$, the set of partial solutions of the graph $H_1 \oplus H_2$ can be computed by simply iterating through all partial solutions $\PP_1$ of $H_1$ and $\PP_2$ of $H_2$ and forming their union $\PP_1 \cup \PP_2$~\cite{BergougnouxKK20}.
For glue-nodes our process will be analogous but there is more interaction between partial solutions.
At a glue-node, multiple paths in partial solutions $\PP_1$ and $\PP_2$ can be glued together. 
First, this can result in longer paths that contain several paths of $\PP_1$ and $\PP_2$ as subpaths (see~\cref{fig:path-glueing}~(a)). 
But also, the glueing can create cycles (see~\cref{fig:path-glueing}~(b)) as well as vertices of degree more than two (see~\cref{fig:path-glueing}~(c)) so that the result of gluing of two partial solutions of $H_1$ and $H_2$, respectively, is not a partial solution of $H_1 \sqcup H_2$ anymore. 

\begin{figure}[b]
    \includegraphics{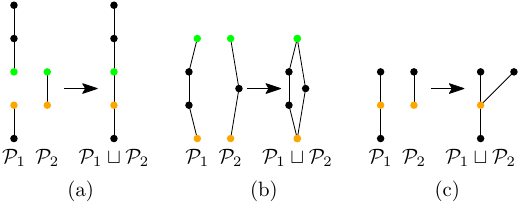}
    \centering
    \caption{Glue-vertices are depicted green and orange, the glueing of two path packings can result in (a) longer paths, (b) cycles, and (c) vertices of degree larger than two.} 
    \label{fig:path-glueing}
\end{figure}

Now we formalize this.
Along this section, let $H_1$ and $H_2$ be two $k$-labeled glueable edge-disjoint graphs.
First, we show that the set of partial solutions of $H_1 \sqcup H_2$ can be obtained in a natural way by gluing all pairs of partial solutions of $H_1$ and $H_2$ and then filtering out graphs that are not path packings.
For a family $\mathcal{H}$ of graphs, by $\filter(\mathcal{H})$ we denote the set of all path packings in~$\mathcal{H}$.
Clearly, the set $\filter(\mathcal{H})$ can be computed in time polynomial in the cardinality of $\mathcal{H}$ and the largest graph in $\mathcal{H}$.

\begin{lemma}\label{lem:all-partial-solutions}
	Let $H_1$ and $H_2$ be two edge-disjoint graphs.
	And let 
	\[
		S = \filter\bigl(\{\PP_1 \sqcup \PP_2 \mid \PP_1 \in \Pi(H_1), \PP_2 \in \Pi(H_2)\bigr).
	\]
	Then it holds that $S = \Pi(H_1 \sqcup H_2)$.
\end{lemma}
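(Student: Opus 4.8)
The plan is to prove the set equality $S = \Pi(H_1 \sqcup H_2)$ by two inclusions, exploiting the single structural fact that $H_1$ and $H_2$ are \emph{edge-disjoint}: every edge of $H \coloneqq H_1 \sqcup H_2$ lies in exactly one of $E(H_1)$, $E(H_2)$, so that $E(H) = E(H_1) \,\dot\cup\, E(H_2)$. Since $\filter$ discards everything that is not a path packing, and since every member of $\Pi(H)$ is by definition a path packing, the filter is harmless in both directions once the two underlying inclusions are established. I would also first record that for each of the candidate pairs the glue $\PP_1 \sqcup \PP_2$ is well-defined: each $\PP_j$ is a spanning subgraph of $H_j$ and hence has the same vertex set and vertex-labels as $H_j$, so glueability of $H_1, H_2$ is inherited by $\PP_1, \PP_2$ (every glue-vertex remains the unique vertex of its label in $\PP_j$).

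For the inclusion $S \subseteq \Pi(H)$, I would take $\PP_1 \in \Pi(H_1)$ and $\PP_2 \in \Pi(H_2)$ such that $\PP \coloneqq \PP_1 \sqcup \PP_2$ survives the filter, i.e.\ is a path packing. It remains to check that $\PP$ is a \emph{spanning subgraph} of $H$. It is a subgraph of $H$ because $V(\PP) = V(H_1)\cup V(H_2) = V(H)$ and $E(\PP) = E(\PP_1)\cup E(\PP_2) \subseteq E(H_1)\cup E(H_2) = E(H)$; it is spanning because $\PP_j$ already spans $V(H_j)$ and a union of spanning sets spans the union. Being a path packing comes for free from the filter, so $\PP \in \Pi(H)$.

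For the reverse inclusion $\Pi(H) \subseteq S$, I would take an arbitrary $\PP \in \Pi(H)$ and \emph{split} it along the edge-partition: define $\PP_j$ as the graph with vertex set $V(H_j)$ and edge set $E(\PP) \cap E(H_j)$, for $j \in \{1,2\}$. Each such $\PP_j$ is a spanning subgraph of $H_j$ by construction, and it is genuinely a subgraph of $\PP$ because every edge of $H_j$ has both endpoints in $V(H_j)$; consequently $\PP_j$ inherits maximum degree at most $2$ and acyclicity from $\PP$, so $\PP_j$ is again a path packing, witnessing $\PP_j \in \Pi(H_j)$. Finally, using the edge-partition, $E(\PP_1 \sqcup \PP_2) = \bigl(E(\PP)\cap E(H_1)\bigr) \cup \bigl(E(\PP)\cap E(H_2)\bigr) = E(\PP) \cap E(H) = E(\PP)$, and the vertex sets agree, so $\PP = \PP_1 \sqcup \PP_2$. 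Hence $\PP$ appears in the family and, being a path packing, is kept by $\filter$, giving $\PP \in S$.

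The proof is essentially a clean double inclusion, so there is no deep obstacle; the only points that require genuine (if routine) care are the two verifications I flagged: that the restriction $\PP_j$ is an \emph{honest subgraph} of $\PP$ — which is exactly where edge-disjointness enters, guaranteeing that no edge of $\PP$ is accidentally duplicated or dropped between the two sides — and that glueability propagates from $H_j$ to $\PP_j$ so that all the $\sqcup$ operations in sight are actually defined. I expect the bookkeeping of the edge-partition $E(\PP) = (E(\PP)\cap E(H_1)) \,\dot\cup\, (E(\PP)\cap E(H_2))$ in the reverse direction to be the crux of the argument, and everything else to follow directly from the definitions of spanning path packing and of the glue operation.
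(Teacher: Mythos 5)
Your proof is correct and follows essentially the same route as the paper: both directions are handled by a double inclusion, with the reverse direction splitting $\PP$ along the edge partition $E(H)=E(H_1)\,\dot\cup\,E(H_2)$. Your $\PP_j$ (vertex set $V(H_j)$, edge set $E(\PP)\cap E(H_j)$) is the same graph as the paper's collection of inclusion-maximal $H_j$-subpaths of paths in $\PP$, and your explicit check that glueability propagates to the $\PP_j$ is a harmless addition.
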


\begin{proof}
	For the one direction, let $\PP \in S$ and let $\PP_1 \in \Pi(H_1)$, $\PP_2 \in \Pi(H_2)$ be such that $\PP = \PP_1 \sqcup \PP_2$.
	First, recall that $\PP_1$ resp.\ $\PP_2$ contains all vertices of $H_1$ resp.\ $H_2$ and we have $V(H_1 \sqcup H_2) = V(H_1) \cup V(H_2)$. 
	So $\PP$ contains all vertices of $H_1 \sqcup H_2$.
	Second, since $S$ is the result of the application of the $\filter$ operator, the graph $\PP$ is a path packing. 
	Therefore, the graph $\PP$ is a maximal path packing of $H$, i.e., $\PP \in \Pi(H_1 \sqcup H_2)$, and we obtain~$S \subseteq \Pi(H_1 \sqcup H_2)$.
	
	For the other direction, consider a path packing $\PP \in \Pi(H_1 \sqcup H_2)$.
	For $i \in [2]$, let 
	\begin{align*}
		\PP_i = \{ Q \mid &\text{ $Q$ is an inclusion-maximal subpath of some path in $\PP$} \\
        &\text{ such that all edges of $Q$ belong to $H_i$} \}
    \end{align*}
	Clearly, $\PP_i$ is a subgraph of $H_i$ and it is a path packing due to being a subgraph of $\PP$.
	Each vertex $v$ of $V(H_i) \subseteq V(H_1 \sqcup H_2) = V(H_1) \cup V(H_2)$ lies on exactly one path, say $P$, in $\PP$. 
	Then there is a unique inclusion-maximal subpath $Q$ of $P$ containing $v$ that uses edges of $H_i$. By definition, the subpath $Q$ belongs to $\PP_i$.
	Therefore, the set $\PP_i$ is a maximal path packing of $H_i$, i.e., $\PP_i \in \Pi(H_i)$.
	It remains to show that $\PP_1 \sqcup \PP_2 = \PP$.
	Since $\PP_1$ and $\PP_2$ are maximal path packings of $H_1$ and $H_2$, respectively, the graph $\PP_1 \sqcup \PP_2$ contains all vertices of $H_1 \sqcup H_2$, i.e., all vertices of $\PP$.
	For $i \in [2]$, every edge of $E(H_i) \cap E(\PP)$ is contained in a unique  maximal subpath $Q$ of some path in $\PP$ such that this subpath contains the edges of $H_i$ only, i.e., $Q \in \PP_i$. 
	Therefore, we have $\PP_1 \sqcup \PP_2 = \PP$.
	And since $\PP$ is a path packing, the $\filter$ operation does not discard it.
	So we obtain $\Pi(H_1 \sqcup H_2) \subseteq S$ and this concludes the proof.
\end{proof}

As the next step, we will show that the representativity is maintained in this process, formally:

\begin{lemma} \label{lem:dp-glue-node}
	Let $H_1$ and $H_2$ be two glueable edge-disjoint $k$-labeled graphs. 
	Further, let $\AAA_1 \lesssim_{H_1} \Pi(H_1)$ and $\AAA_2 \lesssim_{H_2} \Pi(H_2)$.
	Then for the set $S$ defined by
	\[
		S = \filter(\{\PP_1 \sqcup \PP_2 \mid \PP_1 \in \AAA_1, \PP_2 \in \AAA_2\})
	\]
	it holds that 
	\[
		S \lesssim_{H_1 \sqcup H_2} \Pi(H_1 \sqcup H_2).
	\]
    Further, given $\AAA_1$ and $\AAA_2$, the set $S$ can be computed in $\ostar(|\AAA_1| |\AAA_2|)$.
\end{lemma}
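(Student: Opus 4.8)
The plan is to prove the two directions of representativity separately and then argue the running time. The computation of $S$ itself follows \cref{lem:all-partial-solutions} applied to the families $\AAA_1, \AAA_2$: we form all glued pairs and filter out non-path-packings. The time bound is immediate, since we iterate over $|\AAA_1| \cdot |\AAA_2|$ pairs, and for each pair the gluing and the filter-check take polynomial time; hence $S$ is computed in $\ostar(|\AAA_1| |\AAA_2|)$.

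\emph{First I would} handle the key containment $S \lesssim_{H_1 \sqcup H_2} \Pi(H_1 \sqcup H_2)$. Write $H = H_1 \sqcup H_2$. Fix a blue multigraph $\MM$ on $[k]$ and suppose some $\PP \in \Pi(H)$ has the property that $\aux_H(\PP) \uplus \MM$ admits a red-blue Eulerian trail. I must produce $\PP' \in S$ with the same property. By the decomposition in the proof of \cref{lem:all-partial-solutions}, $\PP$ splits into $\PP_1 \in \Pi(H_1)$ and $\PP_2 \in \Pi(H_2)$ with $\PP = \PP_1 \sqcup \PP_2$. The difficulty is that $\aux_H(\PP)$ is \textbf{not} simply $\aux_{H_1}(\PP_1) \uplus \aux_{H_2}(\PP_2)$: gluing merges paths across the two sides, so an $H$-path may be a concatenation of several $H_1$- and $H_2$-subpaths meeting at glue-vertices. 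The right way to encode this is to treat each glue-vertex's label as a connector: the red edges of $\aux_{H_1}(\PP_1)$ and $\aux_{H_2}(\PP_2)$, together with the gluing information at the (at most $k$) glue-labels, combine to reconstruct $\aux_H(\PP)$. I would make this precise by showing that a red-blue Eulerian trail for $\aux_H(\PP) \uplus \MM$ can be reinterpreted as a red-blue Eulerian trail for $\aux_{H_1}(\PP_1) \uplus \MM_1$ for a suitably augmented blue multigraph $\MM_1$ that records $\MM$ together with the red contribution of $\PP_2$ seen through the glue-labels. Then representativity $\AAA_1 \lesssim_{H_1} \Pi(H_1)$ yields some $\widehat\PP_1 \in \AAA_1$ with $\aux_{H_1}(\widehat\PP_1) \uplus \MM_1$ Eulerian; symmetrically pulling $\AAA_2 \lesssim_{H_2} \Pi(H_2)$ through the other side produces $\widehat\PP_2 \in \AAA_2$, and $\widehat\PP_1 \sqcup \widehat\PP_2$ (once we check it is a path packing so that $\filter$ keeps it) is the desired member of $S$.

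\emph{The main obstacle} is exactly this bookkeeping at the glue-vertices: I must verify that replacing $\PP_1$ by $\widehat\PP_1$ and $\PP_2$ by $\widehat\PP_2$ preserves the \emph{glueability}-driven structure, namely that the degrees and connected components of the auxiliary multigraph of the glued packing are governed only by the $\simeq$-type of each side plus the fixed set of glue-labels. Because each glue-label carries a unique vertex on each side, the interaction is controlled: the merging at a glue-vertex corresponds to identifying two red edge-endpoints of matching label, and the red-blue trail condition is insensitive to which concrete $\simeq$-equivalent packing realizes a given type. I would phrase this as: the map sending $(\PP_1,\PP_2)$ to the $\simeq_H$-type of $\PP_1 \sqcup \PP_2$ factors through the pair of $\simeq_{H_1}$- and $\simeq_{H_2}$-types, so replacing each side by a $\simeq$-equivalent representative does not change whether $\filter$ retains the glued packing nor its Eulerian-trail behavior against any blue $\MM$. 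This is where care is needed because the gluing can create cycles or high-degree vertices (\cref{fig:path-glueing}), so I must argue that these degeneracies depend only on the types and glue-labels, hence are preserved.

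Finally, I would note that $S \subseteq \Pi(H)$ follows from $\AAA_i \subseteq \Pi(H_i)$ together with \cref{lem:all-partial-solutions}, so $S$ indeed $H$-represents a subfamily of $\Pi(H)$; combined with the forward argument that $S$ represents all of $\Pi(H)$, we obtain $S \lesssim_H \Pi(H)$. The running-time claim was already dispatched above.
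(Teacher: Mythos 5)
Your high-level architecture matches the paper's: decompose $\PP$ into $\PP_1 \sqcup \PP_2$ via \cref{lem:all-partial-solutions}, encode the red contribution of $\PP_2$ into an augmented blue multigraph, invoke $\AAA_1 \lesssim_{H_1} \Pi(H_1)$ to swap in some $\widehat\PP_1$, decode back, and then repeat symmetrically on the other side; the running-time argument is also the same. However, there is a genuine gap in how you justify the decoding step, which is the entire technical content of the lemma. You argue that the degeneracies (cycles, high-degree vertices) and the Eulerian-trail behavior of the glued packing ``depend only on the $\simeq$-types and glue-labels,'' so that replacing each side by a $\simeq$-equivalent representative is harmless. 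But the representativity hypothesis does not hand you a $\simeq_{H_1}$-equivalent representative: by definition of $\lesssim_{H_1}$, the packing $\widehat\PP_1$ is only guaranteed to make $\aux_{H_1}(\widehat\PP_1) \uplus \MM_1$ admit a red-blue Eulerian trail for the particular augmented $\MM_1$ you constructed. This forces the degree sequence of $\aux_{H_1}(\widehat\PP_1)$ to agree with that of $\aux_{H_1}(\PP_1)$ (Kotzig, as in \cref{obs:replacement-property}), but it does \emph{not} force the connected components of the auxiliary multigraphs to agree, so $\widehat\PP_1$ need not be $\simeq_{H_1}$-equivalent to $\PP_1$ and your factoring argument does not apply to it.

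Consequently, the two facts you defer — that $\widehat\PP_1 \sqcup \PP_2$ is acyclic (hence survives $\filter$) and that a red-blue Eulerian trail for $\aux_{H_1}(\widehat\PP_1) \uplus \MM_1$ can be converted back into one for $\aux_{H_1 \sqcup H_2}(\widehat\PP_1 \sqcup \PP_2) \uplus \MM$ — are exactly what must be proved directly from the trail structure, and your proposal contains no mechanism for doing so. The paper handles this by processing the paths of $\PP_2$ one at a time (\cref{claim:blue-multigraphs-sequence}), replacing each by a specific gadget of blue edges threaded through the glue-labels; the point of the gadget is that, because each glue-label carries a unique vertex whose auxiliary degree is pinned down, \emph{any} Eulerian trail of the modified instance must traverse the gadget's blue edges consecutively in a forced order, which is what permits the reverse surgery. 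Acyclicity of the decoded packing is a separate contradiction argument (the $e^1 \neq e^2$ analysis in Case~3.3), not a consequence of type-equivalence. Without an argument of this kind — whether iterative as in the paper or in your one-shot encoding — the backward direction, and hence the lemma, is not established.
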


This lemma will be the key component of our procedure for glue-nodes.
In the remainder of this subsection we mostly concentrate on the proof of this lemma.
It will follow the general idea behind the proof of Bergougnoux et al.\ for union-nodes~\cite{BergougnouxKK20} but the technicalities will become more involved.
We start with some simple claims. 

\begin{observation}\label{obs:aux-degree-conversion}
	Let $H$ be a $k$-labeled graph, let $i \in [k]$ be such that there exists a unique vertex $v$ of label $i$ in $H$, and let $\PP$ be a path packing in $H$ that contains $v$.
	Then for the unique path $P \in \PP$ containing $v$, it holds that:
	\begin{enumerate}
		\item If $P$ has length zero, then $\deg_{\aux_H(\PP)}(\llab(v)) = 2$ and there is a loop at $\llab(v)$ in $\aux_H$.
		\item If $P$ has non-zero length and $v$ is an end-vertex of $P$, then $\deg_{\aux_H(\PP)}(\llab(v)) = 1$.
		\item If $P$ has non-zero length and $v$ is an internal vertex of $P$, then $\deg_{\aux_H(\PP)}(\llab(v)) = 0$.
	\end{enumerate}
	In particular, we have $\deg_{\aux_H(\PP)}(\llab(v)) = 2 - \deg_{\PP}(v)$.
\end{observation}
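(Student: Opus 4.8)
The plan is to argue directly from the definition of $\aux_H(\PP)$ together with the hypothesis that $v$ is the \emph{unique} vertex of label $i$ in $H$, so that $\llab(v) = i$. The crucial observation I would isolate first is that every edge of $\aux_H(\PP)$ incident with the vertex $i$ is contributed by a path of $\PP$ having an end-point of label $i$; since $v$ is the only such vertex, the only possible contributor is a path end-point equal to $v$. As $v$ lies on the single path $P \in \PP$, no path other than $P$ can affect $\deg_{\aux_H(\PP)}(i)$, and the contribution of $P$ itself is determined solely by how often $v$ occurs as an end-point of $P$.

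I would then split into the three stated cases. If $P$ has length zero, then $P$ is the single vertex $v$, whose two coinciding end-points both carry label $i$; by definition this produces a loop at $i$ in $\aux_H(\PP)$, and since a loop contributes $2$ to the degree in a multigraph, we get $\deg_{\aux_H(\PP)}(i) = 2$. If $P$ has non-zero length and $v$ is an end-point of $P$, then the other end-point $w \neq v$ has some label $j$; because $v$ is the unique vertex of label $i$, we have $j \neq i$, so $P$ contributes a single (non-loop) edge $\{i, j\}$ at $i$, giving degree $1$. Finally, if $P$ has non-zero length and $v$ is internal, then $v$ is not an end-point of $P$ at all, so $P$ contributes no edge incident with $i$, and the degree is $0$.

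The ``in particular'' statement then follows by reading off $\deg_{\PP}(v)$ in each case: a length-zero path gives $\deg_{\PP}(v) = 0$, an end-point of a longer path gives $\deg_{\PP}(v) = 1$, and an internal vertex gives $\deg_{\PP}(v) = 2$, which matches $2 - \deg_{\PP}(v)$ against the values $2, 1, 0$ computed above. I expect the only real care needed is bookkeeping the multigraph degree convention --- that a loop counts twice --- and confirming that the uniqueness of $v$ genuinely excludes contributions from all other paths as well as from the second end-point of $P$. Neither point is a deep obstacle, so the proof should be a short, self-contained case analysis rather than anything requiring new machinery.
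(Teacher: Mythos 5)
Your proof is correct and is exactly the intended justification: the paper states this as an observation with no written proof, since it follows directly from the definition of $\aux_H(\PP)$ once one notes that the uniqueness of $v$ as the vertex of label $i$ rules out contributions from any other path or end-point. Your case analysis, including the loop-counts-twice convention and the check that the second end-point of $P$ cannot also carry label $i$, covers all the points that need care.
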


We can apply this observation to glue-vertices as follows:

\begin{lemma}
	Let $H_1$ and $H_2$ be two $k$-labeled glueable edge-disjoint graphs and let $v \in V(H_1) \cap V(H_2)$ be a glue-vertex of label $i$ for some $i \in [k]$.  
	Further let $\PP_1$ and $\PP_2$ be path packings in $H_1$ and $H_2$, respectively, both containing $v$ such that the graph $\PP_1 \sqcup \PP_2$ is a path packing.
	Then it holds that 
	\[
		\deg_{\aux_{H_1 \sqcup H_2}(\PP_1 \sqcup \PP_2)}(\llab(v)) = \deg_{\aux_{H_1}(\PP_1)}(\llab(v)) + \deg_{\aux_{H_2}(\PP_2)}(\llab(v)) - 2.
	\]
\end{lemma}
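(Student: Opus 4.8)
The plan is to reduce the identity entirely to \cref{obs:aux-degree-conversion}, which rewrites the auxiliary-multigraph degree of a unique-label vertex as $2-\deg_{\PP}(v)$; the whole statement then collapses to the elementary fact that degrees add over an edge-disjoint union.

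First I would check that $v$ qualifies for \cref{obs:aux-degree-conversion} in each of the three graphs $H_1$, $H_2$, and $H_1 \sqcup H_2$, i.e.\ that it is the \emph{unique} vertex carrying its label in each of them. For $H_1$ and $H_2$ this is precisely the second glueability precondition. For the union, recall that $\sqcup$ preserves labels, so any vertex of label $i = \llab(v)$ in $H_1 \sqcup H_2$ already had label $i$ in $H_1$ or in $H_2$; since $v$ is the only such vertex on either side and lies in both, it is still the unique vertex of label $i$ in $H_1 \sqcup H_2$. Applying \cref{obs:aux-degree-conversion} three times (for the union I use the hypothesis that $\PP_1 \sqcup \PP_2$ is a path packing containing $v$) then yields
\[
    \deg_{\aux_{H_j}(\PP_j)}(\llab(v)) = 2 - \deg_{\PP_j}(v)\ \ (j \in [2]), \qquad
    \deg_{\aux_{H_1 \sqcup H_2}(\PP_1 \sqcup \PP_2)}(\llab(v)) = 2 - \deg_{\PP_1 \sqcup \PP_2}(v).
\]

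The last ingredient is that $H_1$ and $H_2$ are edge-disjoint: since $\PP_1 \subseteq H_1$ and $\PP_2 \subseteq H_2$, their edge sets are disjoint, so $E(\PP_1 \sqcup \PP_2) = E(\PP_1)\, \dot\cup\, E(\PP_2)$, and counting the edges incident with $v$ gives $\deg_{\PP_1 \sqcup \PP_2}(v) = \deg_{\PP_1}(v) + \deg_{\PP_2}(v)$. Substituting into the three equalities above, both the left-hand side of the claim and the expression $\deg_{\aux_{H_1}(\PP_1)}(\llab(v)) + \deg_{\aux_{H_2}(\PP_2)}(\llab(v)) - 2$ reduce to $2 - \deg_{\PP_1}(v) - \deg_{\PP_2}(v)$, which proves the identity. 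I do not anticipate a genuine obstacle here: the only subtle points are label-uniqueness in the union (settled via label preservation) and the fact that it is the \emph{edge} sets of $\PP_1$ and $\PP_2$, not merely their vertex sets, that are disjoint — this is exactly where edge-disjointness of $H_1$ and $H_2$ is used rather than just the gluing operation. Intuitively, the $-2$ bookkeeps the two ``missing'' degree units at $v$ that the two sides would otherwise over-count.
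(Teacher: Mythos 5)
Your proof is correct and follows essentially the same route as the paper's: three applications of \cref{obs:aux-degree-conversion} combined with the degree additivity $\deg_{\PP_1 \sqcup \PP_2}(v) = \deg_{\PP_1}(v) + \deg_{\PP_2}(v)$, which rests on the edge-disjointness of $H_1$ and $H_2$. Your explicit verification that $v$ remains the unique vertex of its label in $H_1 \sqcup H_2$ is a detail the paper states without elaboration, but the argument is identical.
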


\begin{proof}
	The vertex $v$ is the unique vertex of label $i$ in $H_1 \sqcup H_2$.
	Then we have
	\begin{align*}
		&\deg_{\aux_{H_1 \sqcup H_2}(\PP_1 \sqcup \PP_2)}(\llab(v)) \stackrel{\text{\cref{obs:aux-degree-conversion}}}{=} \\
		&2 - \deg_{\PP_1 \sqcup \PP_2}(v) \stackrel{E(H_1) \cap E(H_2) = \emptyset}{=} \\
		&2 - (\deg_{\PP_1}(v) + \deg_{\PP_2}(v)) \stackrel{\text{\cref{obs:aux-degree-conversion}}}{=} \\
		&2 - \Bigl(\bigl(2 - \deg_{\aux_{H_1}(\PP_1)}(\llab(v))\bigr) + \bigl(2 - \deg_{\aux_{H_2}(\PP_2)}(\llab(v))\bigr)\Bigr) = \\
		& \deg_{\aux_{H_1}(\PP_1)}(\llab(v)) + \deg_{\aux_{H_2}(\PP_2)}(\llab(v)) - 2.\qedhere
	\end{align*} 
\end{proof}

\begin{observation}\label{obs:replacement-property}
    Let $H$ be a $k$-labeled graph and let $\PP$ and $\PP'$ be path packings in $H$ with $V(\PP) = V(\PP')$.
    Further, let $\MM$ be a multigraph on the vertex set $[k]$ such that each of the graphs $\aux_H(\PP) \uplus \MM$ and $\aux_H(\PP') \uplus \MM$ admits \rbet{}. 
    Finally, let $v \in V(\PP)$ be a vertex of unique label in $H$, i.e., $|\lab_H^{-1}(\lab_H(v))| = 1$.
    Then the graphs $\aux_H(\PP)$ and $\aux_H(\PP')$ have the same degree sequence and in particular, the following properties hold:
    \begin{itemize}
        \item The vertex $v$ is an internal vertex of a path in $\PP$ iff $v$ is an internal vertex of a path in~$\PP'$.
        \item The vertex $v$ is an end-vertex of a non-zero length path in $\PP$ iff $v$ is an end-vertex of a non-zero length path in $\PP'$.
        \item The vertex $v$ forms a zero-lentgh path in $\PP$ iff $v$ forms a zero-lentgh path in $\PP'$.
    \end{itemize}
\end{observation}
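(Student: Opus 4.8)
The plan is to exploit the fact that the existence of a red-blue Eulerian trail forces a local color-balance at every vertex. First I would establish the following necessary condition: if a multigraph of the form $\aux_H(\PP) \uplus \MM$ on the vertex set $[k]$ (red edges coming from $\aux_H(\PP)$, blue edges from $\MM$) admits a red-blue Eulerian trail, then $\deg_{\aux_H(\PP)}(u) = \deg_{\MM}(u)$ for every $u \in [k]$. To see this, write the closed alternating walk as a cyclic sequence of edges $e_1, e_2, \dots, e_{2m}$ whose colors alternate, with consecutive edges meeting at a vertex, which I call a \emph{corner}. Each corner is flanked by exactly one red and one blue edge, so it contributes exactly one red edge-endpoint and exactly one blue edge-endpoint to the vertex located at it. Summing over all corners located at a fixed vertex $u$ shows that the number of red endpoints at $u$ equals the number of blue endpoints at $u$, that is, $\deg_{\aux_H(\PP)}(u) = \deg_{\MM}(u)$. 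Loops are handled by the usual convention that a loop contributes two to the degree of its vertex and is traversed once in the walk, flanked by two edges of the opposite color.

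Applying this necessary condition to both hypotheses of the observation, namely that $\aux_H(\PP) \uplus \MM$ and $\aux_H(\PP') \uplus \MM$ each admit a red-blue Eulerian trail, yields $\deg_{\aux_H(\PP)}(u) = \deg_{\MM}(u) = \deg_{\aux_H(\PP')}(u)$ for every $u \in [k]$. In particular, $\aux_H(\PP)$ and $\aux_H(\PP')$ have identical degrees vertex by vertex, which is even stronger than having the same degree sequence and in particular establishes the first assertion.

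For the three equivalences I would specialize to the vertex $i := \lab_H(v)$. Since $v$ is the unique vertex of label $i$ in $H$ and $v \in V(\PP) = V(\PP')$, \cref{obs:aux-degree-conversion} applies to $v$ in both packings: it tells us that the value of $\deg_{\aux_H(\PP)}(i) = 2 - \deg_{\PP}(v)$ is $2$, $1$, or $0$ precisely when $v$ forms a zero-length path, is an end-vertex of a non-zero length path, or is an internal vertex of a path, respectively, and symmetrically for $\PP'$. As the two degrees at $i$ coincide by the previous paragraph, $v$ falls into the same one of these three mutually exclusive cases in $\PP$ and in $\PP'$, which is exactly the content of the three stated equivalences.

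The only delicate point I anticipate is the clean justification of the local color-balance condition in the presence of loops and of repeated visits to the same vertex along the walk; once this bookkeeping of corners and edge-endpoints is set up carefully, the remainder of the argument is a direct application of \cref{obs:aux-degree-conversion}.
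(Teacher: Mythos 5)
Your proposal is correct and follows essentially the same route as the paper: the paper derives $\deg_{\aux_H(\PP)}(i) = \deg_{\MM}(i) = \deg_{\aux_H(\PP')}(i)$ by citing a result of Kotzig and then invokes \cref{obs:aux-degree-conversion}, exactly as you do. The only difference is that you prove the red-blue degree-balance condition inline via the corner-counting argument rather than citing it, and that bookkeeping (including the loop convention) is handled correctly.
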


\begin{proof}
    Since each of the graphs $\aux_H(\PP) \uplus \MM$ and $\aux_H(\PP') \uplus \MM$ admits \rbet{}, for every $i \in [k]$ we have
    \[
    	\deg_{\aux_H(\PP)}(i) = \deg_{\MM}(i) = \deg_{\aux_H(\PP')}(i)
    \]
    by a result of Kotzig~\cite{Kotzig68}.
    Therefore, the graphs $\aux_H(\PP)$ and $\aux_H(\PP')$ have the same degree sequence. 
    The remainder of the claim follows by \cref{obs:aux-degree-conversion}.
\end{proof}

With these technical lemmas in hand, we can now prove \cref{lem:dp-glue-node}.
In the proof we will work a lot with multigraphs on vertex set $[k]$.
For $i \in [k]$, by $\lloop_i$ we will denote a loop at the vertex $i$.
Similarly, for $i, j \in [k]$, by $e_{i, j}$ we denote an edge with end-points $i$ and $j$ where $i = j$ is allowed.
This edge is not necessarily unique so with a slight abuse of notation, this way we denote one fixed edge clear from the context between these vertices.
If $\mathcal{H}$ is a multigraph and $e = e_{i, j}$ resp.\ $e = \lloop_i$, by $\mathcal{H} \dot\cup \{e\}$ we denote the multigraph arising from $\mathcal{H}$ by adding an edge with end-points $i$ and $j$ resp.\ adding a loop at $i$.
Here, $\dot\cup$ emphasizes that we add a \emph{new} edge and in particular, increase the number of edges in the multigraph.
Similarly, by $\mathcal{H} - e$ we denote the multigraph arising from $\mathcal{H}$ after removing the edge $e$ and emphasize that $e$ was present in $\mathcal{H}$. 

\begin{proof}
    For simplicity, we denote $D = H_1 \sqcup H_2$. 
    Along this proof no relabeling occurs so every vertex of $H_1$ resp. $H_2$ has the same label in $H$. For this reason we omit the subscripts of labeling functions to improve readability: the label of a vertex $v$ is simply denoted by $\lab(v)$.
    
    Now let $\PP \in \Pi(D)$ be a maximal path packing  of $D$.
    By \cref{lem:all-partial-solutions}, there exist maximal path packings $\PP_1 \in \Pi(H_1)$ and $\PP_2 \in \Pi(H_2)$ such that $\PP = \PP_1 \sqcup \PP_2$.
    Further, let $\MM$ be a blue multigraph on the vertex set $[k]$ such that $\aux_H(\PP) \uplus \MM$ admits \rbet{}, say $T$.
    To prove the lemma, we need to show that there exists a maximal path packing $\PP' \in S = \filter(\{\PP_1' \sqcup \PP_2' \mid \PP_1' \in \AAA_1, \PP_2' \in \AAA_2\})$ of $D$ such that $\aux_H(\PP') \uplus \MM$ admits a red-blue Eulerian trail as well, i.e., there exist maximal path packings $\PP'_1 \in \AAA_1$ and $\PP'_2 \in \AAA_2$ such that $\PP'_1 \sqcup \PP'_2$ is a path packing and $\aux_H(\PP_1' \sqcup \PP'_2) \uplus \MM$ admits a red-blue Eulerian trail.

    Let $t = |\PP_2|$ and let us fix some ordering $\PP_2 = \{P^1, \dots, P^t\}$ of the paths (i.e., connected components) in $\PP_2$.
    For $i \in [t]_0$, we define 
    \[
        \RR^i = \PP_2 \setminus \{P^1, \dots, P^i\}.
    \]
    Now we will construct blue multigraphs $\MM^0, \dots, \MM^t$ with the following properties.
    \begin{claim}\label{claim:blue-multigraphs-sequence}
    There exist blue multigraphs $\MM^0, \dots, \MM^t$ such that for every $i \in [t]_0$, the following two properties hold. 
    First, the multigraph $\aux_D(\PP_1 \sqcup \RR^i) \uplus \MM^i$ admits \rbet{}, we fix one and denote it by $T^i$.
    And second, if $i > 0$ and $\PP_1'$ is a maximal path packing of $H_1$ such that $\aux_D(\PP_1' \sqcup \RR^i) \uplus \MM^i$ admits \rbet{}, then $\aux_D(\PP_1' \sqcup \RR^{i-1}) \uplus \MM^{i-1}$ also admits \rbet{}.
    \end{claim}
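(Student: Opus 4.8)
The plan is to construct the multigraphs $\MM^0,\dots,\MM^t$ by induction on $i$, peeling off the paths of $\PP_2$ one at a time and recording the effect of each peeled path $P^i$ as a modification of the blue edges. I would set $\MM^0=\MM$; since $\PP_1\sqcup\RR^0=\PP_1\sqcup\PP_2=\PP$, the first property for $i=0$ is exactly the hypothesis that $\aux_D(\PP)\uplus\MM$ admits a red-blue Eulerian trail, and the second property is vacuous. The two properties then correspond to the two directions of a single local surgery on the Eulerian trail: the first asks me to build a trail $T^i$ from $T^{i-1}$ after removing $P^i$ from the red side, while the second asks me to invert this surgery for an arbitrary $\PP_1'$.

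For the inductive step I would first understand how the red auxiliary multigraph changes when $P^i$ is deleted, i.e.\ how $\aux_D(\PP_1\sqcup\RR^i)$ differs from $\aux_D(\PP_1\sqcup\RR^{i-1})$. The key point is that an endpoint of $P^i$ may be a glue-vertex, in which case $P^i$ is a contiguous subpath of a longer path $Q$ of $\PP_1\sqcup\RR^{i-1}$ whose remaining parts come from $\PP_1$; deleting $P^i$ splits $Q$ and therefore replaces the single red edge contributed by $Q$ by the at most two red edges contributed by the surviving $\PP_1$-pieces. Because a glue-vertex has a unique label in $D$, \cref{obs:aux-degree-conversion} pins down its contribution, so the whole change touches only the labels of the endpoints of $P^i$ and of the far endpoints of the attached $\PP_1$-pieces. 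I would then locate the affected red edge of $Q$ inside $T^{i-1}$, where it is flanked by two blue edges, and reroute locally: substitute the new red edges and insert or delete a single blue edge so as to keep the walk a closed alternating trail. This defines $\MM^i$ and $T^i$, and a short degree count (red degree must equal blue degree at every vertex, by Kotzig's criterion) verifies that the balance is preserved, which yields the first property.

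For the second property I would exploit that the blue edges altered in passing from $\MM^{i-1}$ to $\MM^i$ are determined entirely by $T^{i-1}$ and by the labels of the relevant endpoints, not by $\PP_1$ itself. Given any maximal path packing $\PP_1'$ of $H_1$ for which $\aux_D(\PP_1'\sqcup\RR^i)\uplus\MM^i$ admits a red-blue Eulerian trail $T'$, this trail must traverse the blue edge I modified; I would invert the local substitution along $T'$, reinserting the red structure of $P^i$, to obtain a trail of $\aux_D(\PP_1'\sqcup\RR^{i-1})\uplus\MM^{i-1}$. The essential observation making this inversion legal is that the balance condition forced by $T'$ controls the degrees at the unique glue-vertex labels: the presence of the inserted blue edge at such a label forces a matching red degree there and hence, via \cref{obs:aux-degree-conversion}, forces the corresponding glue-vertex to have spare degree in $\PP_1'\sqcup\RR^i$. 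This guarantees that adding $P^i$ back keeps $\PP_1'\sqcup\RR^{i-1}$ a path packing, so that $\aux_D(\PP_1'\sqcup\RR^{i-1})$ is defined and the edge correspondence is exact.

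I expect the main obstacle to be the case analysis over the configurations of the endpoints of $P^i$: each endpoint may be a non-glue vertex (the pure union case, where one simply freezes the red edge into blue) or a glue-vertex attached to a $\PP_1$-path, the two glue-vertices may attach to distinct $\PP_1$-paths or, a priori, to the same path, and an attached path may degenerate to a single vertex (so that the auxiliary multigraph acquires a loop rather than an ordinary edge). Each configuration induces a slightly different red-edge change, and hence a different blue-edge update rule, and I must verify in every case both that the forward rerouting yields a valid alternating closed trail and that the reverse direction never creates a cycle, which would leave $\aux_D$ undefined. Ruling out the cycle-creating configuration is exactly where I would lean on the unique-label property of glue-vertices together with the degree balance enforced by $T'$, so that a would-be cycle is incompatible with the existence of the trail.
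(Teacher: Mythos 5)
Your plan follows the paper's proof almost step for step: induction on $i$ with $\MM^0=\MM$, peeling one path $P^i$ of $\PP_2$ at a time, a local surgery on the red-blue Eulerian trail whose blue-edge update depends only on $T^{i-1}$ and on the labels involved (not on $\PP_1$ itself), Kotzig's degree-balance criterion to transfer structural information to an arbitrary $\PP_1'$, and a final acyclicity argument, resting on the uniqueness of glue-vertex labels, to rule out the configuration where both glue-endpoints of $P^i$ attach to the same $\PP_1$-path. All of this matches the paper's case analysis.

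The one concrete gap is that your case analysis is confined to the \emph{endpoints} of $P^i$, whereas $P^i$ may also contain glue-vertices as \emph{internal} vertices. Such a vertex $u$ has degree $2$ in $\PP_2$, hence degree $0$ in $\PP_1$, so after peeling $P^i$ it survives as a zero-length path and contributes a red \emph{loop} at $\llab(u)$ to $\aux_D(\PP_1\sqcup\RR^i)$. Consequently the red change is not ``one edge replaced by at most two'' but one edge replaced by up to two edges \emph{plus} $q$ loops (one per internal glue-vertex), and the blue update is not a single edge but a chain of $q{+}1$ (or $q$) new blue edges threading these loops. Your forward surgery as stated (``insert or delete a single blue edge'') would fail to produce a closed alternating trail whenever $q\ge 1$. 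The backward direction then needs an extra argument your sketch does not anticipate: one must show that this entire alternating chain of blue edges and red loops occurs \emph{consecutively} in any Eulerian trail of the modified multigraph (because each $\llab(u_j)$ carries exactly one red loop and exactly two blue edges, forcing the two blue edges to flank the loop), before the chain can be contracted back to the original red edge. This is the main additional technicality in the paper's treatment of non-trivial paths with glue-vertices, and it is orthogonal to, not subsumed by, your remark about attached $\PP_1$-pieces degenerating to single vertices.
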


    Along the proof of the claim, we will inter alia show that if the graph $\PP_1' \sqcup \RR^i$ (as above in the claim) is a path packing, then $\PP_1' \sqcup \RR^{i-1}$ is also a path packing, i.e., $\aux_D(\PP_1' \sqcup \RR^{i-1})$ is well-defined.
    Recall that $\PP_1 \sqcup \RR^{i-1}$ (a subgraph of $\PP_1 \sqcup \PP_2$) is a path packing and by \cref{obs:replacement-property}, $\aux_D(\PP_1 \sqcup \RR^{i-1})$ has the same degree sequence as $\aux_D(\PP_1' \sqcup \RR^{i-1})$.
    So to prove that $\PP_1' \sqcup \RR^{i-1}$ is a path packing, it will suffice to show its acyclicity.
    
    \begin{claimproof}
    The proof is by induction. 
    Base case $i = 0$:
    Since $\RR^0 = \PP_2$, we can use $\MM^0 = \MM$ and the statement is true by using $T^0 = T$.
    So now let $1 \leq i \leq t$ and suppose the statement holds for $i - 1$.
    We make a case distinction based on the path $P := P^i$.
    For simplicity of notation, after we construct $\MM^i$ in each case, with $\PP_1'$ we refer to any maximal path packing of $H_1$ such that $\aux_D(\PP_1' \sqcup \RR^i) \uplus \MM^i$ admits \rbet{}. 
    We emphasize that in every case, the construction of $\MM^i$ and $T^i$ will be independent of $\PP_1'$.
    The cases 1.2 and 2 will be similar to a union-node as handled by Bergougnoux et al.~\cite{BergougnouxKK20} while the remaining cases are different.

    \textbf{Case 1.1}
    First, suppose that the path $P$ has zero length and the unique vertex, say $v$, of $P$ is a glue-vertex. 
    Since $\PP_1$ and $\PP_1'$ are maximal path packings of $H_1$, both of them contain $v$ on some path.
    So in this case, we simply have 
    \begin{equation*}
        \begin{aligned}
        &\PP_1 \sqcup \RR^i = \PP_1 \sqcup (\RR^{i-1} - P) = \PP_1 \sqcup \RR^{i - 1} \text{ and } \\
        &\PP_1' \sqcup \RR^i = \PP'_1 \sqcup (\RR^{i-1} - P) = \PP_1' \sqcup \RR^{i - 1}
        \end{aligned}
    \end{equation*}
    (here and in the analogous equalities we treat a path packing as a set of vertex-disjoint paths and the operation $- P$ removes the path $P$ from it in terms of set difference).
    Therefore, $\MM^i := \MM^{i-1}$ and $T^i := T^{i-1}$ satisfy the desired condition.

    \textbf{Case 1.2}
    Now again suppose that $P$ is a zero-length path but now the unique vertex, say $v$, of $P$ is not a glue-vertex.
    Then we have 
    \begin{equation*}
        \begin{aligned}
        &\PP_1 \sqcup \RR^i = \PP_1 \sqcup (\RR^{i-1} - P) = (\PP_1 \sqcup \RR^{i-1}) - P \text{ and } \\
        &\PP_1' \sqcup \RR^i = \PP_1' \sqcup (\RR^{i-1} - P) = (\PP_1' \sqcup \RR^{i-1}) - P
        \end{aligned}
    \end{equation*}
    So
    \begin{equation}\label{eq:case-1-2-aux}
        \begin{aligned}
            &\aux_D(\PP_1 \sqcup \RR^i) = \aux_D(\PP_1 \sqcup \RR^{i-1}) - \lloop_{\llab(v)} \text{ and } \\
            &\aux_D(\PP_1' \sqcup \RR^i) = \aux_D(\PP_1' \sqcup \RR^{i-1}) - \lloop_{\llab(v)}.
        \end{aligned}
    \end{equation}

    \begin{figure}
    \includegraphics{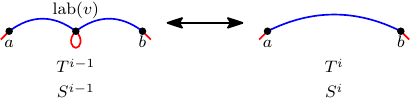}
    \centering
    \caption{Constructing the red-blue Eulerian trails in \textbf{Case 1.2}.} 
    \label{fig:ham-case-1-2}
    \end{figure}

    Since $T^{i-1}$ is \rbet{} of $\aux_D(\PP_1 \sqcup \RR^{i-1}) \uplus \MM^{i-1}$, the loop $\lloop_{\llab(v)}$ occurs on it. 
    So let $e^1$ resp.\ $e^2$ be the blue edge in $\MM^{i-1}$ preceding resp.\ following $\lloop_{\llab(v)}$ in $T^{i-1}$.
    And let $a$ resp.\ $b$ be the label such that $a$ and $\llab(v)$ resp.\ $b$ and $\llab(v)$ are the end-points of $e^1$ resp.\ $e^2$. 
    We then define 
    \begin{equation}\label{eq:case-1-2-blue}
        \MM^i = (\MM^{i-1} - e^1 - e^2) \dot\cup \{e_{a, b}\}.
    \end{equation}
    Now on the one hand, we can easily obtain \rbet{} $T^i$ of $\aux_D(\PP_1 \sqcup \RR^i) \uplus \MM^i$ by taking $T^{i-1}$ and replacing the subtrail $\textcolor{blue}{e^1}, \textcolor{red}{\lloop_{\llab(v)}}, \textcolor{blue}{e^2}$ with a blue edge $\textcolor{blue}{e_{a, b}}$ in it~(see \cref{fig:ham-case-1-2}): by \eqref{eq:case-1-2-aux} and \eqref{eq:case-1-2-blue}, each edge is indeed visited exactly once.
    For the other direction, let $S^i$ be \rbet{} of $\aux_D(\PP_1' \sqcup \RR^i) \uplus \MM^i$.
    Then we can obtain \rbet{} of $\aux_D(\PP_1' \sqcup \RR^{i-1}) \uplus \MM^{i-1}$ by taking $S^i$ and replacing the occurrence of the blue edge $\textcolor{blue}{e_{a, b}}$ by a subtrail $\textcolor{blue}{e^1}, \textcolor{red}{\lloop_{\llab(v)}}, \textcolor{blue}{e^2}$~(see \cref{fig:ham-case-1-2}): By \eqref{eq:case-1-2-aux} and \eqref{eq:case-1-2-blue}, each edge is again visited exactly once.
    This concludes the proof for \textbf{Case 1.2}.

    Now it remains to prove the claim for the case that $P$ has non-zero length.
    We further distinguish on whether $P$ contains glue-vertices and if so, how many of them are end-vertices of $P$.
    Let $v$ and $w$ denote the end-vertices of $P$.
    
    \textbf{Case 2}
    Suppose the path $P$ does not contain glue-vertices.
    This case is very similar to \textbf{Case 1.2} but we provide the proof for completeness.
    It again holds that
    \begin{equation*}
        \begin{aligned}
        &\PP_1 \sqcup \RR^i = \PP_1 \sqcup (\RR^{i-1} - P) = (\PP_1 \sqcup \RR^{i-1}) - P \text{ and } \\
        &\PP_1' \sqcup \RR^i = \PP_1' \sqcup (\RR^{i-1} - P) = (\PP_1' \sqcup \RR^{i-1}) - P.
        \end{aligned}
    \end{equation*}
    So
    \begin{equation}\label{eq:case-2-aux}
        \begin{aligned}
        &\aux_D(\PP_1 \sqcup \RR^i) = \aux_D(\PP_1 \sqcup \RR^{i-1}) - e_{\llab(v), \llab(w)} \text{ and } \\
        &\aux_D(\PP_1' \sqcup \RR^i) = \aux_D(\PP_1' \sqcup \RR^{i-1}) - e_{\llab(v), \llab(w)}.
        \end{aligned}
    \end{equation}
    Without loss of generality, we may assume that in $T^{i-1}$ the edge $e_{\llab(v), \llab(w)}$ is traversed from $\llab(v)$ to $\llab(w)$: otherwise, we can use the reverse of $T^{i-1}$ instead.
    Let again $e^1$ resp.\ $e^2$ be the blue edge in $\MM^{i-1}$ preceding resp.\ following the red edge $e_{\llab(v), \llab(w)}$ in $T^{i-1}$.
    And let $a$ resp.\ $b$ be the labels such that $e^1$ resp.\ $e^2$ has end-vertices $a$ and $\llab(v)$ resp.\ $b$ and $\llab(w)$.
    Then we define
    \begin{equation}\label{eq:case-2-blue}
        \MM^i = (\MM^{i-1} - e^1 - e^2) \dot\cup \{e_{a, b}\}.
    \end{equation}

    \begin{figure}[t]
        \includegraphics{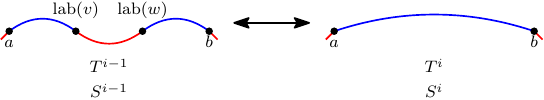}
        \centering
        \caption{Constructing the red-blue Eulerian trails in \textbf{Case 2}.} 
        \label{fig:ham-case-2}
    \end{figure}
    
    Now on the one hand, we can easily obtain \rbet{} $T^i$ of $\aux_D(\PP_1 \sqcup \RR^i) \uplus \MM^i$ by taking $T^{i-1}$ and replacing the subtrail $\textcolor{blue}{e^1}, \textcolor{red}{e_{\llab(v), \llab(w)}}, \textcolor{blue}{e^2}$ with a blue edge $\textcolor{blue}{e_{a, b}}$ in it~(see \cref{fig:ham-case-2}): 
    By \eqref{eq:case-2-aux} and \eqref{eq:case-2-blue}, each edge is indeed visited exactly once.
    For the other direction, let $S^i$ be \rbet{} of $\aux_D(\PP_1' \sqcup \RR^i) \uplus \MM^i$.
    Then we can obtain \rbet{} of $\aux_D(\PP_1' \sqcup \RR^{i-1}) \uplus \MM^{i-1}$ by replacing the occurrence of the blue edge $\textcolor{blue}{e_{a, b}}$ in $S^i$ by a subtrail $\textcolor{blue}{e^1}, \textcolor{red}{e_{\llab(v), \llab(w)}}, \textcolor{blue}{e^2}$~(see \cref{fig:ham-case-2}): By \eqref{eq:case-2-aux} and \eqref{eq:case-2-blue}, each edge is again visited exactly once and the claim holds.
    This concludes the proof for the \textbf{Case 2}.
    
    From now on, me may assume that $P$ has non-zero length and contains at least one glue-vertex.
    Let $u_1, \dots, u_q$ denote all internal glue-vertices of $P$ for some $q \in \NN_0$.
    Since these vertices are internal, we have $\deg_{\PP_2}(u_j) = 2$ for every $j \in [q]$.
    Since $\PP_1 \sqcup \PP_2$ is a maximal path packing of $D$, the path packing $\PP_1$ resp.\ $\PP_2$ is maximal for $H_1$ resp.\ $H_2$, and the graphs $H_1$ and $H_2$ are edge-disjoint, we have
    \[
        2 \geq \deg_{\PP_1 \sqcup \PP_2}(u_j) = \deg_{\PP_1}(u_j) + \deg_{\PP_2}(u_j) = \deg_{\PP_1}(u_j) + 2.
    \]
    Therefore, we have $\deg_{\PP_1}(u_j) = 0$ i.e., $u_j$ forms a zero-length path in $\PP_1$.
    Since the path packing $\RR^i$ does not contain $u_j$ by construction, the vertex $u_j$ also forms a zero-length path in $\PP_1 \sqcup \RR^{i}$.
    Next recall that $u_j$ is a glue-vertex so it is a unique vertex with label $\llab(u_j)$ in $H_1$ and $H_2$.
    Then if there exists a blue multigraph $\MM$ such that each of $\aux_D(\PP_1 \sqcup \RR^{i}) \uplus \MM$ and $\aux_D(\PP_1' \sqcup \RR^{i}) \uplus \MM$ admits \rbet{}, then by \cref{obs:replacement-property}, the vertex $u_i$ also forms a path in $\PP_1' \sqcup \RR^{i}$ (and hence, also in $\PP_1'$).
    We will apply this property to $\MM = \MM^i$ to prove the latter direction of \cref{claim:blue-multigraphs-sequence}.
    
    As mentioned before, the previous two cases were very similar to the correctness of the procedure for a union-node in the clique-width algorithm (see \cite{BergougnouxKK20}).
    In the remaining cases, the approach will be more involved but still natural.
    In the remainder of the paragraph we try to sketch this process and the main difficulty of these cases. 
    The details will become clear in the description of the remaining cases though.
    As before, we will replace a certain subtrail $A$ of \rbet{} $T^{i-1}$ of $\aux_D(\PP_1 \sqcup \RR^{i-1}) \uplus \MM^{i-1}$ by a sequence $B$ of edges to obtain \rbet{} $T^i$ of $\aux_D(\PP_1 \sqcup \RR^i) \uplus \MM^i$.
    In the previous cases, the sequence $B$ consisted of a single edge.
    So when we then considered \rbet{} $S^i$ of $\aux_D(\PP'_1 \sqcup \RR^i) \uplus \MM^i$, it was \emph{easy} to replace this edge by $A$ again to obtain \rbet{} $S^{i-1}$ of $\aux_D(\PP'_1 \sqcup \RR^{i-1}) \uplus \MM^{i-1}$.
    In the following cases, the situation might become less simple since $B$ possibly consists of multiple edges there.
    For this reason, first, $B$ does not necessarily occur as a subtrail in $S^i$, i.e., the edges of $B$ do not necessarily occur consecutively.
    And second, some of the edges of $B$ possibly do not even occur in $\aux_D(\PP'_1 \sqcup \RR^i)$.
    Therefore, the construction of $S^{i-1}$ from $S^i$ is less straight-forward in these cases.
    Therefore, in general it is not possible to simply replace $B$ with $A$ to obtain $S^{i-1}$.
    For this reason, a more careful analysis is required to show that such a \rbet{} $S^{i-1}$ of $\aux_D(\PP_1' \sqcup \RR^{i-1})$ still exists.
    Now we move on to details.
    First, observe that if at least one of the vertices $v$ and $w$ is not a glue-vertex, the acyclicity of $\PP_1' \sqcup \RR^i$ also implies the acyclicity of $\PP_1' \sqcup \RR^{i-1}$ so $\PP_1' \sqcup \RR^{i-1}$ is a path packing.
    We will come back to this issue in \textbf{Case 3.3} where both $v$ and $w$ are glue-vertices. 

    \textbf{Case 3.1}
    First, assume that neither $v$ nor $w$ is a glue-vertex.
    Then $P$ is a path in $\PP_1 \sqcup \RR^{i-1}$.
    Since we know by assumption that $P$ contains a glue-vertex, we have $q > 0$ and it holds that
    \[
        \PP_1 \sqcup \RR^i = \left((\PP_1 \sqcup \RR^{i-1}) - P\right) \dot\cup \left\{\{u_1\}, \dots, \{u_q\}\right\}. 
    \]
    And hence
    \begin{equation}\label{eq:case-3-1-aux}
        \begin{aligned}
            \aux_D(\PP^1 \sqcup \RR^i) = &\left(\aux_D(\PP_1 \sqcup \RR^{i-1}) - e_{\llab(v), \llab(w)}\right) \dot\cup \\
            &\left\{\lloop_{\llab(u_1)}, \dots, \lloop_{\llab(u_q)}\right\}.
        \end{aligned}
    \end{equation}
    As before, we may assume that the edge $e_{\llab(v), \llab(w)}$ is traversed from $\llab(v)$ to $\llab(w)$ in $T^{i-1}$.
    So let again $e^1$ resp.\ $e^2$ be the blue edge in $\MM^{i-1}$ preceding resp.\ following $e_{\llab(v), \llab(w)}$ in $T^{i-1}$.
    And let $a$ resp.\ $b$ be the label such that the end-vertices of $e^1$ resp.\ $e^2$ are $a$ and $\llab(v)$ resp.\ $b$ and $\llab(w)$.
    We then set
    \begin{equation}\label{eq:case-3-1-blue}
        \MM^i = (\MM^{i-1} - e^1 - e^2) \dot\cup \left\{e_{a, \llab(u_1)}, e_{\llab(u_q), b}\right\} \dot\cup \left\{e_{\llab(u_d), \llab(u_{d+1})} \mid d \in [q-1]\right\}.
    \end{equation}
    Then we can obtain \rbet{} $T^i$ of $\aux_D(\PP_1 \sqcup \RR^i) \uplus \MM^i$ by taking $T^{i-1}$ and replacing the subtrail $\textcolor{blue}{e^1}, \textcolor{red}{e_{\llab(v), \llab(w)}}, \textcolor{blue}{e^2}$ with the sequence $L$ given by
    \begin{equation*}
        \begin{aligned}
        L = &\textcolor{blue}{e_{a, \llab(u_1)}}, &\textcolor{red}{\lloop_{\llab(u_1)}}, \\ 
        &\textcolor{blue}{e_{\llab(u_1), \llab(u_2)}}, &\textcolor{red}{\lloop_{\llab(u_2)}}, \\
        &\dots, & \\
        &\textcolor{blue}{e_{\llab(u_{q-1}), \llab(u_q)}} &\textcolor{red}{\lloop_{\llab(u_q)}}, \\
        &\textcolor{blue}{e_{\llab(u_q), b}}&
        \end{aligned}
    \end{equation*}
    (see \cref{fig:ham-case-3-1}).
    Note that by \eqref{eq:case-3-1-aux} and \eqref{eq:case-3-1-blue}, the trail $T^i$ indeed contains all edges.

    \begin{figure}[b]
        \includegraphics{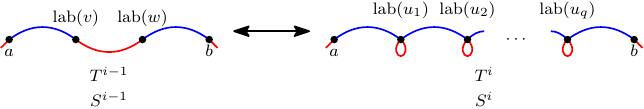}
        \centering
        \caption{Constructing the red-blue Eulerian trails in \textbf{Case 3.1}.} 
        \label{fig:ham-case-3-1}
    \end{figure}
    
    On the other hand, recall two properties. 
    First, every vertex in $u_1, \dots, u_q$ forms a zero-length path in $\PP_1'$ (argued above). 
    And second $u_1, \dots, u_q$ is the set of glue-vertices on $P$.
    Thus, $P$ forms a path in $\PP_1' \sqcup \RR^{i-1}$ and we have
    \[
        \PP_1' \sqcup \RR^i = \left((\PP_1' \sqcup \RR^{i-1}) - P\right) \dot\cup \left\{\{u_1\}, \dots, \{u_q\}\right\}.
    \]
    Therefore
    \begin{equation}\label{eq:case-3-1-aux-prime}
    \begin{aligned}
        \aux_D(\PP_1' \sqcup \RR^i) = &\left(\aux_D(\PP_1' \sqcup \RR^{i-1}\right) -
        e_{\llab(v), \llab(w)}) \dot\cup \\
        &\left\{\lloop_{\llab(u_1)}, \dots, \lloop_{\llab(u_q)}\right\}.
    \end{aligned}
    \end{equation}
    Now consider \rbet{} $S^i$ of $\aux_D(\PP'_1 \sqcup \RR^i) \uplus \MM^i$.
    We make the following observation.
    For every $j \in [q]$, since the only red edge incident with $\lab(u_j)$ in $\aux_D(\PP_1' \sqcup \RR^i)$ is a loop, there are exactly two blue edges, say $f^1$ and $f^2$, incident with $\lab(u_j)$ in $\MM^i$, namely the two edges from the set 
    \[
        \left\{e_{i, \llab(u_1)}, e_{\llab(u_q), j}\right\} \cup \left\{e_{\llab(u_d), \llab(u_{d+1})} \mid d \in [q-1]\right\}
    \]
    that are incident with $\llab(u_j)$.
    This implies that $f^1$, $\lloop_{\llab(u_j)}$, and $f^2$ appear consecutively in~$S^i$.
    Since this holds for every $j \in [q]$, the sequence $L$ or its reverse is a subtrail of~$S^i$.
    As before, we may assume that $L$ is a subtrail of $S^i$.
    Now we can obtain \rbet{} of $\aux_D(\PP'_1 \sqcup \RR^{i-1}) \uplus \MM^i$ by taking $S^i$ and replacing $L$ with $\textcolor{blue}{e^1}, \textcolor{red}{e_{\llab(v), \llab(w)}}, \textcolor{blue}{e^2}$~(see \cref{fig:ham-case-3-1}): 
    By \eqref{eq:case-3-1-blue} and \eqref{eq:case-3-1-aux-prime}, it indeed contains all edges of $\aux_D(\PP_1' \sqcup \RR^{i-1}) \uplus \MM^i$.
    In this case, for both directions of the claim we still were able to replace two subtrails $\textcolor{blue}{e^1}, \textcolor{red}{e_{\llab(v), \llab(w)}}, \textcolor{blue}{e^2}$ and $L$ with each other.
    In the remaining two cases, this will not be true anymore so we will work with different pairs of subtrails in the ``forward'' and ``backward'' directions of the proof (the details follow).

    For the remainder of the proof we may assume that at least one end-vertex of $P$ is a glue-vertex, say $v$.
    Observe the following: since $v$ is an end-vertex of a path $P$ of non-zero length, its degree in $\PP_2$ is exactly one.
    The vertex $v$ is a glue-vertex so by \cref{obs:aux-degree-conversion} the degree of $\llab(v)$ in $\aux_D(\PP_2)$ is one as well.
    Since $\PP_1 \sqcup \PP_2$ is a path packing containing $v$, we have $\deg_{\PP_1 \sqcup \PP_2}(v) \leq 2$.
    Recall that $\PP_1$ is maximal in $H_1$ so it contains $v$. 
    Together with $\deg_{\PP_2}(v) = 1$ this implies $\deg_{\PP_1}(v) \leq 1$.
    Thus, the vertex $v$ is an end-vertex of some path in the path packing $\PP_1$.
    So this also holds for the family $\PP_1 \sqcup \RR^i$.
    By \cref{obs:replacement-property}, this also holds for $\PP_1' \sqcup \RR^i$ and therefore, also for $\PP_1'$.
    This observation also implies that the path in $\PP_1 \sqcup \RR^i$ with end-vertex $v$ has non-zero length if and only if this holds for $\PP_1 \sqcup \RR^i$.
    If $w$ is a glue-vertex as well, the symmetric property holds.
    
    \textbf{Case 3.2}
    In this case, we assume that $w$ is not a glue-vertex.
    Let $\hat{P}$ be the path in $\PP_1 \sqcup \RR^{i-1}$ containing $P$ as a subpath.
    Then $w$ is an end-vertex of $\hat{P}$.
    Let $r \in [k]$ denote the label of the other end-vertex of $\hat{P}$.
    Note that $r = \llab(v)$ is possible.
    Our path $P$ is a suffix or a prefix of $\hat{P}$. 
    Without loss of generality, we assume that $P$ is a suffix of $\hat{P}$: otherwise, we could take the reverse of $\hat{P}$ instead.
    Let $\hat P v$ denote the prefix of $\hat P$ ending in $v$.
    Then the following holds
    \[
        \PP_1 \sqcup \RR^i = \left((\PP_1 \sqcup \RR^{i-1}) - \hat{P}\right) \dot\cup \left\{\hat{P}v, \{u_1\}, \dots, \{u_q\}\right\}
    \]
    and 
    \begin{equation}\label{eq:case-3-2-aux}
        \begin{aligned}
            \aux_D(\PP_1 \sqcup \RR^{i-1}) = &\left(\aux_D(\PP_1 \sqcup \RR^i\right) - e_{r, \llab(w)}) \dot\cup \\
            &\left\{e_{r, \llab(v)}, \lloop_{\llab(u_1)}, \dots, \lloop_{\llab(u_q)}\right\}.
        \end{aligned}
    \end{equation}
    As before, we may assume that in $T^{i-1}$ the edge $e_{r, \llab(w)}$ is traversed from $r$ to $\llab(w)$.
    So let $e$ be the blue edge following $e_{r, \llab(w)}$ in $T^{i-1}$ and let $b \in [k]$ be such that $\llab(w)$ and $b$ are the end-vertices of $e$.
    For simplicity of notation, we set $u_0 = v$.
    We define
    \begin{equation}\label{eq:case-3-2-blue}
        \begin{aligned}
            \MM^i = &(\MM^{i-1} - e) \dot\cup \\
            &\left\{ e_{\llab(u_d), \llab(u_{d+1})} \mid d \in [q - 1]_0\right\} \dot\cup \left\{e_{\llab(u_q), b}\right\}.
        \end{aligned}
    \end{equation}
    Then we can construct \rbet{} $T^i$ by taking $T^{i-1}$ and replacing the subtrail $\textcolor{red}{e_{r, \llab(w)}}, \textcolor{blue}{e}$ with a sequence $L$ where
    \begin{equation*}
    \begin{aligned}
        L = &&\textcolor{red}{e_{r, \llab(v)}},\\
        &\textcolor{blue}{e_{\llab(v), \llab(u_1)}}, &\textcolor{red}{\lloop_{\llab(u_1)}}, \\
        &\textcolor{blue}{e_{\llab(u_1),\llab(u_2)}}, &\textcolor{red}{\lloop_{\llab(u_2)}}, \\
        &\dots, \\
        &\textcolor{blue}{e_{\llab(u_{q-1}),\llab(u_q)}}, &\textcolor{red}{\lloop_{\llab(u_q)}},\\
        &\textcolor{blue}{e_{\llab(u_q), b}}
    \end{aligned}
    \end{equation*}
    (see \cref{fig:ham-case-3-2}~(a)).
    Note that $T^i$ indeed uses all edges of $\aux_D(\PP_1 \sqcup \RR^i) \uplus \MM^i$ 
    (see \eqref{eq:case-3-2-aux} and \eqref{eq:case-3-2-blue}).

    \begin{figure}[t]
    \includegraphics{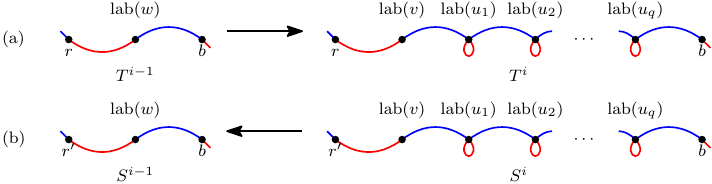}
    \centering
    \caption{(a) Constructing the red-blue Eulerian trail (a) $T^i$ from $T^{i-1}$ and (b) constructing the red-blue Eulerian trail $S^i$ from $S^{i-1}$ in \textbf{Case 3.2}.} 
    \label{fig:ham-case-3-2}
    \end{figure}

    For the other direction, we consider \rbet{} $S^i$ of $\aux_D(\PP_1' \sqcup \RR^i) \uplus \MM^i$.
    Above we have argued that $v$ is an end-vertex of some path, say $P^*$, in $\PP_1' \sqcup \RR^{i}$, and for every $j \in [q]$, the vertex $u_j$ forms a zero-length path in $\PP_1'$.
    Let $r' \in [k]$ be such that $r'$ and $\llab(v)$ are the labels of the end-vertices of $P^*$.
    Note that $r' = \llab(v)$ is possible. 
    Then it holds that
    \[
        \PP_1' \sqcup \RR^{i-1} = \left(\left(\PP_1' \sqcup \RR^i\right) - \left\{P^*, \{u_1\}, \dots, \{u_q\}\right\}\right) \dot\cup \left\{P^* \sqcup P\right\}
    \]
    and therefore, also
    \begin{equation}\label{eq:case-3-2-aux-prime}
    \begin{aligned}
        \aux_D(\PP_1' \sqcup \RR^{i-1}) = &\left(\aux_D(\PP_1' \sqcup \RR^i) - e_{r', \llab(v)} - \lloop_{\llab(u_1)} - \dots - \lloop_{\llab(u_q)}\right) \dot\cup \\ 
        &\left\{e_{r', \llab(w)}\right\}.
    \end{aligned}
    \end{equation}
    Now observe the following. 
    Since $v$ is a glue-vertex and $v \in \PP_1'$ holds, 
    there is exactly one edge, say $e^*$, incident with $\llab(v)$ in $\aux_{D}(\PP_1' \sqcup \RR^i)$.
    And this edge is a loop if and only if $P^* = \{v\}$ holds.
    So the degree of $\llab(v)$ in $\MM^i$ is one if this edge is not a loop and two if it is.
    Now we can make some observations about the occurrence of $e^*$ in $S^i$. 
    
    First, consider the case that $P^*$ consists of $v$ only.
    Above we have argued that then in $\PP_1 \sqcup \RR^i$ there is also a path consisting of $v$ only (and in particular, we have $r = \lab(v)$).
    The existence of \rbet{} implies that in this case there are exactly two blue edges in $\MM^i$ incident with $\llab(v)$ so they must precede and follow $e^*$ in $S^i$.
    Crucial is that these edges are the same (possibly their order is swapped) that follow and precede $e_{t, \llab(v)}$ in $T^i$.
    The same holds for edges incident with a vertex $u_j$ for every $j \in [q]$.
    
    Now consider the case that $P^*$ has non-zero length, i.e., $\llab(v)$ has the red degree of exactly one in $\aux_{D}(\PP_1' \sqcup \RR^i)$.
    Since $S^i$ is \rbet{}, the vertex $\llab(v)$ also has the blue degree of exactly one in $\MM^i$. 
    Now we may again assume that $e^*$ is traversed from $r'$ to $\llab(v)$ in $S^i$.
    Then the edge following $e^*$ in $S^i$ is the unique blue edge incident with $\llab(v)$ in $\MM^i$, i.e., the same blue edge follows $e_{t, \llab(v)}$ in $T^i$.

    Altogether, this implies that a sequence $L'$ is a subtrail of $S^i$ where
    \begin{equation*}
        \begin{aligned}
            L' = &&\textcolor{red}{e_{r', \llab(v)}}, \\ 
            &\textcolor{blue}{e_{\llab(v), \llab(u_1)}}, &\textcolor{red}{\lloop_{\llab(u_1)}}, \\
            &\textcolor{blue}{e_{\llab(u_1), \llab(u_2)}}, &\textcolor{red}{\lloop_{\llab(u_2)}}, \\
            & \dots \\
            &\textcolor{blue}{e_{\llab(u_{q-1}), \llab(u_q)}}, &\textcolor{red}{\lloop_{\llab(u_q)}}, \\
            &\textcolor{blue}{e_{\llab(u_q), b}}.&
        \end{aligned}
    \end{equation*}
    Hence, it can be replaced with a sequence $\textcolor{red}{e_{r', \llab(w)}}, \textcolor{blue}{e}$~(see \cref{fig:ham-case-3-2}~(b)) to obtain \rbet{} of $\aux_D(\PP_1' \sqcup \RR^{i-1}) \uplus \MM^{i-1}$ (see \eqref{eq:case-3-2-blue} and~\eqref{eq:case-3-2-aux-prime} to verify that every edge is indeed used exactly once).

    \textbf{Case 3.3}
    Now we remain with the case where both $v$ and $w$ are glue-vertices.
    For simplicity of notation, let us denote $u_0 = v$ and $u_{q+1} = w$. 
    By \textbf{Case 1.1} we may assume that $v \neq w$ holds.
    Since both $v$ and $w$ are glue-vertices, it also holds that $\llab(v) \neq \llab(w)$.
    Let $\hat{P}$ again be the path in $\PP_1 \sqcup \RR^{i-1}$ that contains $P$ as a subpath.
    We may assume that $v$ occurs on both $P$ and $\hat P$ before $w$: otherwise we may use the reverse of the violating path instead. 
    Let $r \in [k]$ resp.\ $s \in [k]$ be the label of the start- resp.\ end-vertex of $\hat P$.
    Then it holds that
    \[
        \PP_1 \sqcup \RR^i = \left((\PP_1 \sqcup \RR^{i-1}) - \hat P\right) \dot\cup \left\{\hat P v, \{u_1\}, \dots, \{u_q\}, w \hat P\right\}
    \]
    where $\hat P v$ denotes the prefix of $\hat P$ ending in $v$ and $w \hat P$ denotes the suffix of $\hat P$ starting in $w$.
    So we have
    \begin{equation}\label{eq:case-3-3-aux}
        \begin{aligned}
            \aux_D(\PP_1 \sqcup \RR^i) = &\left(\aux_D(\PP_1 \sqcup \RR^{i-1}) - \{e_{r, s}\}\right) \\
            &\dot\cup \left\{e_{r, \llab(v)}, \lloop_{\llab(u_1)}, \dots, \lloop_{\llab(u_q)}, e_{\llab(w), s}\right\}.
        \end{aligned}
    \end{equation}
    We define
    \begin{equation}\label{eq:case-3-3-blue}
        \MM^i = \MM^{i-1} \dot\cup \left\{e_{\llab(u_d) \llab(u_{d+1})} \mid d \in [q]_0 \right\}.
    \end{equation}
    Then we can obtain \rbet{} $T^i$ by taking $T^{i-1}$ and replacing the red edge $\textcolor{red}{e_{r, s}}$ with a subtrail $L$ where 
    \begin{equation*}
        \begin{aligned}
            L = &&\textcolor{red}{e_{r, \llab(v)}},  \\
            &\textcolor{blue}{e_{\llab(v), \llab(q_1)}}, &\textcolor{red}{\lloop_{\llab(u_1)}}, \\
            &\textcolor{blue}{e_{\llab(u_1), \llab(u_2)}}, &\textcolor{red}{\lloop_{\llab(u_2)}}, \\
            &\dots,\\ 
            &\textcolor{blue}{e_{\llab(u_q), \llab(w)}}, &\textcolor{red}{e_{\llab(w), s}}
        \end{aligned}
    \end{equation*}
    (see \cref{fig:ham-case-3-3}~(a)).
    Note that by \eqref{eq:case-3-3-aux} and \eqref{eq:case-3-3-blue}, the trail $T^i$ indeed contains all edges.

    \begin{figure}[b]
    \includegraphics{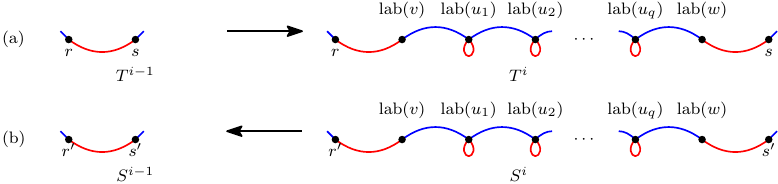}
    \centering
    \caption{(a) Constructing the red-blue Eulerian trail (a) $T^i$ from $T^{i-1}$ and (b) constructing the red-blue Eulerian trail $S^i$ from $S^{i-1}$ in \textbf{Case 3.3}.} 
    \label{fig:ham-case-3-3}
    \end{figure}

    Now we prove the other direction.
    Let $S^i$ be \rbet{} of $\aux_D(\PP_1' \sqcup \RR^i) \uplus \MM^i$.
    First, to show that $\aux_D(\PP'_1 \sqcup \RR^{i-1})$ is well-defined, we prove that $\PP'_1 \sqcup \RR^{i-1}$ is a path packing by showing its acyclicity (above we explained why this needs to be proven in this case only).
    Above we have argued that for every $j \in [q]$ the vertex $u_j$ forms a path in $\PP_1' \sqcup \RR^i$.
    Thus, the only edge incident with the vertex $\llab(u_j)$ in $\aux_D(\PP_1' \sqcup \RR^i)$ is a loop. 
    So the only two blue edges incident with $\llab(u_j)$ in $\MM^i$ appear right before and after this loop in $S^i$.
    Thus, the sequence
    \begin{equation*}
        \begin{aligned}
            Q = &\textcolor{blue}{e_{\llab(v), \llab(u_1)}}, &\textcolor{red}{\lloop_{\llab(u_1)}}, \\
            &\textcolor{blue}{e_{\llab(u_1), \llab(u_2)}}, &\textcolor{red}{\lloop_{\llab(u_2)}}, \\
            &\dots, \\ 
            &\textcolor{blue}{e_{\llab(u_q), \llab(w)}}.&
        \end{aligned}
    \end{equation*}
    (or its reverse) is a subtrail of $S^i$.
    As before we may assume that $Q$ is a subtrail of $S^i$.
    Let $e^1$ resp.\ $e^2$ be the red edges preceding resp.\ following $e_{\llab(v), \llab(u_1)}$ resp.\ $e_{\llab(u_q), \llab(w)}$ in $S^i$.
    And let $P_v'$ resp.\ $P_w'$ be the path in $\aux_D(\PP_1' \sqcup \RR^i)$ with an end-point $v$ resp.\ $w$ (above we have argued that such a path exists).
    Let $r' \in [k]$ resp.\ $s' \in [k]$ be such that $r'$ and $\llab(v)$ resp.\ $\llab(w)$ and $s'$ are the labels of end-vertices of $P_v'$ resp.\ $P_w'$.
    Observe that we have $e^1 = e_{r', \llab(v)}$ and $e^2 = e_{\llab(w), s'}$ since $v$ and $w$ are vertices with unique labels in $D$.
    To prove the claimed acyclicity, suppose it holds that $e^1 = e^2$. 
    Then $S^i$ consists exactly of the following edges
    \begin{equation*}
            \begin{aligned}
            S^i = &&\textcolor{red}{(e^1 = e^2)}, \\
            &\textcolor{blue}{e_{\llab(v), \llab(u_1)}}, &\textcolor{red}{\lloop_{\llab(u_1)}}, \\
            &\textcolor{blue}{e_{\llab(u_1), \llab(u_2)}}, &\textcolor{red}{\lloop_{\llab(u_2)}}, \\
            &\dots,& \\ 
            &\textcolor{blue}{e_{\llab(u_q), \llab(w)}}
            \end{aligned}
    \end{equation*}
    and we have $r' = \llab(w)$ and $s' = \llab(v)$.
    This implies that first, $P_v' = P_w'$ (since $v$ and $w$ are vertices of unique labels) and second, the path packing $\PP_1' \sqcup \RR^i$ consists of a path from $v$ to $w$ and paths $\{u_1\}, \dots, \{u_q\}$ only.
    Therefore, the multigraph $\aux_D(\PP_1 \sqcup \RR^i)$ consists of loops at $\llab(u_1), \dots, \llab(u_q)$ and an edge between $\llab(v)$ and $\llab(w)$.
    The degree sequences of $\aux_D(\PP_1 \sqcup \RR^i)$ and $\aux_D(\PP'_1 \sqcup \RR^i)$ coincide (recall \cref{obs:replacement-property}) so the degree sequences of $\PP_1 \sqcup \RR^i$ and $\PP'_1 \sqcup \RR^i$ coincide as well (recall \cref{obs:aux-degree-conversion}). 
    But then both $\PP_1 \sqcup \RR^i$ and $\{P\}$ contain a path from $v$ to $w$ (where $v$ and $w$ are distinct glue-vertices) so
    \[
        (\PP_1 \sqcup \RR^i) \sqcup \{P\} = \PP_1 \sqcup \RR^{i-1}
    \]
    contains a cycle -- this contradicts the fact that the graph $\PP_1 \sqcup \RR^{i-1})$ (a subgraph of a path packing $\PP_1 \sqcup \PP_2$) is a path packing.
    Hence, it holds that $e^1 \neq e^2$ and therefore, $P_v' \neq P_w'$.
    
    Now we have:
    \[
        \PP_1' \sqcup \RR^{i-1} = \left(\left(\PP_1' \sqcup \RR^i\right) - \left\{P_v', \{u_1\}, \dots, \{u_1\}, P_w'\right\}\right) \dot\cup \left\{P_v' \sqcup P \sqcup P_w'\right\}.
    \]
    Since the paths $P_v'$ and $P_w'$ are distinct, in particular, this implies that $\PP_1' \sqcup \RR^{i-1}$ is acyclic and therefore, it is a path packing.
    We then have
    \begin{equation}\label{eq:case-3-3-aux-prime}
        \begin{aligned}
            &\aux_D(\PP_1' \sqcup \RR^{i-1}) = \\ 
            &\left(\aux_D(\PP_1' \sqcup \RR^i) - e_{r', \llab(v)} - \lloop_{\llab(u_1)} - \dots - \lloop_{\llab(u_q)}, e_{\llab(w), s'}\right) \dot\cup
            \left\{e_{r', s'}\right\}.
        \end{aligned}
    \end{equation}
    Above we have argued that the sequence $L'$ is a subtrail of $S^i$ where
    \begin{equation*}
        \begin{aligned}
            L' = &&\textcolor{red}{e_{r', \llab(v)}}, \\
            &\textcolor{blue}{e_{\llab(v), \llab(u_1)}}, &\textcolor{red}{\lloop_{\llab(u_1)}}, \\
            &\textcolor{blue}{e_{\llab(u_1), \llab(u_2)}}, &\textcolor{red}{\lloop_{\llab(u_2)}}, \\
            &\dots, \\ 
            &\textcolor{blue}{e_{\llab(u_q), \llab(w)}}, &\textcolor{red}{e_{\llab(w), s'}}.
        \end{aligned}
    \end{equation*}
    Now this subtrail of $S^i$ can be replaced by the red edge $e_{r', s'}$~(see \cref{fig:ham-case-3-3}~(b)) to obtain \rbet{} $T_{i-1}'$ of $\aux_D(\PP_1' \sqcup \RR^{i-1}) \uplus \MM^{i-1}$ (see \eqref{eq:case-3-3-blue} and \eqref{eq:case-3-3-aux-prime} to verify that every edge is indeed used exactly once).
    This concludes the proof of the claim.
    \end{claimproof}

    Applied to $i = t$, the claim implies the existence of a blue multigraph $\MM^t$ with the following two properties.
    First, the multigraph $\aux_D(\PP_1 \sqcup \RR^t) \uplus \MM^t$ admits a red-blue Eulerian trail $T^t$.
    Since $\RR^t = \emptyset$, we obtain that 
    \[
        \aux_D(\PP_1 \sqcup \RR^t) \uplus \MM^t =\aux_D(\PP_1) \uplus \MM^t = \aux_{H_1}(\PP_1) \uplus \MM^t 
    \]
    admits a red-blue Eulerian trail $T^t$.
    Then the property $\AAA_1 \lesssim_{H_1} \Pi(H_1)$ implies that there exists a maximal path packing $\PP_1' \in \AAA_1$ such that 
    \[
        \aux_{H_1}(\PP_1) \uplus \MM^t = \aux_D(\PP_1) \uplus \MM^t 
    \]
    admits \rbet{}, say $S^t$.
    Then the second part of the claim applied to $i = t, \dots, 1$ implies that the multigraph 
    \[
        \aux_D(\PP_1' \sqcup \RR^0) \uplus \MM^0 = \aux_D(\PP_1' \sqcup \PP^2) \uplus \MM
    \]
    admits a \rbet{}.
    
    Now, a symmetric argument implies that there also exists a maximal path packing $\PP_2' \in \AAA_2$ such that $\aux_D(\PP_1' \sqcup \PP_2')\uplus \MM$ admits a \rbet{} (and in particular, $\PP_1' \sqcup \PP_2'$ is a path packing so the auxiliary graph is well-defined).
    By definition of $S$, we obtain that $\PP_1' \sqcup \PP_2'$ belongs to $S$.
    
    Altogether, we have shown that for every blue multigraph $\MM$ and every maximal path packing $\PP \in \Pi(D)$, if $\aux_D(\PP) \uplus \MM$ admits \rbet{}, then there exists a maximal path packing $\PP' \in S$ of $D$ such that $\aux_D(\PP) \uplus \MM$ also admits \rbet{}, i.e., we have $S \lesssim_D \Pi(D)$ as desired.
    Computing $S$ in time $\ostar(|\AAA_1| |\AAA_2|)$ is trivial: we iterate over all pairs $\PP_1 \in \AAA_1$ and $\PP_2 \in \AAA_1$, compute $\PP_1 \sqcup \PP_2$ in time polynomial in the size of $H$ and then check whether this subgraph is acyclic.
\end{proof}

Now we are ready to provide the algorithm solving the \HC{} problem parameterized by fusion-width.

\begin{theorem}
    Given a fuse-$k$-expression of a graph $H$, the \HC{} problem can be solved in time $n^{\mathcal{O}(k)}$.
\end{theorem}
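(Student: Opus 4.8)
The plan is to reduce the problem to the glue-expression machinery developed above. First I would apply \cref{app:thm:useful-expression} to convert the given fuse-$k$-expression into a reduced glue-$k$-expression $\xi$ of $H$; this runs in polynomial time and, crucially, yields an expression with only $\mathcal{O}(k^2(m+n))$ nodes, hence a number of nodes polynomial in $n$. I would then run a bottom-up dynamic program over $\xi$ maintaining, at every node $t$, a family $\AAA_t \subseteq \Pi(G^\xi_t)$ of maximal path packings that $G^\xi_t$-represents all of them, i.e.\ $\AAA_t \lesssim_{G^\xi_t} \Pi(G^\xi_t)$, and whose cardinality never exceeds $n^k \cdot 2^{k(\log_2 k + 1)} = n^{\mathcal{O}(k)}$.

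The per-node computation splits by operation. For introduce-, relabel-, and join-nodes I would reuse verbatim the procedures of Bergougnoux et al.~\cite{BergougnouxKK20} for clique-width, since these three operations coincide with the clique-width operations; the first property of \cref{lem:useful-glue-expression} (no join recreates an existing edge) supplies exactly the irredundancy their analysis requires. The only genuinely new operation is the glue-node, and this is precisely \cref{lem:dp-glue-node}: for a glue-node $t$ with children $t_1, t_2$ and $G^\xi_t = G^\xi_{t_1} \sqcup G^\xi_{t_2}$, the second property of \cref{lem:useful-glue-expression} guarantees that $G^\xi_{t_1}$ and $G^\xi_{t_2}$ are glueable and edge-disjoint, so the lemma applies and produces from $\AAA_{t_1}, \AAA_{t_2}$ a family $S$ with $S \lesssim_{G^\xi_t} \Pi(G^\xi_t)$ in time $\ostar(|\AAA_{t_1}| \cdot |\AAA_{t_2}|)$. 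After computing the candidate family $\mathcal{B}_t$ at any node I would immediately set $\AAA_t = \rreduce_{G^\xi_t}(\mathcal{B}_t)$; by \cref{lem:size-of-repr} this shrinks it to $n^{\mathcal{O}(k)}$, and since $\rreduce_{G^\xi_t}(\mathcal{B}_t) \lesssim_{G^\xi_t} \mathcal{B}_t$ and $\lesssim_{G^\xi_t}$ is transitive, representativity is preserved. A straightforward induction over $\xi$ then gives $\AAA_t \lesssim_{G^\xi_t} \Pi(G^\xi_t)$ at every node.

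Each node is processed in time $n^{\mathcal{O}(k)}$: the bottleneck is combining two families of size $n^{\mathcal{O}(k)}$ at glue- and join-nodes and then calling $\rreduce$, all bounded by \cref{lem:size-of-repr,lem:dp-glue-node}. As $\xi$ has only polynomially many nodes, the overall running time is $n^{\mathcal{O}(k)}$. Finally, at the root $y$ with $G^\xi_y = H$, I would decide Hamiltonicity from $\AAA_y$ exactly as Bergougnoux et al.\ do for clique-width: a Hamiltonian cycle of $H$ is witnessed by a maximal path packing whose auxiliary multigraph, together with a suitable blue multigraph, admits a red-blue Eulerian trail that closes the packing into a single spanning cycle; since $\AAA_y \lesssim_H \Pi(H)$, testing the representatives in $\AAA_y$ decides this correctly.

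The conceptually hard part — propagating representativity across a glue-node, where several paths of the two partial solutions can merge and may even create cycles or high-degree vertices — has already been discharged in \cref{lem:dp-glue-node}, so relative to that lemma the theorem is a routine assembly. The points where I would be most careful are the interface conditions: I must ensure that reducedness really delivers glueable, edge-disjoint inputs to \cref{lem:dp-glue-node} at \emph{every} glue-node, and that $\rreduce$ is invoked after each operation so that family sizes stay $n^{\mathcal{O}(k)}$ throughout rather than multiplying up along long relabel/join paths of the expression.
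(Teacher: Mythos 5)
Your dynamic program over the reduced glue-expression is exactly the paper's: convert the fuse-$k$-expression via \cref{app:thm:useful-expression}, reuse the procedures of Bergougnoux et al.\ for introduce-, relabel- and join-nodes (with the first property of \cref{lem:useful-glue-expression} supplying the irredundancy their analysis needs), handle glue-nodes by \cref{lem:dp-glue-node}, and apply $\rreduce$ after every node so that all families stay of size $n^{\mathcal{O}(k)}$. Up to the root, this matches the paper's proof step for step.

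The gap is in the extraction at the root. You claim that since $\AAA_y \lesssim_{H} \Pi(H)$ you can ``test the representatives in $\AAA_y$'' for a packing that closes into a spanning cycle. But the property you actually need to test --- a single spanning path whose two \emph{specific} end-vertices are adjacent in $H$ --- is not of the form ``$\aux_H(\cdot) \uplus \MM$ admits a red-blue Eulerian trail'' for any blue multigraph $\MM$, because $\MM$ lives on the label set, not on the vertices. Concretely, $H$ may contain a Hamiltonian path $P$ from $u$ to $v$ with $uv \in E(H)$ and another Hamiltonian path $P'$ from $u'$ to $v'$ with $\lab(u') = \lab(u)$ and $\lab(v') = \lab(v)$ but $u'v' \notin E(H)$; then $\{P\} \simeq_H \{P'\}$, so $\rreduce$ may discard $P$ in favour of $P'$, and your test at the root wrongly rejects. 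Representativity preserves only label-level information and cannot distinguish these two packings. The paper closes exactly this gap by iterating over every edge $e = uv$ of $H$, giving $u$ and $v$ fresh unique labels $k+1$ and $k+2$ throughout the expression, making the root a join-node that creates precisely the edge $e$ (and restoring reducedness of the modified expression), and then checking whether the representative family computed for the child of the root --- which generates $H - e$ --- contains a single spanning path between the two uniquely labelled vertices; the uniqueness of these labels is what makes the target property expressible via $\aux$ together with the single blue edge between $k+1$ and $k+2$. Your proof needs this (or an equivalent) device; as written, the final step does not follow from the machinery you invoke.
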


\begin{proof}
    First, by \cref{app:thm:useful-expression}, in time polynomial in the size of the given fuse-$k$-expression and $k$, we can compute a reduced glue-$k$-expression $\phi$ of $H$ whose size is polynomial in the size of $H$ and $k$.

    Let $e = uv$ be an arbitrary but fixed edge of $H$.
    In the following, our algorithm will decide, whether $H$ admits a Hamiltonian cycle containing $e$.
    Then, by checking this for every edge of $H$, we can solve \HC{}.
    First, we slightly transform $\phi$ into a reduced glue-$(k+2)$-expression $\xi$ of $H$ such that the root of $\xi$ is a join-node that creates the edge $e$ only.
    For this we proceed as follows.
    For the simplicity of notation, let $i_u = k+1$ and $i_v = k+2$.
    First, all leaves of $\phi$ with title $u$ (resp.\ $v$) are replaced with $u \langle i_u \rangle$ (resp.\ $v \langle i_v \rangle$).
    After that, we iterate through all join-nodes $t$.
    Let $i, j \in [k]$ be such that $t$ is a $\eta_{i, j}$-node.
    If the vertex $u$ belongs to $G^\phi_t$ and the label $j_u$ of $u$ in $G^\phi_t$ is equal to $i$ (resp.\ $j$), we add a new $\eta_{i_u, j}$-node (resp.\ $\eta_{i_u, i}$-node) right above $t$.
    Similarly, if the vertex $v$ belongs to $G^\phi_t$ and the label $j_v$ of $v$ in $G^\phi_t$ is equal to $i$ (resp.\ $j$), we add a new $\eta_{i_v, j}$-node (resp.\ $\eta_{i_v, i}$-node) right above $t$.
    After processing all nodes, we add a new $\eta_{i_u, i_v}$-node above the root making it a new root. 
    The process takes only polynomial time.
    By construction, this expression still creates the graph $G$.
    Moreover, it still satisfies the first two properties of a reduced glue-$k$-expression: no join-node creates an already existing edge and the glued graphs are always edge-disjoint (see \cref{lem:useful-glue-expression} for a formal definition).
    After that, we proceed as in the proof of \cref{lem:useful-glue-expression} to ensure that the last property of a reduced glue-$(k+2)$-expression is still satisfied. 
    If we look into that proof, we note that this transformation does not change the root so the join-node creating the edge $e$ is still the root.
    We denote the arisen glue-$(k+2)$-expression by $\xi$.
    By $x$ we denote the child of the root of $\xi$.
    Note that since $u$ and $v$ now have unique labels, the node $x$ creates the edge $e$ only so we have $G^\xi_x = H - e$.
    
    Now given the result of Bergougnoux et al.\ for introduce-, join-, and relabel-nodes~\cite{BergougnouxKK20} as well as our \cref{lem:dp-glue-node}, we can traverse $\xi$ bottom-up to compute a set $\AAA_x$ of partial solutions of $G^\xi_x$ such that $\AAA_x \lesssim_{G^\xi_x} \Pi(G^\xi_x)$.
    Namely, we start with the leaves and then given a set (or sets) of partial solutions representing the set (resp.\ sets) of all partial solutions of the child (resp.\ children) of the current node, say $y$, we first compute a set of partial solutions of $y$ representing all partial solutions of $G^\xi_y$ and then apply the $\rreduce_{G^\xi_y}$-operation to ensure that the number of partial solutions kept at each step is bounded by $n^{\mathcal{O}(k)}$.
    We emphasize that the first two properties of reduced a glue-$(k+2)$-expression (see \cref{{lem:useful-glue-expression}}) ensure that every edge is created exactly once.
    So the expression is ``irredundant'' in terms of clique-width and therefore, the procedures for introduce-, relabel-, and join-nodes by Bergougnoux et al.\ are still correct~\cite{BergougnouxKK20}.
    
    Now we show to decide whether $H$ admits a Hamiltonian cycle using the edge $e$ given the set $\AAA_x$.
    We claim that this is the case if and only if $\AAA_x$ contains a maximal path packing $\PP$ consisting of a single path $P$ with end-vertices $u$ and $v$.
    One direction is almost trivial: Recall that $\PP \in \AAA_x \subseteq \Pi(G^\xi_H)$ and $G^\xi_H$ is a subgraph of $H$ so $P$ is a path in $H$.
    Since $V(H) = V(G^\xi_x)$ and $\PP$ is maximal, the path $P$ contains every vertex of $H$ and together with the edge $e$ it forms a Hamiltonian cycle of $H$.

    For the other direction, let $H$ contain a Hamiltonian cycle using the edge $e$, let $P'$ denote the path connecting $u$ and $v$ on this cycle and not using the edge $e$. 
    Let $\PP'$ denote the path packing of $H$ consisting of $P'$ only.
    Note that since we consider a Hamiltonian cycle, the path $P'$ contains every vertex of $H$ so $\PP'$ is indeed a maximal path packing of $H$.
    Moreover, since it does not contain the edge $e$, $\PP'$ is a path packing of $G^\xi_x$ as well, i.e., $\PP' \in \Pi(G^\xi_x)$.
    Since the vertices $u$ and $v$ have unique labels and they are never relabeled, the red graph $\aux_{G^\xi_x}(\PP')$ on the vertex set $[k+2]$ has a single edge and this edge has end-points $i_u$ and $i_v$.
    Consider a blue multigraph $\MM$ on the vertex set $[k+2]$ whose single edge is between $i_u$ and $i_v$.
    Trivially, the multigraph $\aux_{G^\xi_x}(\PP') \uplus \MM$ admits \rbet{}.
    Then the property $\AAA_x \lesssim_{G^\xi_x} \Pi(G^\xi_x)$ implies that there is a path packing $\PP \in \AAA_x$ such that $\aux_{G^\xi_x}(\PP) \uplus \MM$ admits \rbet{} as well.
    Since $\MM$ consists of a single blue edge between $i_u$ and $i_v$, the red multigraph $\aux_{G^\xi_x}(\PP)$ consists of a single red edge between $i_u$ and $i_v$.
    Again, recall that $u$ resp.\ $v$ are the unique vertices with label $i_u$ resp.\ $i_v$ so the path packing $\PP$ consists of a single path $P$ containing all vertices from $V(G^\xi_x) = V(H)$ (i.e., maximal) with end-points $u$ and $v$.

    Recall that the number of partial solutions kept at each node is bounded by $n^{\mathcal{O}(k)}$ due to the application of the $\operatorname{reduce}$ operator after processing every node.
    By \cref{lem:dp-glue-node}, a glue-node is processed in time polynomial in the number of partial solutions kept for its children, i.e., in $n^{\mathcal{O}(k)}$.
    Also by the results of Bergougnoux et al.~\cite{BergougnouxKK20}, the remaining nodes can also be handled in time~$n^{\mathcal{O}(k)}$.
    Finally, recall that a reduced glue-$(k+2)$-expression contains a polynomial number of nodes.
    So the algorithm runs in time $n^{\mathcal{O}(k)}$.
\end{proof}

Fomin et al.\ have also shown the following lower bound:
\begin{theorem}\cite{FominGLSZ19}
    Let $H$ be an $n$-vertex graph given together with a $k$-expression of $H$. 
    Then the \textsc{Hamiltonian Cycle} problem cannot be solved in time $f(k) \cdot n^{o(k)}$ for any computable function $f$ unless the ETH fails.
\end{theorem}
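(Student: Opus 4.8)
The plan is to establish this as a hardness reduction, since the statement is a conditional lower bound rather than an algorithm. I would reduce from a problem whose ETH-hardness with an explicit $n^{\Omega(k)}$ dependence is already known; the cleanest choice is \textsc{(Multicolored) Clique}, for which Chen et al.\ have shown that, unless ETH fails, no algorithm decides the existence of a $k$-clique in an $N$-vertex graph in time $f(k)\cdot N^{o(k)}$ for any computable $f$. Concretely I would start from the \emph{partitioned} variant, where the vertex set is split into $k$ color classes $V_1,\dots,V_k$, each of size $N$, and one must select one vertex per class so that all $\binom{k}{2}$ chosen pairs are edges.

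Given such an instance, I would construct a graph $G$ together with an $O(k)$-expression such that $G$ admits a \textsc{Hamiltonian Cycle} if and only if the instance has a multicolored clique. The core of the construction is a family of \emph{selection gadgets}, one per color class, each encoding the $N$ possible choices: a Hamiltonian cycle is forced to route through exactly one ``track'' of each selection gadget, and this track encodes the chosen vertex. Pairwise \emph{verification gadgets} are placed between classes and are traversable by the cycle only when the two incident selections correspond to an edge of the clique instance. Consistency of a single selection across all of its verification gadgets is enforced by threading the relevant portion of the cycle through those gadgets in a fixed order, a standard device in Hamiltonicity reductions.

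The delicate point, and the step I expect to be the main obstacle, is to realize this grid-like arrangement of $k$ selection gadgets and $\binom{k}{2}$ verification gadgets with clique-width only $O(k)$, rather than something growing with $N$ or with $k^2$. The key is that the construction is highly regular: the verification gadgets attached to a fixed class can be introduced one at a time while only a constant number of labels per class remain ``active'' at the interface, so a careful ordering of the clique-width operations keeps the number of simultaneously distinguished label classes linear in $k$. Making the Hamiltonicity encoding, which requires precise degree and connectivity control, coexist with this tight label budget is exactly where the construction must be engineered most carefully; a naive layout either fails to force a consistent selection or blows the clique-width up to $\Theta(N)$.

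Finally I would combine the pieces: the reduction produces an $N^{O(1)}$-vertex graph of clique-width $O(k)$, and an accompanying $O(k)$-expression can be emitted in polynomial time. Hence an algorithm solving \textsc{Hamiltonian Cycle} in time $f(k)\cdot n^{o(k)}$ on a graph supplied with a $k$-expression would, after substituting $\cw(G)=O(k)$ and $n=N^{O(1)}$, decide \textsc{Multicolored Clique} in time $f(O(k))\cdot N^{o(k)}$, contradicting the lower bound of Chen et al. This yields the claimed bound for the clique-width parameterization, from which the fusion-width statement follows immediately, since every $k$-expression is in particular a fuse-$k$-expression.
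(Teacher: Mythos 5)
This statement is not proven in the paper at all: it is imported verbatim from Fomin, Golovach, Lokshtanov, Saurabh, and Zehavi \cite{FominGLSZ19}, and the paper's only contribution around it is the one-line observation that the bound transfers from $k$-expressions to fuse-$k$-expressions. So there is no in-paper argument to compare yours against; what you have written is an attempt to reconstruct the proof of the cited external result.

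As such a reconstruction, your proposal has the right overall shape (a reduction from \textsc{Multicolored Clique} via the Chen et al.\ lower bound, with selection and verification gadgets, and a final parameter-composition argument that is logically sound), but it contains a genuine gap rather than a proof: the entire mathematical content of the theorem is the construction you defer. You yourself identify the critical step --- building Hamiltonicity gadgets whose selection/verification semantics survive while the whole arrangement of $k$ selection gadgets and $\binom{k}{2}$ verification gadgets has clique-width $O(k)$ --- and then assert it can be done ``by a careful ordering of the clique-width operations'' without exhibiting a single gadget, a single label assignment, or the argument that a Hamiltonian cycle is actually forced to make a consistent selection. For Hamiltonicity this is not a routine detail: degree and connectivity constraints interact badly with the small number of label classes (all vertices sharing a label become twins with respect to future joins, which tends to destroy the degree control a Hamiltonicity gadget needs), and this is precisely why the published proof is long and delicate. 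Without the explicit gadgets and the verification that the emitted expression uses only $O(k)$ labels, the proposal is a plan for a proof, not a proof. If your goal is only to justify the theorem as used in this paper, the correct move is the one the paper makes: cite \cite{FominGLSZ19} and do not reprove it.
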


Since any $k$-expression of a graph is, in particular, its fuse-$k$-expression, the lower bound transfers to fuse-$k$-expressions as well thus showing that our algorithm is tight under ETH.
\begin{theorem}
    Let $H$ be an $n$-vertex graph given together with a fuse-$k$-expression of $H$. 
    Then the \textsc{Hamiltonian Cycle} problem cannot be solved in time $f(k) \cdot n^{o(k)}$ for any computable function $f$ unless the ETH fails.
\end{theorem}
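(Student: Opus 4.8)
The plan is to prove this lower bound by a direct transfer from the clique-width lower bound stated just above (the cited result of Fomin et al.~\cite{FominGLSZ19}), exploiting the fact that fuse-expressions are a strict generalization of ordinary clique-expressions. The key structural observation is that every $k$-expression is, verbatim, a fuse-$k$-expression: a clique-expression simply uses none of the fuse-operations $\theta_i$, and since the definition of a fuse-$k$-expression permits (but does not require) such operations, the same sequence of introduce-, union-, join-, and relabel-nodes is a well-formed fuse-$k$-expression computing exactly the same labeled graph. In particular, no transformation of the input is needed---the reduction is literally the identity map on expressions.

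First I would make this reinterpretation precise: given an $n$-vertex graph $H$ together with a $k$-expression $\phi$ of $H$, the expression $\phi$ is itself a fuse-$k$-expression of $H$ of the same size, and $G^\phi$ is unchanged since the arising graph depends only on the operations actually present. Next I would argue by contradiction in the standard way. Suppose, for the sake of contradiction, that for some computable function $f$ there were an algorithm $\mathcal{A}$ solving \textsc{Hamiltonian Cycle} in time $f(k)\cdot n^{o(k)}$ whenever a fuse-$k$-expression is provided. Then, given any graph $H$ together with a plain $k$-expression $\phi$, we could feed $\phi$ to $\mathcal{A}$ directly (viewing it as a fuse-$k$-expression by the observation above) and thereby decide \textsc{Hamiltonian Cycle} in time $f(k)\cdot n^{o(k)}$ on inputs equipped with a $k$-expression.

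This running time contradicts the preceding theorem of Fomin et al.~\cite{FominGLSZ19}, which asserts that no such $f(k)\cdot n^{o(k)}$ algorithm exists for \textsc{Hamiltonian Cycle} parameterized by clique-width unless the ETH fails. Hence no algorithm with the assumed running time can exist for the fusion-width parameterization either, and our $n^{\mathcal{O}(k)}$ algorithm from the preceding theorem is tight under ETH.

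I do not expect any real obstacle here: the entire content lies in the one-line observation that the operator set of clique-expressions is a subset of that of fuse-expressions, so the hard instances constructed for clique-width are already hard instances for fusion-width with the same parameter value. The only point deserving a sentence of care is that the parameter $k$ is preserved exactly (we use the same number of labels), so that the $n^{o(k)}$ regime in the hypothesis lines up precisely with the $n^{o(k)}$ regime ruled out by the clique-width lower bound; since the reduction changes neither $n$ nor $k$, this alignment is immediate.
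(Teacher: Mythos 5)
Your proposal is correct and follows exactly the same reasoning as the paper, which justifies this theorem with the single observation that every $k$-expression is in particular a fuse-$k$-expression, so the clique-width lower bound of Fomin et al.\ transfers directly with the same parameter value. Your additional remarks about the identity reduction preserving both $n$ and $k$ are just a more explicit spelling-out of that same one-line argument.
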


    \section{ Algorithms Parameterized by Multi-Clique-Width}\label{app:sec:mcw-algorithms}

In this section we show that for many problems we can obtain algorithms parameterized by multi-clique-width with the same running time as known SETH-tight (and for \textsc{Chromatic Number} even an ETH-tight) algorithms parameterized by clique-width. 
Due to the relation 
\[
    \mcw \stackrel{\cref{app:thm:mcw-fw-relation}}{\leq} \fw + 1 \leq \cw + 1,
\]
the obtained running times are then (S)ETH-tight for all three parameters.
For these problems, we will use known algorithms for the clique-width parameterization and describe what adaptations are needed to handle multiple labels per vertex.
First, we make some simple observations to restrict ourselves to simpler expressions.
\begin{observation}
    Let $H$ be a $k$-labeled graph, let $r \in \NN$ and let $i, s_1, \dots, s_r \in [k]$.
    Then if $i \in \{s_1, \dots, s_r\}$, then it holds that
    \[
        \rho_{i \to \{s_1, \dots, s_r\}} (H) = \rho_{i \to \{i, s_r\}} \circ \dots \circ \rho_{i \to \{i, s_1\}} (H)
    \]
    and if $i \notin \{s_1, \dots, s_r\}$, then it holds that
    \[
        \rho_{i \to \{s_1, \dots, s_r\}} (H) = \rho_{i \to \emptyset} \circ \rho_{i \to \{i, s_r\}} \circ \dots \circ \rho_{i \to \{i, s_1\}} (H).
    \]
    We also have
    \[
        1 \langle s_1, \dots, s_r \rangle = \rho_{s_1 \to \{s_1, s_r\}} \circ \dots \circ \rho_{s_1 \to \{s_1, s_2\}} \circ 1 \langle s_1 \rangle
    \]
    if $r > 0$
    and 
    \[
        1 \langle \emptyset \rangle = \rho_{1 \to \emptyset} 1 \langle 1 \rangle.
    \]
\end{observation}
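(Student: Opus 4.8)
The plan is to verify each of the four identities directly from the semantics of the multi-clique-width relabel operator $\rho_{i \to S}$, which acts on each vertex independently: a vertex with label set $L$ keeps $L$ unchanged if $i \notin L$, and otherwise receives the label set $(L \setminus \{i\}) \cup S$. Since relabel- and introduce-operations affect only the label sets and not the vertex or edge sets, it suffices to fix one arbitrary vertex with label set $L$ and track how $L$ evolves under each composition; establishing equality of the resulting label sets for every $L$ then yields the claimed equality of multi-labeled graphs.

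First I would treat the two relabel identities together. Applying the right-hand chain $\rho_{i \to \{i, s_r\}} \circ \dots \circ \rho_{i \to \{i, s_1\}}$ to a vertex with $i \in L$, the \emph{key observation} is that each factor $\rho_{i \to \{i, s_j\}}$ keeps $i$ in the label set, because $i$ lies in its own target set $\{i, s_j\}$, so every subsequent factor still fires. An easy induction then shows that after all $r$ relabels the label set equals $L \cup \{s_1, \dots, s_r\}$, while a vertex with $i \notin L$ is left untouched throughout. Comparing with $\rho_{i \to \{s_1, \dots, s_r\}}$, which sends such an $L$ to $(L \setminus \{i\}) \cup \{s_1, \dots, s_r\}$, I would split on whether $i \in \{s_1, \dots, s_r\}$: if so, $(L \setminus \{i\}) \cup \{s_1, \dots, s_r\} = L \cup \{s_1, \dots, s_r\}$ and the bare chain already matches, giving the first equation; if not, the target sets never reintroduce $i$, so one appends a final $\rho_{i \to \emptyset}$ to delete the copy of $i$ that the chain preserved, recovering $(L \setminus \{i\}) \cup \{s_1, \dots, s_r\}$ and yielding the second equation.

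The two introduce identities follow by the same bookkeeping. For $1\langle s_1, \dots, s_r\rangle$ with $r > 0$, I would start from the single-label vertex produced by $1\langle s_1\rangle$, whose label set is $\{s_1\}$, and apply the chain $\rho_{s_1 \to \{s_1, s_r\}} \circ \dots \circ \rho_{s_1 \to \{s_1, s_2\}}$; since each factor retains $s_1$, an induction identical to the relabel case grows the label set to $\{s_1, s_2, \dots, s_r\}$, matching the direct introduce. The empty-label case $1\langle \emptyset \rangle = \rho_{1 \to \emptyset}\, 1\langle 1\rangle$ is immediate, as the single factor erases the only label.

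I expect no genuine obstacle here; the statement is a mechanical unfolding of the definitions. The only point requiring care—and the one worth stating explicitly in the write-up—is that placing $i$ (respectively $s_1$) inside the target set of every intermediate relabel is exactly what preserves the source label so that the later relabels in the chain are not vacuous. Overlooking this is the natural way such a decomposition could silently fail, and it is also what forces the asymmetry between the two relabel cases, namely the need for a trailing $\rho_{i \to \emptyset}$ precisely when $i \notin \{s_1, \dots, s_r\}$.
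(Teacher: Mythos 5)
Your proof is correct and matches the intended justification: the paper states this as an observation with no written proof, treating it as a mechanical unfolding of the relabel semantics, which is exactly the vertex-by-vertex label-set tracking you carry out. Your explicit remark that keeping $i$ (resp.\ $s_1$) in each intermediate target set is what keeps the chain firing is the one non-trivial point, and you handle it correctly, including the trailing $\rho_{i \to \emptyset}$ in the case $i \notin \{s_1, \dots, s_r\}$.
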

Therefore, by first applying the above rules to all relabel-nodes whose right side has size one or more than two and then suppressing relabel-nodes of form $\rho_{i \to \{i\}}$, we may restrict ourselves to relabel-nodes that either remove some label $i$ or add a label $j$ to every vertex with label $i$.
Similarly, using the last two equalities we may assume that every introduce-node uses exactly one label.
Note that the length of the multi-$k$-expression increases by at most a factor of $k$ after these transformations.
Also we can assume that for every join-node, the joined label sets are non-empty. 

Finally, me may reduce the number of nodes in a multi-$k$-expression to polynomial as follows.
First, similarly to clique-expressions, we may assume that every union-node is first, followed by a sequence of join-nodes, then a sequence of relabel-nodes, and then a union-node (if exists). 
Then we may assume that between any two consecutive union-nodes, there are at most $k^2$ join-nodes: at most one per pair $i, j \in [k]$.
Now we sketch how to achieve that we also have at most $k$ relabel-nodes between two consecutive union-nodes, namely at most one per possible left side $i \in [k]$.
Suppose there are two distinct nodes $x_1$ and $x_2$ being $\rho_{i \to S_1}$- and $\rho_{i \to S_2}$-nodes, respectively, for $S_1, S_2 \subseteq [k]$.
We choose $x_1$ and $x_2$ so that there is no further $\rho_{i \to S'}$-nodes between them.
If the label set $i$ is empty right before the application of $x_2$, we simply suppress $x_2$.
Otherwise, observe that in all vertices that had label $i$ before $x_1$, this label was replaced by $S_1$ (with possibly $i \in S_1$).
So every vertex that has label $i$ right before $x_2$ got this label at some relabel-node on the path from $x_1$ to $x_2$ (including $x_1$).
Therefore, for every $\rho_{j \to S}$-node $x$ on this path with $i \in S$, we replace the operation in $x$ with $\rho_{j \to (S \setminus \{i\}) \cup S_2}$.
And after that we suppress $x_2$.
Note that this is correct since no $\rho_{i \to S'}$-node occurs between $x_1$ and $x_2$.
By repeating this process, we obtain that for each $i$ there is at most one $\rho_{i \to S}$-node between the two consecutive nodes.
As for a clique-expression, the leaves of a multi-expression are in bijection with the vertices of the arising graph (i.e., there at most $n$ leaves). 
And since union-nodes are the only nodes with more than one child, there are at most $\mathcal{O}(n)$ union-nodes.
Finally, the above argument implies that there are at most $\mathcal{O}(k^2 n)$ relabel- and join-nodes.

\begin{lemma}\label{lem:mcw-special-relabels}
    Let $\phi$ be a multi-$k$-expression of a graph $H$ on $n$ vertices.
    Then given $\phi$ in time polynomial in $|\phi|$ and $k$, we can compute a multi-$k$-expression $\xi$ of $H$, such that the are at most $\mathcal{O}(k^2 n)$ nodes and for every node $t$ of $\xi$ the following holds. 
    If $t$ is a $\rho_{i \to S}$-node for some $i \in [k]$ and $S \subseteq [k]$, then we have $|S| = \emptyset$ or $S = \{i, j\}$ for some $j \neq i \in [k]$.
    If $t$ is a $\eta_{i, j}$-node for some $i \neq j \in [k]$, then we have $U^t_i \neq \emptyset$ and $U^t_j \neq \emptyset$.
    And if $t$ is a $1\langle S \rangle$-node for some $S \subseteq [k]$, then we have $|S| = 1$.
\end{lemma}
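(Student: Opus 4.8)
The plan is to observe that this lemma is exactly the formalization of the normalizations sketched in the paragraph preceding it, and that the only real content is to carry them out in an order for which no transformation undoes a property established by an earlier one and for which the node count stays at $\mathcal{O}(k^2 n)$. I would split the work into two phases: a first phase that bounds the number of nodes while possibly keeping relabel-nodes with arbitrary right-hand sides, and a second phase that forces the three required local shapes. Correctness is a non-issue throughout, since every rewrite is either one of the label-preserving identities of the preceding Observation or the suppression of a node that creates no edge, so the labeled graph produced at the root never changes.

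For the counting phase I would first invoke the folklore reordering (justified by the same join/relabel commutation arguments for multi-expressions already used in the preceding discussion) so that on every maximal subpath avoiding union-nodes all join-nodes precede all relabel-nodes. Then I would de-duplicate: at most one join survives per unordered pair $\{i,j\}$, giving $\le k^2$ join-nodes per block, and the folding argument of the preceding discussion (which rewrites the intermediate relabels so as to absorb the upper $\rho_{i\to S_2}$-node and then suppresses it) leaves at most one relabel per left label $i$, hence $\le k$ relabels per block. Since the leaves of a multi-expression biject with $V(H)$ there are at most $n$ leaves and thus $\mathcal{O}(n)$ union-nodes and $\mathcal{O}(n)$ such blocks, so the expression now has $\mathcal{O}(k^2 n)$ nodes, albeit with relabels whose right-hand sides may be arbitrary.

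In the second phase I would apply the Observation node-wise. Each introduce-node $1\langle s_1,\dots,s_r\rangle$ becomes a single-label introduce $1\langle s_1\rangle$ followed by $\le k$ relabels of the form $\rho_{s_1\to\{s_1,s_j\}}$ (with a trailing $\rho_{\cdot\to\emptyset}$ in the empty case), which yields the third property at a cost of $\mathcal{O}(kn)$ extra nodes; and each surviving relabel $\rho_{i\to S}$ is expanded into at most $k+1$ relabels $\rho_{i\to\{i,j\}}$ plus an optional $\rho_{i\to\emptyset}$, after which suppressing the no-ops $\rho_{i\to\{i\}}$ leaves only the right-hand sides $\emptyset$ and $\{i,j\}$, giving the first property. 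Finally, since these expansions create no new joins, I would suppress every $\eta_{i,j}$-node with $U^t_i=\emptyset$ or $U^t_j=\emptyset$ (it creates no edge), giving the second property. The genuinely delicate point, and the one I would be most careful about, is exactly this ordering together with the accounting that keeps the bound at $\mathcal{O}(k^2 n)$ and not $\mathcal{O}(k^3 n)$: the de-duplication must precede the relabel expansion (otherwise it reintroduces complex right-hand sides), and one must check that expansion multiplies the per-block relabel count by only the factor $k+1$ coming from $|S|\le k$, so that the $\le k$ complex relabels per block become $\mathcal{O}(k^2)$ simple ones rather than more; verifying that both the ``joins-then-relabels'' block structure and the leaf/vertex bijection survive both phases then completes the argument.
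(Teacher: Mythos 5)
Your proposal is correct and follows essentially the same route as the paper: the same Observation-based expansion of complex relabels and multi-label introduces, the same reordering into join-then-relabel blocks with at most $k^2$ joins and (after folding nodes with the same left label) at most $k$ relabels per block, and the same leaf--vertex bijection giving $\mathcal{O}(n)$ blocks. The only difference is that you perform the counting phase before the expansion phase, and your explicit remark that de-duplication must precede expansion (since the folding step $\rho_{j\to(S\setminus\{i\})\cup S_2}$ can reintroduce right-hand sides of size greater than two) is a point the paper glosses over by doing the steps in the opposite order without noting that a second expansion pass is then needed.
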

For the remainder of this section we assume that an expression has this form.
Now we show how existing algorithms for clique-width can be adapted to achieve the same running time for multi-clique-width.

\subsection{ Dominating Set} \label{app:subsec:ds}
In the \textsc{Dominating Set} problem, given a graph $G = (V, E)$ we are asked about the cardinality of the smallest set $S \subseteq V$ with $N_G[S] = V$.
Bodlaender et al.\ have developed a $\ostar(4^{\cw})$ algorithm~\cite{BodlaenderLRV10}.
The idea behind it is to store a pair of Boolean values for every label: the first value reflects whether the label set contains a vertex from the partial solution while the second reflects whether all vertices of the label set are dominated.
Crucially (as for most problems handled below), the algorithm does not make use of the fact the every vertex holds exactly one label.
So we can use almost the same algorithm to process a multi-$k$-expression. 
The procedures for introduce-, join-, and union-nodes can be reused from Bodlaender et al.
And it remains to handle relabel-nodes: in this case, the state of every single vertex remains the same and we only need to represent the states with respect to the new labeling function.
For a $\rho_{i \to \emptyset}$-node, all vertices of label $i$ are now dominated and no vertex belongs to the dominating set while the other labels remain unaffected.
And for a $\rho_{i \to \{i, j\}}$-node, first, all vertices of label $j$ are dominated iff this was true for labels $i$ and $j$ before the relabeling; and second, a vertex of label $j$ belongs to a partial solution iff this was the case for the label $i$ or the label $j$ before; the state of other labels (in particular, of the label $i$ remains).
We omit a formal description since analogous ideas occur several times in the next problems.
This yields an $\ostar(4^{\mcw})$ algorithm.
Katsikarelis et al.\ have proven the matching lower bound for clique-width which then also applies to multi-clique-width~\cite{KatsikarelisLP19}.

\begin{theorem}\label{app:thm:dominating-set}
    Let $G$ be a graph given together with a multi-$k$-expression of $G$. Then     \textsc{Dominating Set} can be solved in time $\ostar(4^k)$. 
    Unless SETH fails, this problem cannot be solved in time $\ostar((4 - \varepsilon)^k)$ for any $\varepsilon > 0$.
\end{theorem}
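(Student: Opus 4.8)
The plan is to prove Theorem~\ref{app:thm:dominating-set} in two parts: the algorithmic upper bound $\ostar(4^k)$ and the SETH-tightness. For the algorithm, I would adapt the dynamic programming scheme of Bodlaender et al.\ for clique-width, observing that the crucial feature making the adaptation work is that their records are \emph{per-label} and never rely on each vertex holding exactly one label. The plan is to maintain, for each label $i \in [k]$, a pair of Boolean states $(a_i, b_i)$, where $a_i$ indicates whether the current partial solution $S$ contains a vertex carrying label $i$, and $b_i$ indicates whether every vertex carrying label $i$ is already dominated by $S$. A table entry is then a vector in $(\{0,1\}^2)^{[k]}$, giving $4^k$ possible states, and for each state we store the minimum size of a partial solution realizing it. Summing over the (polynomially many) nodes of the multi-$k$-expression gives the claimed $\ostar(4^k)$ running time, provided each node can be processed in $\ostar(4^k)$.

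For the node processing, I would invoke \cref{lem:mcw-special-relabels} to assume the expression is in normal form, so that introduce-nodes carry a single label, relabels are either $\rho_{i \to \emptyset}$ or $\rho_{i \to \{i,j\}}$, and joins act on non-empty label classes. The introduce-, join-, and union-node procedures can be taken verbatim from Bodlaender et al., since these operations do not depend on vertices being single-labeled. The only genuinely new case is the relabel-node, and here the key point is that relabeling changes no vertex's solution-membership or domination \emph{status}; it only changes which label-classes those vertices contribute to. Concretely, for a $\rho_{i \to \emptyset}$-node the vertices formerly in class $i$ are now absorbed and contribute nothing under label $i$, so class $i$ is treated as vacuously dominated and solution-free while all other classes are unchanged. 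For a $\rho_{i \to \{i,j\}}$-node, a vertex that had label $i$ now also carries $j$, so the new class $j$ is the \emph{union} of the old classes $i$ and $j$: thus $b_j^{\text{new}} = b_i^{\text{old}} \wedge b_j^{\text{old}}$ (all such vertices are dominated iff both old classes were) and $a_j^{\text{new}} = a_i^{\text{old}} \vee a_j^{\text{old}}$, while class $i$ and all other classes retain their states.

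I expect the main obstacle to lie in carefully justifying correctness of the relabel-node, in particular handling the subtlety that the multi-clique-width relabel $\rho_{i \to \{i,j\}}$ \emph{keeps} label $i$ (unlike the clique-width relabel, which moves vertices entirely from $i$ to $j$). One must verify that taking the union of the domination/membership bits of classes $i$ and $j$ into the new class $j$ — while leaving class $i$ untouched — correctly reflects that a single vertex may now legitimately belong to two label classes simultaneously. This overlap of classes is exactly what clique-width forbids and what must be argued to cause no double-counting or inconsistency; the saving grace is that the records track only existential/universal Boolean predicates over classes, which behave correctly under taking unions of classes. Once this is established, correctness follows by a routine induction over the parse tree, and the final answer is read off from the root table as the minimum solution size over all states in which every label class is marked dominated.

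For the lower bound, no new work is required: \cref{app:thm:mcw-fw-relation} gives $\mcw \leq \fw + 1 \leq \cw + 1$, so any multi-$k$-expression yields a clique-width bound of at most $k+1$ up to the additive constant that is irrelevant for SETH-tightness. The plan is therefore simply to cite the clique-width lower bound of Katsikarelis et al.~\cite{KatsikarelisLP19}, which rules out $\ostar((4-\varepsilon)^{\cw})$ under SETH; since clique-width is bounded by multi-clique-width plus a constant, the same $(4-\varepsilon)$ barrier transfers to the multi-clique-width parameterization, matching the upper bound.
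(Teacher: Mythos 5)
Your proposal matches the paper's proof essentially verbatim: the same per-label Boolean records of Bodlaender et al., the same observation that introduce-, join-, and union-nodes carry over unchanged and only relabel-nodes need new handling, and the identical update rules $b_j \leftarrow b_i \wedge b_j$ and $a_j \leftarrow a_i \vee a_j$ at a $\rho_{i \to \{i,j\}}$-node (with label $i$ untouched), plus the lower bound cited from Katsikarelis et al. One minor slip in your last paragraph: the lower bound transfers because every $k$-expression is itself a multi-$k$-expression (equivalently, $\mcw \leq \cw$), not because ``clique-width is bounded by multi-clique-width plus a constant'' --- the latter is false in general, though this does not affect your conclusion.
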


\subsection{ \textsc{Chromatic Number}}\label{app:subsec:chromatic-number}
In the \textsc{Chromatic Number} problem, given a graph $G = (V, E)$ we are asked about the smallest integer $q$ such that there exists a proper $q$-coloring of $G$, that is, a mapping $\phi \colon V \to [q]$ such that $\phi(u) \neq \phi(v)$ for all $uv \in E$.
Kobler and Rotics have developed an algorithm that given a graph and a $k$-expression of this graph, solves the \textsc{Chromatic Number} problem in time $f(k) \cdot n^{2^{\mathcal{O}(k)}}$~\cite{KoblerR03}.
Later, Fomin et al.\ have proven the ETH-tightness of this result by showing that under ETH, there is no algorithm solving this problem in $f(k) \cdot n^{2^{o(k)}}$ even if a $k$-expression of the graph is provided~\cite{FominGLSZ19}.
The algorithm by Kobler and Rotics is based on dynamic programming. 
The records are of the form $N: 2^{[k]} \setminus \{\emptyset\} \to [n]_0$ and for a node $t$ of a $k$-expression, a mapping $N$ is a \emph{record} if there exists a proper coloring $c: V(G_t) \to \NN$ of $G_t$ such that for every subset $\emptyset \neq S \subseteq [k]$ of labels, we have 
\[
    \left|\left\{j \in \NN \mid \forall i \in S \colon U^t_i \cap c^{-1}(j) \neq \emptyset, \forall i \in [k] \setminus S \colon U^t_i \cap c^{-1}(j) = \emptyset\right\}\right| = N[S].
\]
Simply speaking, $N[S]$ reflects how many colors are there that occur at each label from $S$ but at no further label.
Let $T_t$ denote the set of records at the node $t$.
Let $r$ denote the root of the $k$-expression.
In the end, their algorithm outputs the smallest number $q$ such that there exists a record $N$ in $T_r$ with $\sum_{S \subseteq [k]} N[S] = q$.
This sum is exactly the number of colors used by a corresponding coloring since for every color, there exists a (unique) set $S$ of labels on which this color is used.
For the multi-clique-width setting, there is a small hurdle: it might happen that at some point a vertex loses all its labels so we might thus ``forget'' that such a vertex also uses some color.
To overcome this issue, given a multi-$k$-expression $\phi$ of a graph $G$, we create a multi-$k+1$-expression of the same graph by replacing every $1\langle i \rangle$-node (for $i \in [k]$ with a $1\langle i, k+1 \rangle$-node.
After that, we apply the simple transformations describe earlier to achieve that the expression satisfies \cref{lem:mcw-special-relabels}.
This ensures that at every sub-expression, every vertex has at least one label (namely $k+1$) so for any record $N$, the value $\sum_{\emptyset \neq S \subseteq [k+1]} N[S]$ still reflects the total number of colors used by the corresponding coloring.
Apart from that, their algorithm never uses that a vertex holds exactly one label and therefore can be easily adapted to multi-clique-width as follows.
Introduce-, join-, and union-nodes can be adopted from the original algorithm by Kobler and Rotics.
And it remains to handle relabel-nodes.

First, let $t$ be a $\rho_{i \to \emptyset}$-node for some $i \in [k+1]$ and let $t'$ be its child.
For every record $N'$ of $t'$, we create a record $N$ such that for every $\emptyset \neq S \subseteq [k+1]$, we have:
\[
    N[S] = 
    \begin{cases}
        0 & i \in S \\
        N'[S] + N'[S \cup \{i\}] & i \notin S
    \end{cases}
    .
\]
Then we add $N$ to the set of records of $t$.
It is straight-forward to verify that this process results exactly the set of records of $t$.

Second, let $t$ be a $\rho_{i \to \{i, j\}}$-node for some $i \neq j \in [k]$ and let $t'$ be its child.
We may assume that $i$ is non-empty at $t'$ since $t$ can be suppressed otherwise.
For every record $N'$ of $t'$, we create a record $N$ as follows.
If the label $j$ is empty in $t'$, then for all $\emptyset \neq S \subseteq [k]$ we set 
\[
    N[S] = N'[S']
\]
where $S'$ is obtained from $S$ by swapping the roles of $i$ and $j$, i.e.,
\[
    S' = 
    \begin{cases}
        S & i, j \notin S \\
        S & i, j \in S \\
        S \setminus \{i\} \cup \{j\} & i \in S, j \notin S \\
        S \setminus \{j\} \cup \{i\} & i \notin S, j \in S \\
    \end{cases}
    .
\]
Otherwise, we may assume that $j$ is non-empty in $t'$.
Then for every $\emptyset \neq S \subseteq [k]$, we set:
\[
    N[S] = 
    \begin{cases}
        0 & i \in S, j \notin S \\
        N'[S] + N'[S \setminus \{j\}] & i \in S, j \in S \\
        N'[S] & i \notin S
    \end{cases}
    .
\]
Again, it is easy to verify these equalities.
In the end, as in the original algorithm, we output the smallest number $q$ for which there is a record $N$ in $T_r$ with $\sum_{\emptyset \neq S \subseteq [k+1]} N[S] = q$.
The process runs in time $n^{2^{\mathcal{O}(k)}}$ for every node and since the number of nodes can be assumed to be polynomial, the whole algorithm has the same running time.
Fomin et al.\ showed that under ETH the problem cannot be solved in time $f(k) \cdot n^{2^{o(k)}}$ for any computable function $f$.
Since multi-clique-width lower-bounds clique-width, this also applies in our case.

\begin{theorem}\label{app:thm:chromatic-number}
    Let $G$ be a graph given together with a multi-$k$-expression of $G$. Then     \textsc{Chromatic Number} problem can be solved in time $f(k) \cdot n^{2^{\mathcal{O}(k)}}$.
    Unless ETH fails, this problem cannot be solved in time $f(k) \cdot n^{2^{o(k)}}$ for any computable function $f$.
\end{theorem}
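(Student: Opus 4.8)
The plan is to split the statement into the algorithmic upper bound and the conditional lower bound, proving each separately. For the upper bound I would adapt the dynamic-programming algorithm of Kobler and Rotics~\cite{KoblerR03} for clique-width, reusing their record structure: for a node $t$ a record is a map $N\colon 2^{[k]}\setminus\{\emptyset\}\to[n]_0$ where $N[S]$ counts the colors of a proper coloring of $G^\phi_t$ that appear on exactly the label set $S$. Since the introduce-, join-, and union-node procedures only ever inspect which labels a color touches, and never rely on each vertex carrying a single label, they transfer verbatim to multi-$k$-expressions.

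The hard part will be the relabel-nodes, and in particular the operation $\rho_{i\to\emptyset}$: unlike for clique-width, a vertex in the multi-label setting can lose \emph{all} of its labels, which would silently drop its color from the count $\sum_S N[S]$ and break the final output. I would resolve this by a preprocessing step that attaches a fresh permanent label: replace every $1\langle i\rangle$-node by $1\langle i,k+1\rangle$, so that label $k+1$ is never removed and every vertex of every subexpression retains at least one label. After renormalizing the expression via \cref{lem:mcw-special-relabels} (so that relabel-nodes take only the forms $\rho_{i\to\emptyset}$ and $\rho_{i\to\{i,j\}}$ and introduce-nodes use a single label), the two relabel cases are handled by the explicit table updates indicated above: for $\rho_{i\to\emptyset}$ one merges the entries of $S$ and $S\cup\{i\}$ and zeroes out all $S$ containing $i$; for $\rho_{i\to\{i,j\}}$ one either swaps the roles of $i$ and $j$ (when label $j$ is empty) or adds $N'[S\setminus\{j\}]$ into $N'[S]$ on sets containing both $i$ and $j$. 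Each update is justified by a routine counting argument tracking how the color classes redistribute across label sets, which I would state but not belabor.

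For the running time, a record is a function from the $2^{k+1}-1$ nonempty subsets of $[k+1]$ into $[n]_0$, so there are at most $(n+1)^{2^{\mathcal{O}(k)}}$ records at each node and every node is processed in time $n^{2^{\mathcal{O}(k)}}$; since \cref{lem:mcw-special-relabels} bounds the number of nodes by $\mathcal{O}(k^2 n)$, the total running time is $f(k)\cdot n^{2^{\mathcal{O}(k)}}$. For the matching lower bound I would simply observe that any $k$-expression is in particular a multi-$k$-expression (a clique-expression assigns each vertex the singleton label set), so a faster multi-clique-width algorithm would immediately yield a faster clique-width algorithm; hence the ETH-tight lower bound of Fomin et al.~\cite{FominGLSZ19} for clique-width applies unchanged to multi-clique-width, ruling out $f(k)\cdot n^{2^{o(k)}}$ algorithms unless ETH fails.
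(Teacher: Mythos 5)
Your proposal matches the paper's proof essentially step for step: the same Kobler--Rotics records, the same identification of the $\rho_{i\to\emptyset}$ "lost vertex" issue, the same fix via a permanent extra label $k+1$, the same relabel-node update formulas, and the same transfer of the Fomin et al.\ lower bound. The only nitpick is that in the $\rho_{i\to\{i,j\}}$ case with label $j$ nonempty you should also state explicitly that $N[S]$ is set to $0$ for sets $S$ with $i\in S$ and $j\notin S$, as the paper does; otherwise this is the paper's argument.
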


\subsection{ q-Coloring} \label{app:subsec:q-coloring}
In the \textsc{$q$-Coloring} problem, given a graph $G = (V, E)$ we are asked about existence of a proper $q$-coloring of $G$, that is, a mapping $\phi \colon V \to [q]$ such that $\phi(u) \neq \phi(v)$ for all $uv \in E$.
Now we sketch the main idea behind the SETH-tight $\ostar((2^q - 2)^{\cw})$ algorithm for the \textsc{$q$-Coloring} problem parameterized by clique-width by Lampis~\cite{Lampis20}.
The naive algorithm traversing a clique-expression would store for every label, the set of colors occuring on the vertices of that label, and thus have the running time of $\ostar((2^q)^{\cw})$.
But Lampis observed that two states can be excluded.
First, the empty set of colors occurs at some label if and only if this label is empty so such information is trivial and there is no need to store it.
Second, if some label $i$ contains all $q$ colors and later the vertices of this label obtain a common neighbor $v$, then such a coloring would become not proper since at least of the colors also occurs on label $v$.
Therefore, the set of all colors can only appear on a label that would not get any new common neighbors (and in particular, this label does not participate in any join later).
This led Lampis to a notion of a so-called live label.
A label is live at some node $t$ of the expression if it contains a vertex which will later get an incident edge (and hence, all vertices in the label set will get a common neighbor).
In particular, a live label is non-empty.
For multi-clique-width we will follow a similar idea, however we need to slightly adapt the notion of a live label. 
For motivation, we provide the following example.
Let $q = 3$, we consider a multi-$3$-labeled graph on four vertices with label sets \{1, 2\}, $\{3\}$, $\{1\}$, and $\{1\}$.
Suppose this graph is edgeless, and a partial solution colors these vertices with colors $\bblue$, $\ggreen$, $\ggreen$, and $\rred$, respectively (see \cref{fig:q-coloring}).
And now a $\eta_{2, 3}$-operation occurs. 
Although all three colors appear on label $1$ and a vertex of label $1$ now gets a neighbor, this does not make a partial solution invalid.
So although the label $1$ is live as defined by Lampis, it can still hold all colors.
The reason is that the edge creation happens \emph{due to some other label held by a vertex with label $1$}.
This motivates our following definition.

\begin{figure}[b]
    \includegraphics{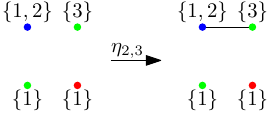}
    \centering
    \caption{Although the label $1$ contains all colors, creating an edge incident with a vertex holding the label $1$ does not lead to conflicts.} 
    \label{fig:q-coloring}
\end{figure}

We say that a label $i$ is \emph{active} at a node $t$ of a multi-$k$-expression $\phi$ if $U^t_i$ is non-empty and there exist labels $a \neq b \in [k]$ and a $\eta_{a, b}$-node $t'$ such that:
\begin{itemize}
    \item The node $t'$ is an ancestor of $t$;
    \item Let $t_1, \dots, t_q$ be all inner relabel-nodes on the path from $t$ to $t'$ (in the order they occur on this path) and let $s_1, \dots, s_q \in [k]$ and $S_1, \dots, S_q \subseteq [k]$ be such that for every $j \in [q]$, the node $t_j$ is a $\rho_{s_j \to S_j}$-node. Then it holds that 
    \[
        a \in \sigma_q ( \sigma_{q-1} ( \dots (\sigma_1(\{i\}) ) ) )
    \]
    where for $T \subseteq [k]$
    \[
        \sigma_j(T) =
        \begin{cases}
            T & \text{if } s_j \notin T \\
            (T \setminus \{s_j\}) \cup S_j & \text{if } s_j \in T \\ 
        \end{cases}
        ;
    \]
    \item The set $U^{t'}_b$ is non-empty.
\end{itemize}
Informally speaking, the set $\sigma_q ( \sigma_{q-1} ( \dots \sigma_1(\{i\}) ) )$ contains the labels to which the label $i$ has been relabeled on the way to $t'$.
Then, for any label $a$ from this set, all vertices with label $i$ in $t$ contain the label $a$ in $t'$ so the join-node $t'$ creates a common neighbor for these vertices.
By $\ell(t)$ we denote the set of all active labels at node $t$.
With this definition, we are now ready to provide an algorithm that given a graph and its multi-$k$-expression solves the \textsc{$q$-Coloring} problem in time $\ostar((2^q - 2)^{k})$ by keeping track of the sets of colors used by every active label.

By \cref{lem:mcw-special-relabels}, we may assume that we are given a multi-$k$-expression of $H$ satisfying the properties of the lemma.
We will follow the dynamic programming by Lampis, provide how to handle relabel-nodes, and observe that the approach for introduce, join-, and union-nodes can be adopted from his work without changes.
By $\CC$ we denote the set $2^{[q]} \setminus \{\emptyset, [q]\}$ of relevant sets of colors, then we have $|\CC| = 2^q - 2$.
The dynamic programming table $A_t$ is indexed by $\CC^{\ell(t)}$, i.e., the assignments of colors to active labels.
And for $f \in \CC^{\ell(t)}$, the value $A_t[f]$ is the number of proper $q$-colorings of $G_t$ such that for every active label $i$, the coloring uses exactly the colors $f(i)$.

Now let $t$ be a $\rho_{i \to \emptyset}$-node with $i \in [k]$ and let $t'$ be its child.
Observe that the label $i$ is then inactive in both $t$ and $t'$.
The remaining labels are not affected so we simply have $\ell(t) = \ell(t')$ and $A_t = A_{t'}$.

Next let $t$ be a $\rho_{i \to \{i, j\}}$-node for $i, j \in [k]$ and let $t'$ be its child.
We may assume that $i \neq j$ holds since otherwise $t$ can be suppressed.
Now we have to make a case distinction based on the activity of labels $i$ and $j$.

\textbf{Case 1}: Assume that $i$ is inactive in $t'$. 
Then by definition of activity, both labels $i$ and $j$ are inactive in $t$.
Other labels are not affected so we again have $\ell(t) = \ell(t')$ and $A_t = A_{t'}$.

\textbf{Case 2}: From now on, we assume that $i$ is active in $t'$. 
Then by definition, at least one of labels $i$ and $j$ is active in $t$.

\textbf{Case 2.a}: Assume that both $i$ and $j$ are active in $t$.
Then label $j$ is either active (and then $\ell(t) = \ell(t')$) or empty in $t'$ (then $\ell(t) = \ell(t') \dot \cup \{j\}$ by definition).
We will compute the entries of the table $A_t$ in the entries of a table $B_t$ indexed by $\CC^{\ell(t)}$ as follows.
First, all entries of $B_t$ are initialized with zeros.
After that we iterate over all footprints $f \in \CC^{\ell(t')}$.
Let $f'$ be such that for all $p \in \ell(t)$ we have:
\[
    f'(p) =
    \begin{cases}
        f(p) & p \neq j \\
        f(i) \cup f(j) & p = j \text{ and } j \text{  is active in } t' \\
        f(i) & p = j \text{ and } j \text{  is empty in } t' \\
    \end{cases}
    .
\]
If $j$ is active in $t'$ and $f'(j) = f(i) \cup f(j)$, then we skip $f'$: the property that $j$ is an active label implies that all vertices of this label will get a common neighbor and all $q$-colorings with such footprints will be invalidated.
Otherwise, we increase the value $B_t[f']$ by the value $A_{t'}[f]$.
This process requires a linear in the number of entries of $A_{t'}$ number of arithmetic operations.
As a result, the entries of $B_t$ coincide with $A_t$: indeed, in this process we have simply recomputed the footprints of every $q$-coloring and eliminated the colorings that would be invalidated later in the expression anyway.

\textbf{Case 2.b}: Assume that $i$ is active in $t$ and $j$ is inactive in $t$.
Then by definition $j$ is inactive in $t'$ as well.
Then we simply have $\ell(t) = \ell(t')$ and $A_t = A_{t'}$ since the underlying graph has not been changed.

\textbf{Case 2.c}: Assume that $i$ is inactive in $t$ and $j$ is active in $t$.
Then $j$ is either active (and then $\ell(t) = \ell(t') - \{i\}$) or empty (then $\ell(t) = (\ell(t') - \{i\}) \cup \{j\}$) in $t'$.
This case is similar to Case 2.a and is handled as follows.
We will compute the entries of the table $A_t$ in the entries of a table $B_t$ indexed by $\CC^{\ell(t)}$ as follows.
First, we iniatilize all values of $B_t$ with zeros.
After that we iterate over all footprints $f \in \CC^{\ell(t')}$.
Let $f' \in \CC^{\ell(t)}$ be such that for all $p \in \ell(t)$ we have:
\[
    f'(p) =
    \begin{cases}
        f(p) & p \neq j \\
        f(i) \cup f(j) & p = j \text{ and } j \text{ is active in } t' \\
        f(i) & p = j \text{ and } j \text{ is empty in } t'
    \end{cases}
    .
\]
If $j$ is active in $t'$ and $f'(j) = f(i) \cup f(j)$, then we skip $f'$: the property that $j$ is an active label implies that all vertices of this label will get a common neighbor and all $q$-colorings with such footprints will be invalidated.
Otherwise, we increase the value $B_t[f']$ by the value $A_{t'}[f]$.
This process requires a linear in the number of entries of $A_{t'}$ number of arithmetic operations.
As a result, the table $B_t$ contains the same entries as $A_t$ and the argument is analogous to Case 2.a.
This concludes the procedure for relabel-nodes.

Now let $t$ be a $\eta_{i, j}$-node for some $i \neq j \in [k]$ and let $t'$ be its child.
If $i$ would be inactive in $t'$, then $i$ would be empty in $t'$ and the join-node $t$ could be suppressed.
So we may assume that $i$, and similarly $j$, is active in $t'$.
Now we can proceed the same way as Lampis so we only sketch it very briefly.
Some of the labels $i$ and $j$ may become inactive in $t$ so let $I = \ell(t') \setminus \ell(t) \subseteq \{i, j\}$.
Then we set
\[
    A_t[f] =
    \begin{cases}
        0 & f(i) \cap f(j) \neq \emptyset \\
        \sum_{c \in \CC^I} A_{t'}[f \times c] & f(i) \cap f(j) = \emptyset \\
    \end{cases}
    .
\]
The first case reflects that every $q$-coloring of $G_t$ in which labels $i$ and $j$ share a color is not proper.
In the second case, we keep the information about the coloring and store it with the correct footprint.
Crucial here is that although we use a different notion of active labels (compared to live labels by Lampis), the equality remains the same.
As we see next, the same holds for union-nodes so that their fast processing can be adopted from Lampis.

Let $t$ be a union-node and let $t_1$ and $t_2$ be its children.
Observe that we have $\ell(t) = \ell(t_1) \cup \ell(t_2)$.
Moreover, if some label $i \in [k]$ is active in $t$ but not in $t_1$, then $i$ is empty in $t_1$ so the set of colors used on it in any coloring is empty.
This is the property that ensures that the approach of Lampis is correctly applicable in our case.
The approach relies on fast subset convolution and can be described as follows.
He first computes the entries $B_{t_1}[f]$ and $B_{t_2}[f]$ of auxiliary tables such that for label $i$, the value $f(i)$ provides the set of colors \emph{allowed} to be used on the label $i$, i.e., an upper bound on the set of colors instead of the exact value.
Then the analogous table $B_t$ for the node $t$ can be computed as pointwise multiplication of $B_{t_1}$ and $B_{t_2}$ and finally, the reverse procedure is used to compute $A_t$ from $B_t$.
We refer to the paper by Lampis for all details~\cite{Lampis20}.
The whole process requires the number of arithmetic operations linear in the number of entries of $A_t$.

Now we are able to handle all types of nodes of a multi-$k$-expression and compute the table $A_r$ where $r$ denotes the root of the expression.
Observe that no label is active in $r$ so this table contains a unique entry equal to the number of proper $q$-colorings of $G_r = H$ and this approach even solves the counting version of the problem.
Each of the considered tables has at most $(2^q - 2)^k$ entries and as argued above, at each node we carry out a number of arithmetic operations linear in the size of the table.
Each entry is bounded by $q^n$ (the largest possible number of $q$-colorings of $H$) and the number of nodes in the expression is polynomial in the size of $H$ and $k$ so the total running time is bounded by $\ostar((2^q - 2)^k)$ as claimed.
Lampis also showed that his algorithm is tight under SETH and since multi-clique-width lower-bounds clique-width, this also applies in our case.

\begin{theorem}\label{app:thm:q-coloring}
    Let $G$ be a graph given together with a multi-$k$-expression of $G$. Then     \textsc{$q$-Coloring} can be solved in time $\ostar((2^q - 2)^k)$. 
    Unless SETH fails, this problem cannot be solved in time $\ostar((2^q - 2 - \varepsilon)^k)$ for any $\varepsilon > 0$.
\end{theorem}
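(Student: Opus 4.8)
The plan is to establish the upper bound by proving correctness and bounding the running time of the dynamic programming described above, and to obtain the lower bound by transferring Lampis's clique-width bound through the relation $\mcw \le \cw$. The algorithmic content is already laid out node-by-node, so the proof should formalize the invariant maintained by the tables $A_t$ and then close with the standard reduction.

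For correctness, first I would fix the invariant that for every node $t$ and every footprint $f \in \CC^{\ell(t)}$, the entry $A_t[f]$ equals the number of proper $q$-colorings of $G_t$ in which each active label $i \in \ell(t)$ uses exactly the color set $f(i)$, and prove it by induction over the structure of the multi-$k$-expression. The base case (introduce-node) is immediate, and the inductive step is exactly the per-node case analysis already given for relabel-, join-, and union-nodes. The one genuinely new ingredient I would isolate as a separate lemma is that restricting footprints to $\CC = 2^{[q]} \setminus \{\emptyset, [q]\}$ discards no colorings that extend to a proper coloring of $H$: an inactive label is either empty (color set $\emptyset$, which need not be tracked) or irrelevant to future joins, while an active label can never carry all $q$ colors in a globally extendable coloring. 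Concretely, if label $i$ is active at $t$, the definition supplies an ancestor $\eta_{a,b}$-node $t'$ such that every vertex of $U^t_i$ carries label $a$ at $t'$ and $U^{t'}_b \neq \emptyset$; after this join each vertex of $U^{t'}_b$ becomes adjacent to all these vertices, so if they used all $q$ colors no color would remain for that neighbor. This is precisely the point where the refined notion of active label (rather than Lampis's live label) is needed, since a join triggered through a \emph{different} label held by a label-$i$ vertex does not force this conflict.

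For the running time, each table is indexed by $\CC^{\ell(t)}$ and hence has at most $(2^q-2)^k$ entries, each a number bounded by $q^n$. The case analyses for introduce-, relabel-, and join-nodes each touch every entry a constant number of times, and the union-node is handled by the fast subset-convolution machinery of Lampis, which runs in time near-linear in the table size; here I would verify that $\ell(t) = \ell(t_1) \cup \ell(t_2)$ and that a label active at $t$ but empty at a child contributes exactly the empty color set, so that the convolution applies verbatim. Since \cref{lem:mcw-special-relabels} bounds the number of nodes by $\mathcal{O}(k^2 n)$, the total running time is $\ostar((2^q-2)^k)$, and the algorithm in fact counts proper $q$-colorings.

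For the lower bound, I would observe that any $k$-expression is a multi-$k$-expression all of whose introduce-nodes use singleton label sets, so a graph of clique-width $k$ admits a multi-$k$-expression; consequently an $\ostar((2^q - 2 - \varepsilon)^{\mcw})$ algorithm would yield an $\ostar((2^q - 2 - \varepsilon)^{\cw})$ algorithm, contradicting Lampis's SETH-based lower bound for clique-width. I expect the main obstacle to be the correctness argument for active labels together with the union-node convolution: one must check carefully that the altered activity condition still licenses both the exclusion of the full-color state and the pointwise-multiplication step, which is where the multi-label setting departs most from the single-label analysis.
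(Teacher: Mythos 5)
Your proposal is correct and follows essentially the same route as the paper: the same table invariant over footprints in $\CC^{\ell(t)}$, the same justification for excluding $\emptyset$ and $[q]$ via the refined notion of active labels (including the observation that the join precondition forces the joined vertices of $U^{t'}_b$ to be common neighbors of all of $U^t_i$), the same reuse of Lampis's subset convolution at union-nodes after checking $\ell(t) = \ell(t_1) \cup \ell(t_2)$, and the same transfer of the SETH lower bound from clique-width. No substantive differences.
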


\subsection{ \textsc{Connected Vertex Cover}} \label{app:subsec:cvc}

In the \textsc{Connected Vertex Cover} problem, given a graph $G = (V, E)$ we are asked about the cardinality of the smallest set $S \subseteq V$ such that $G[S]$ is connected and for every edge $uv \in E$ we have $u \in S$ or $v \in S$.
Although the \textsc{Connected Vertex Cover} problem seems to be very different from \textsc{$q$-Coloring} at first sight, for most parameterizations known (e.g., \cite{HegerfeldK20,CyganNPPRW22,HegerfeldK23}), the SETH-tight algorithms rely on the Cut\&Count technique introduced by Cygan et al.~\cite{CyganNPPRW22}.
Very briefly they show that to decide whether a graph admits a connected vertex cover of certain size, it suffices to count the number of pairs $(L, R)$ where $L \dot\cup R$ is a vertex cover of the desired size and there are no edges between $L$ and $R$.
This very rough description hides some technical details like fixing a vertex in $L$, assuming that a vertex cover of minimum weight is unique via Isolation lemma etc.\ (see \cite{CyganNPPRW22}).
But this shows that on a high level, solving this problem reduces to counting the number of colorings of the graph with colors $L$, $R$, and $N$ (stands for not being in a vertex cover) where $LR$ and $NN$ edges are forbidden: the former pair forbids edges between $L$ and $R$ and the latter ensures that every edge of the input graph has at least one vertex in the vertex cover $L \cup R$.
Hegerfeld and Kratsch employ this observation to obtain an $\ostar(6^{\cw})$ algorithm for \textsc{Connected Vertex Cover} similar to the above algorithm for $q$-Coloring by Lampis.
There are 8 subsets of $\{L, R, N\}$, however, the empty set of colors can only occur on an empty label; while if a label contains all colors, then joining this label with some other label would necessarily lead to a $LR$ or a $NN$ edge~\cite{HegerfeldK23}.
Thus, stated in our terms, a label containing all colors from $\{L, R, N\}$ is necessarily inactive.
Therefore, it suffices to keep track of only six possible combinations of colors that may occur on an active label.
These observations together with a sophisticated convolution at union-nodes yield their $\ostar(6^{\cw})$ algorithm.

For us, two properties are crucial.
First, neither the definition of relevant colorings nor the procedure at any node-type uses the fact that any vertex holds only one label. 
The second property is more technical and requires a closer look at the algorithm by Kratsch and Hegerfeld.
For correctness of the algorithm, they assume that a $k$-expression of a graph is irredundant, namely no edge of the graph is created by multiple join-nodes.
It is folklore that any $k$-expression can be transformed into an irredundant $k$-expression of the same graph in polynomial time. 
Unfortunately, we do not know whether an analogous statement holds for multi-$k$-expressions. 
Also, instead of active labels as we define them in the previous section, they use the live labels as defined by Lampis~\cite{Lampis20}.
However, a closer look at their procedure for the union-node and its correctness reveals that it still works if an expression is not necessarily irredundant but we work with active labels instead of live labels.
Namely, they only use irredundant $k$-expressions to ensure that whenever there is a join-node $\eta_{i,j}$, the vertices of label $i$ will later get a \emph{new} common neighbor.
We observe that the expression does not need to be irredundant for this: even if some edges incident with the label $i$ and created by the join-operation are already present in the graph, the vertices of label $i$ will still share a neighbor after this join and therefore, they cannot use all colors.
With this observation, we can adapt the algorithm by Hegerfeld and Kratsch.
All node types apart from relabel-nodes can be handled the same way while relabel-nodes are analogous to the previous subsection.
Hegerfeld and Kratsch also showed that their algorithm is tight under SETH and since multi-clique-width lower-bounds clique-width, this also applies in our case.

\begin{theorem}\label{app:thm:cvc}
    Let $G$ be a graph given together with a multi-$k$-expression of $G$. Then     \textsc{Connected Vertex Cover} problem can be solved in time $\ostar(6^k)$.
    The algorithm is randomized, it does not return false positives and returns false negatives with probability at most $1/2$.
    Unless SETH fails, this problem cannot be solved in time $\ostar((6 - \varepsilon)^k)$ for any $\varepsilon > 0$.
\end{theorem}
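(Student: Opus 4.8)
The plan is to follow the Cut\&Count template of Cygan et al.~\cite{CyganNPPRW22} combined with the clique-width dynamic program of Hegerfeld and Kratsch~\cite{HegerfeldK23}, and to show that the single place where single-labelledness enters can be repaired by working with the active labels of \cref{app:subsec:q-coloring} instead of live labels. First I would invoke the Cut\&Count machinery essentially as a black box: fix one vertex forced into the cover, sample a random weight function so that a minimum-weight connected vertex cover is unique with probability at least $1/2$ by the Isolation Lemma, and reduce the decision problem to counting, modulo~$2$ and for each target size and weight, the pairs $(L, R)$ for which $L \dot\cup R$ is a vertex cover of the prescribed size (containing the fixed vertex in $L$) with no edge between $L$ and $R$. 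This is exactly a count of ``$3$-colorings'' with color classes $\{L, R, N\}$ forbidding $LR$- and $NN$-edges, and it accounts for both the randomization and the one-sided error claimed in the theorem.

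Second, I would run the dynamic program over a multi-$k$-expression normalized by \cref{lem:mcw-special-relabels}. As in~\cite{HegerfeldK23}, each active label stores one of the six subsets of $\{L, R, N\}$ that are neither empty nor all of $\{L, R, N\}$: an empty color set can occur only on an empty label, while a label carrying all three colors must be inactive, since any later join incident to it would force an $LR$- or $NN$-edge. The routines for introduce-, join-, and union-nodes would be taken directly from Hegerfeld and Kratsch --- in particular their fast convolution at union-nodes, which applies because a label active at a union-node but not at one of its children is empty there and thus carries the empty color set --- whereas relabel-nodes $\rho_{i\to\emptyset}$ and $\rho_{i\to\{i,j\}}$ are dispatched by the same case analysis on the activity of $i$ and $j$ as in the \textsc{$q$-Coloring} algorithm of the preceding subsection.

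The main obstacle is that the clique-width algorithm is proved correct only for \emph{irredundant} expressions, and it is unclear how to make an arbitrary multi-$k$-expression irredundant. The key observation rescuing the argument is that irredundancy is used solely to guarantee that a join $\eta_{i,j}$ gives every vertex of label $i$ a \emph{new} common neighbor, justifying the discarding of the all-colors state. Working with active labels recovers this guarantee without irredundancy: after $\eta_{i,j}$ the vertices of label $i$ share a neighbor regardless of whether some of these edges were already present, so such a label can never retain all three colors. Checking that substituting active labels for live labels preserves correctness of every transition --- most delicately the union-node convolution --- then yields the $\ostar(6^k)$ running time, the node count being polynomial by \cref{lem:mcw-special-relabels}. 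Tightness is immediate: any $k$-expression is a multi-$k$-expression, so the SETH lower bound of Hegerfeld and Kratsch transfers verbatim.
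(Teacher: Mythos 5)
Your proposal is correct and follows essentially the same route as the paper: the same Cut\&Count reduction to counting $\{L,R,N\}$-colorings forbidding $LR$- and $NN$-edges, the same six-state observation, the same repair of the irredundancy issue by replacing live labels with active labels (noting that a join still forces a common neighbor even if some edges already exist), relabel-nodes handled as in the \textsc{$q$-Coloring} subsection, and the SETH lower bound transferred verbatim from clique-width.
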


\subsection{ \textsc{Connected Dominating Set}} \label{app:subsec:cds}
Another problem for which Hegerfeld and Kratsch provided an algorithm is \textsc{Connected Dominating Set}~\cite{HegerfeldK23}.
In the \textsc{Connected Vertex Cover} problem, given a graph $G = (V, E)$ we are asked about the cardinality of the smallest set $S \subseteq V$ such that $G[S]$ is connected and $N_G[S] = V$.
As for some other parameterizations (e.g., \cite{HegerfeldK20,CyganNPPRW22}), the algorithm is based on a combination of Cut\&Count and the inclusion-exclusion approach.
However, the idea behind the reduction of the number of states is different to \textsc{Connected Vertex Cover} or \textsc{$q$-Coloring}: for this problem, there is usually (e.g., \cite{HegerfeldK20,HegerfeldK23}) a state, called \emph{Allowed}, allowing edges to any other states.
So unlike the previous algorithms, if some label $i$ contains all states and it is joined with a label $j$ containing only \emph{Allowed} vertices, no conflict occurs.
Instead, they unify multiple combinations of states into the same state and then show that the precise state combination does not matter if one is interested in counting the solutions modulo 2.
We will rely on the main idea behind this algorithm, however there are multiple minor changes needed to be carried out so we provide our whole algorithm for multi-clique here. 
We will also try to emphasize the parts that are different from the original algorithm by Kratsch and Hegerfeld and argue why they are needed.

Now we define a consistent cut of a graph to state the Cut\&Count result for \textsc{Connected Dominating Set} and use it as a black-box later.
Let $G$ be a graph, let $v^*$ be an arbitrary but fixed vertex of $G$, and let $\omega: V(G) \to [2|V(G)|]$ be a weight function.
We say that $(L, R)$ is a consistent cut in $G$ if the following properties hold: $v^* \in L$, the sets $L$ and $R$ are disjoint, and there are no edges between $L$ and $R$ in $G$.
For $c \in \NN_0$ and $w \in \NN_0$ with $|L \cup R| = c$ and $\omega(L \cup R) = w$ we say that $(L, R)$ has \emph{weight} $w$ and \emph{cardinality} $c$. 
By $\ccut^{c, w}_G$ we denote the family of all consistent cuts of cardinality $c$ and weight $w$ in $G$.
We also denote 
\[
  \CC^{c, w} = \{(L, R) \in \ccut^{c, w}_G \mid L \cup R \text{ is a dominating set of } G\}.
\]

The Cut\&Count result for \textsc{Connected Dominating Set} by Cygan et al.~\cite{CyganNPPRW22} can be stated as follows:
\begin{theorem}[\cite{CyganNPPRW22}] \label{thm:cds-cnc}
    Let $G = (V, E)$ be a graph,~$v^* \in V$ a fixed vertex, $\omega\colon V \to \bigl[2|V|\bigr]$ a weight function, and $c \in [n]_0, w \in \bigl[2|V|^2\bigr]_0$ integers. 
    If there exists an algorithm that given the above input computes the size of $\CC^{c, w}$ modulo 2 in time $\ostar(T)$ for some function computable non-decreasing function $T$, then there exists a randomized algorithm that given a graph $G$ solves the \textsc{Connected Dominating Set} problem in time $\ostar(T)$. The algorithm cannot give false positives and may give false negatives with probability at most $1/2$.	
\end{theorem}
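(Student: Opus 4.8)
The statement is the \textsc{Connected Dominating Set} specialization of the \emph{Cut\&Count} framework of Cygan et al., so rather than reproving it from scratch I would recall the two combinatorial identities underlying it and combine them with the Isolation Lemma. Throughout I fix the prescribed vertex $v^*$ and count only candidate solutions $S$ with $v^* \in S$; the overall algorithm iterates over all $n$ choices of $v^*$, absorbing the extra factor into $\ostar$, and for each cardinality $c$ from small to large checks whether a connected dominating set of size $c$ through $v^*$ exists, returning the smallest such $c$.

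\textbf{Cut and Count.} Fix $c$ and $w$ and group $\CC^{c,w}$ by the underlying set $S = L \cup R$. For a fixed dominating set $S \ni v^*$, a consistent cut forbids edges of $G$ between $L$ and $R$, so every connected component of $G[S]$ must lie entirely on one side; the component of $v^*$ is forced into $L$ while each of the remaining $\operatorname{comp}(G[S]) - 1$ components may go to either side, giving exactly $2^{\operatorname{comp}(G[S]) - 1}$ consistent cuts, where $\operatorname{comp}(\cdot)$ counts connected components. Therefore
\[
    |\CC^{c,w}| = \sum_{S} 2^{\operatorname{comp}(G[S]) - 1},
\]
the sum ranging over dominating sets $S \ni v^*$ with $|S| = c$ and $\omega(S) = w$. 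Modulo $2$ every disconnected $S$ contributes an even summand and cancels, while every connected $S$ contributes $1$; hence $|\CC^{c,w}| \bmod 2$ is precisely the parity of the number of connected dominating sets of cardinality $c$ and weight $w$ containing $v^*$.

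\textbf{Isolation and wrap-up.} The only remaining threat is that two or more connected dominating sets with the same $(c,w)$ cancel each other modulo $2$. I would neutralize this with the Isolation Lemma: sampling the weights $\omega \colon V \to \bigl[2|V|\bigr]$ uniformly at random isolates, with probability at least $1/2$, a unique minimum-weight connected dominating set for each cardinality. At the corresponding weight $w$ the parity computed above equals $1$, certifying existence. Querying the assumed subroutine for all (polynomially many) triples $(c, w, v^*)$ and outputting the least $c$ admitting an odd count thus solves the problem in $\ostar(T)$, since $T$ is non-decreasing. An odd count always witnesses a genuine connected dominating set, so there are no false positives; false negatives arise only when isolation fails, with probability at most $1/2$.

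The conceptually subtle step is the Isolation Lemma argument: one must check that random weights simultaneously isolate a unique minimizer, that this isolation is compatible with restricting to a fixed cardinality, and that the parity test at the isolated weight genuinely certifies existence rather than reflecting accidental cancellation. By contrast, the cut-counting identity is an elementary observation and the enumeration over $(c, w, v^*)$ is pure bookkeeping, so the probabilistic correctness is where the actual work lies.
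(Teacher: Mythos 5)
Your proposal is essentially correct and follows exactly the standard Cut\&Count argument of Cygan et al.\ that the paper imports here by citation without reproving: the $2^{\operatorname{comp}(G[S])-1}$ counting identity, cancellation modulo $2$ of disconnected candidates, and the Isolation Lemma with weights in $\bigl[2|V|\bigr]$ to prevent cancellation among genuine solutions. The only minor imprecision is the phrase suggesting that isolation succeeds ``for each cardinality'' simultaneously; one only needs it to succeed for the family of solutions at the optimal cardinality (which happens with probability at least $1/2$), and your own wrap-up already accounts for the error probability correctly on that basis.
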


So from now on we concentrate on the computation of $|\CC^{c, w}| \mod 2$ given a multi-$k$-expression $\phi$ of $G$.
First, we may assume that $\phi$ satisfies \cref{lem:mcw-special-relabels}.
To simplify the algorithm, at the top of $\phi$ we insert $k$ relabel-nodes: a node $\rho_{i \to \emptyset}$ for each $i \in [k]$.
Clearly, this does not change the underlying graph so with a slight abuse of notation, we still denote the arising expression by $\phi$.

As Hegerfeld and Kratsch, we define the following families of partial solutions.
We say that $(A, B, C)$ is a subpartition of $G$ if $A \cup B \cup C \subseteq V(G)$ and $A, B, C$ are pairwise disjoint.
For a node $t$ of $\phi$ and $\cc, \ww \in \NN$, by $\BB^{\cc, \ww}_t$ we denote the family of all subpartitions $(L, R, F)$ of $G_t$ such that
$(L, R) \in \ccut^{\cc, \ww}_{G_t}$ and there is no edge between $L \cup R$ and $F$.
We call such subpartitions \emph{partial solutions} of size $\overline c$ and weight $\overline w$.
Let us emphasize the key difference to the partial solutions by Hegerfeld and Kratsch: they distinguish between live and dead labels and additionally require that $L \cup R$ dominates all vertices of dead labels.
We will take care of domination in the very end.
Now we introduce the \emph{signatures} of partial solutions as defined by Hegerfeld and Kratsch.
Unlike their work, our signatures are over all labels instead of live labels.
First, for a subpartition $(L, R, F)$ of $V_t$ and a label $i \in [k]$, we define $S^i_t(L, R, F) \subseteq \{\stateL, \stateR, \stateF\}$ so that
\begin{itemize}
    \item $\stateL \in S^i_t(L, R, F)$ iff $L \cap U^i_t \neq \emptyset$, 
    \item $\stateR \in S^i_t(L, R, F)$ iff $R \cap U^i_t \neq \emptyset$,
    \item $\stateF \in S^i_t(L, R, F)$ iff $F \cap U^i_t \neq \emptyset$.
\end{itemize}
As already mentioned before, Hegerfeld and Kratsch unify the subsets of $\{\stateL, \stateR, \stateF\}$ that contain at least two elements into a state $\statetwo$ so the set of used states is defined as $\states = \{\{\stateL\}, \{\stateR\}, \{\stateF\}, \statetwo, \emptyset\}$.
A \emph{signature} is a mapping $f \colon [k] \to \states$.
We say that a subpartition $(L, R, F)$ of $G_t$ is compatible with $f$ if the following holds for every $i \in [k]$:
\begin{itemize}
    \item If $|S^i_t(L, R, F)| < 2$, then $f(i) = S^i_t(L, R, F)$.
    \item If $|S^i_t(L, R, F)| \geq 2$, then $f(i) = \statetwo$.
\end{itemize}
Observe that there exists exactly one signature with which $(L, R, F)$ is compatible.
For a signature $f$, we define 
\[
    \BB^{\cc, \ww}_t(f) = \{(L, R, F) \in \BB^{\cc, \ww}_t \mid (L, R, F) \text{ is compatible with } f\}
\]
and $B^{\cc, \ww}_t(f) = |\BB^{\cc, \ww}_t(f)| \mod 2$.

So our goal for now is to traverse a multi-$k$-expression bottom-up to compute the values $B^{\cc, \ww}_t(f)$. 
In the end, we will summarize how to obtain the size of $\CC^{c, w}$ modulo 2 from it in order to apply \cref{thm:cds-cnc}.
In the following, we assume that the values $\cc$ and $\ww$ are \emph{reasonable}, namely $0 \leq \cc \leq |V(G)|$ and $0 \leq \ww \leq 2|V(G)|^2$.
For values outside these ranges, we implicitly treat any $B^{\cc, \ww}_t(f)$ as zero.
In the following, by $\equiv$ we denote the equality in $\FF_2$.

First, let $t$ be a $1 \langle i \rangle$-node for some $i \in [k]$ introducing a vertex $v$.
Then it holds that
\begin{align*}
    B^{\cc, \ww}_t(f) = &[v \neq v^* \lor f(i) \in \{\stateL\}] \cdot \\
    &\Bigl[\bigl(\cc = \ww = 0 \land f(i) \in \{\emptyset, \{\stateF\}\}\bigr) 
    \lor \\
    &\bigl(\cc = 1 \land \ww = \omega(v) \land f(i) \in \{ \{\stateL\}, \{\stateR\}\}\bigr)
    \Bigr] \cdot \\
    & \bigl[\forall j \in [k] \setminus \{i\} \colon f(j) = \emptyset\bigr] 
\end{align*}

Next, let $t$ be a $\eta_{i, j}$-node for some $i \neq j \in [k]$ and let $t'$ be its child.
Here, we can adopt the approach of Kratsch and Hegerfeld without changes, namely:
\[
    B^{\cc, \ww}_t(f) = \operatorname{feas}(f(i), f(j)) \cdot B^{\cc, \ww}_{t'}
\]
where $\operatorname{feas} \colon \states \times \states \to \{0, 1\}$ is given by
\begin{center}
\begin{tabular}{ c|ccccc } 
$\operatorname{feas}$ & $\emptyset$ & $\{\stateF\}$ & $\{\stateL\}$ & $\{\stateR\}$ & $\statetwo$ \\
\hline
 $\emptyset$ & 1 & 1 & 1 & 1 & 1 \\
 $\{\stateF\}$ & 1 & 1 & 0 & 0 & 0 \\
 $\{\stateL\}$ & 1 & 0 & 1 & 0 & 0 \\
 $\{\stateR\}$ & 1 & 0 & 0 & 1 & 0 \\
 $\statetwo$ & 1 & 0 & 0 & 0 & 0 \\
\end{tabular}
.
\end{center}
In simple words, $\operatorname{feas}$ invalidates partial solutions with $LR$-, $LF$-, or $RF$-edges between $i$ and $j$.

Now let $t$ be a $\rho_{i \to \emptyset}$ for some $i \in [k]$ and let $t'$ be its child.
This operation does not change the set of partial solutions of the arising graph, only their signatures by making the set $U^i_t$ empty. 
So it holds that
\[
    B^{\cc, \ww}_t(f) \equiv [f(i) = \emptyset] \cdot \sum\limits_{s \in \states} B^{\cc, \ww}_{t'}(f[i \to s]).
\]

Similarly, let $t$ be a $\rho_{i \to \{i, j\}}$-node for $i \neq j \in [k]$ and let $t'$ be its child.
Again, the set of partial solutions remains the same but signatures change: a signature at label $j$ is now the union of old signatures at labels $i$ and $j$.
Therefore, we have
\[
    B^{\cc, \ww}_t(f) \equiv \sum\limits_{\substack{s \in \states \colon \\ 
    \operatorname{merge}(s, f(i)) = f(j)}} B^{\cc, \ww}(f[j \to s])
\]
where $\operatorname{merge} \colon \states \times \states$ is defined by Hegerfeld and Kratsch as
\begin{center}
\begin{tabular}{ c|ccccc } 
$\operatorname{merge}$ & $\emptyset$ & $\{\stateF\}$ & $\{\stateL\}$ & $\{\stateR\}$ & $\statetwo$ \\
\hline
$\emptyset$ & $\emptyset$ & $\{\stateF\}$ & $\{\stateL\}$ & $\{\stateR\}$ & $\statetwo$ \\
$\{\stateF\}$ & $\{\stateF\}$ & $\{\stateF\}$ & $\statetwo$ & $\statetwo$ & $\statetwo$ \\
$\{\stateL\}$ & $\{\stateL\}$ & $\statetwo$ & $\{\stateL\}$ & $\statetwo$ & $\statetwo$ \\
$\{\stateR\}$ & $\{\stateR\}$ & $\statetwo$ & $\statetwo$ & $\{\stateR\}$ & $\statetwo$ \\
$\statetwo$ & $\statetwo$ & $\statetwo$ & $\statetwo$ & $\statetwo$ & $\statetwo$ \\
\end{tabular}
.
\end{center}
It is easy to see that for the previous node types, all values $B^{\cc, \ww}_t(f)$ for reasonable $\cc$ and $\ww$ can be computed in $\ostar(5^k)$.

Finally, let $t$ be a union-node and let $t_1$ and $t_2$ be its children.
Then, informally speaking, every partial solution of $G_t$ corresponds to a pair of partial solutions of $G_{t_1}$ and $G_{t_2}$ by forming their pointwise union where a signature at some label is the union of signatures at this label in both partial solutions.
Formally, we have
\[
    B^{\cc, \ww}_t(f) \equiv \sum\limits_{\substack{\cc_1 + \cc_2 = \cc \\ \ww_1 + \ww_2 = \ww}} \sum_{\substack{f_1, f_2 \colon [k] \to \states \\ \operatorname{merge}(f_1, f_2) = f}} B^{\cc_1, \ww_1}_{t_1}(f_1) \cdot B^{\cc_2, \ww_2}_{t_2}(f_2)
\]
where $\operatorname{merge}$ of two functions is componentwise, i.e., $\operatorname{merge}(f_1, f_2)(i) = \operatorname{merge}(f_1(i), f_2(i))$ for all $i \in [k]$.
This equality is analogous to Hegerfeld and Kratsch with the only difference that in our case, the signatures keep track of all labels and not only live ones.
Similarly to their work, we may observe that there is only a polynomial number of reasonable tuples $(\cc_1, \cc_2, \ww_1, \ww_2)$. 
Then we may iterate over all of them in polynomial time.
Now we may assume that such a tuple is fixed and we aim to compute
\[
    \sum_{\substack{f_1, f_2 \colon [k] \to \states \\ \operatorname{merge}(f_1, f_2) = f}} B^{\cc_1, \ww_1}_{t_1}(f_1) \cdot B^{\cc_2, \ww_2}_{t_2}(f_2) \mod 2.
\]
By Lemma 4.6 in \cite{HegerfeldK23}, this can be accomplished in time $\ostar(5^k)$.

These equalities provide a way to compute the values $B^{c, w}_r(f)$ where $r$ denotes the root of $\phi$ for all signatures $f$. 
Any node is processed in time $\ostar(5^k)$ and since we may assume that the number of nodes in $\phi$ is polynomial in $k$ and $|V(G)|$, the values can be computed in time $\ostar(5^k)$.
At the beginning, we already mentioned that the transformation from the numbers $B^{c, w}_r(f)$ to the value $|\CC^{c, w}| \mod 2$ has to be carried out differently for multi-clique-width.
For clique-width, Hegerfeld and Kratsch do it \emph{labelwise}: namely, they ensure that the vertices of a label are dominated once the label is not live anymore. 
This ensures that every vertex is processed exactly once like this.
There are two issues related to this in our case.
First, for this they rely on the existence of irredundant clique-expressions and we do not know whether this is true for multi-expressions.
Second, transforming a vertex every time one of its labels is not active anymore would potentially lead to transforming this vertex multiple times resulting in an uncontrolled behaviour.
To overcome these issues, we will carry out such a transformation at the very end.
Let us note that although the procedure is different from the original work of Hegerfeld and Kratsch, the idea behind it remains the same.

Recall that at the top of $\phi$, we have a $\rho_{i \to \emptyset}$-node for every $i \in [k]$.
Thus, we have $U^i_r = \emptyset$ for all $i \in [k]$ and every partial solution of $G_r = G$ has the signature $f_\emptyset \colon [k] \to \states$ where $f_\emptyset(i) = \emptyset$ for every $i \in [k]$.
So we have
\[
    |\BB^{c, w}_r| \equiv B^{c, w}_r(f_\emptyset).
\]
Now we claim that $|\CC^{c, w}| \equiv |\BB^{c, w}_r|$ holds.
For the simplicity of notation, let
\[
    \DD^{c, w} = \{ (L, R, \emptyset) \mid (L, R) \in \CC^{c, w}\}.
\]
Clearly, it holds that $|\DD^{c, w}| = |\CC^{c, w}|$ so it suffices to show that $|\DD^{c, w}| \equiv |\BB^{c, w}_r|$.
First, observe that $\DD^{c, w} \subseteq \BB^{c, w}_r$ holds: indeed, for every element $(L, R, \emptyset)$ of $\DD^{c, w}$, the pair $(L, R)$ is a consistent cut of $G_r = G$ and the cardinality resp.\ weight of $L \cup R$ is $c$ resp. $w$.
Next we show that the cardinality of $\BB^{c, w}_r \setminus \DD^{c, w}$ is even.
For this, consider an arbitrary but fixed pair $(L, R)$ such that there exists $F$ with 
\[
    (L, R, F) \in \BB^{c, w}_r \setminus \DD^{c, w}.
\]
First, we claim that $L \cup R$ is not a dominating set of $G = G_r$.
If $F = \emptyset$, then by definition of these sets, the only reason for $(L, R, F)$ to belong to $\BB^{c, w}_r$ but not $\DD^{c, w}$ is that $L \cup R$ is not a dominating set of $G$;
On the other hand, if there is a vertex $v \in F$, then there is no edge between $L \cup R$ and $v \in F$ so $v$ is undominated by $L \cup R$.
Let $U = V(G) \setminus N_G[L \cup R]$.
By our claim, the set $U$ is non-empty.
Then the sets $F$ satisfying $(L, R, F) \in \BB^{c, w}_r \setminus \DD^{c, w}$ are exactly the subsets of $U$ since there is no edge between $L \cup R$ and $F$.
Therefore there are exactly $2^{|U|}$ such sets $F$.
Recall that $U$ is non-empty so $2^{|U|}$ is even.
Altogether, for every fixed pair $(L, R)$ there exist either no or an even number of sets $F$ with $(L, R, F) \in \BB^{c, w}_r \setminus \DD^{c, w}$.
So the size of $\BB^{c, w}_r \setminus \DD^{c, w}$ is indeed even.
Altogether we obtain that
\[
    |\CC^{c, w}| = |\DD^{c, w}| \equiv |\BB^{c, w}_r|.
\]
Thus, the above algorithm computes the size of $\CC^{c, w}$ in time $\ostar(5^k)$ and by \cref{thm:cds-cnc}, the \textsc{Connected Dominating Set} problem can also be solved in time $\ostar(5^k)$.
Hegerfeld and Kratsch also showed that their algorithm is tight under SETH and since multi-clique-width lower-bounds clique-width, this also applies in our case.

\begin{theorem}\label{app:thm:cds}
    Let $G$ be a graph given together with a multi-$k$-expression of $G$. Then     \textsc{Connected Dominating Set} problem can be solved in time $\ostar(5^k)$.
    The algorithm is randomized, it does not return false positives and returns false negatives with probability at most $1/2$.
    Unless SETH fails, this problem cannot be solved in time $\ostar((5 - \varepsilon)^k)$ for any~$\varepsilon > 0$.
\end{theorem}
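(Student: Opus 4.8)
The plan is to reduce to a counting problem modulo two via the Cut\&Count framework and then to solve that counting problem by dynamic programming over the multi-$k$-expression. Concretely, I would invoke \cref{thm:cds-cnc}: it suffices to compute $|\CC^{c,w}| \bmod 2$ for all reasonable cardinalities $c$ and weights $w$ in time $\ostar(5^k)$, since this yields a randomized one-sided-error algorithm for \textsc{Connected Dominating Set} within the same running time. First I would preprocess the given expression using \cref{lem:mcw-special-relabels} so that relabel-nodes either delete a label or add a single label, introduce-nodes use one label, and join-nodes act on non-empty label classes; I would additionally append $k$ relabel-to-empty nodes $\rho_{i \to \emptyset}$ at the top so that every label is empty at the root.

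The core is a bottom-up dynamic program that, for every node $t$, every signature $f\colon [k] \to \states$ over the five-element state set $\states = \{\{\stateL\},\{\stateR\},\{\stateF\},\statetwo,\emptyset\}$, and every reasonable pair $(\cc,\ww)$, stores the parity $B^{\cc,\ww}_t(f) = |\BB^{\cc,\ww}_t(f)| \bmod 2$ of the number of compatible partial solutions. I would then verify the recurrences node by node: the introduce-node is a direct base case; the join-node multiplies by the feasibility table $\operatorname{feas}$ that kills $LR$-, $LF$-, and $RF$-edges; the $\rho_{i\to\emptyset}$-node sums over the old state of label $i$; and the $\rho_{i\to\{i,j\}}$-node sums over the preimages of $f(j)$ under $\operatorname{merge}(\cdot, f(i))$. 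Each of these is computable in $\ostar(5^k)$ essentially by definition.

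The hard part will be the union-node, whose recurrence is a convolution over the commutative monoid $(\states,\operatorname{merge})$:
\[
    B^{\cc, \ww}_t(f) \equiv \sum_{\substack{\cc_1 + \cc_2 = \cc \\ \ww_1 + \ww_2 = \ww}} \sum_{\substack{f_1, f_2 \colon [k] \to \states \\ \operatorname{merge}(f_1, f_2) = f}} B^{\cc_1, \ww_1}_{t_1}(f_1) \cdot B^{\cc_2, \ww_2}_{t_2}(f_2).
\]
A naive evaluation costs $\ostar(25^k)$; to reach $\ostar(5^k)$ I would reuse the state-merging convolution of Hegerfeld and Kratsch (their Lemma~4.6 in \cite{HegerfeldK23}), after fixing one of the polynomially many tuples $(\cc_1,\cc_2,\ww_1,\ww_2)$. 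The subtlety here --- and the reason I track \emph{all} labels rather than only live or active ones as in the original clique-width algorithm --- is that I cannot assume the multi-expression is irredundant, since it is unknown whether irredundant multi-$k$-expressions always exist. This is why I defer the domination check to the very end instead of enforcing it labelwise.

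Finally I would close the loop at the root $r$. Since all labels are empty there, the only relevant signature is $f_\emptyset$, so $|\BB^{c,w}_r| \equiv B^{c,w}_r(f_\emptyset)$. I would then show $|\CC^{c,w}| \equiv |\BB^{c,w}_r|$ by a parity argument: writing $\DD^{c,w} = \{(L,R,\emptyset) \mid (L,R)\in \CC^{c,w}\}$, the elements of $\BB^{c,w}_r \setminus \DD^{c,w}$ are exactly the triples $(L,R,F)$ for which $L\cup R$ fails to dominate $G$; fixing such a pair $(L,R)$ and letting $U = V(G)\setminus N_G[L\cup R] \neq \emptyset$, the admissible $F$ are precisely the subsets of $U$, of which there are the even number $2^{|U|}$, so the difference has even cardinality. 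Plugging the computed parities into \cref{thm:cds-cnc} gives the $\ostar(5^k)$ randomized algorithm. For the matching lower bound, I would note that every clique-$k$-expression is a multi-$k$-expression with singleton label sets, so $\mcw \le \cw$; hence the SETH lower bound of Hegerfeld and Kratsch ruling out $\ostar((5-\varepsilon)^{\cw})$ transfers verbatim to multi-clique-width.
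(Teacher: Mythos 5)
Your proposal matches the paper's proof essentially step for step: the same reduction via \cref{thm:cds-cnc}, the same preprocessing with \cref{lem:mcw-special-relabels} plus appended $\rho_{i\to\emptyset}$ nodes, the same five-state signatures over all labels with domination deferred to the root, the same node-wise recurrences including the union-node convolution via Lemma~4.6 of Hegerfeld and Kratsch, and the same parity argument counting the $2^{|U|}$ choices of $F$ at the root. No gaps; this is the paper's argument.
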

    \section{ Conclusion}\label{sec:conclusion}
In this work, we studied two generalizations of clique-width, namely fusion-width and multi-clique-width, both introduced by Fürer~\cite{Furer14,Furer17}.
First, we showed that the fusion-width of a graph is an upper bound for its multi-clique-width. 
For the other direction, the best upper bound we are aware of is $\fw \leq 2^{\mcw}$ and we leave open whether this is tight.
By extending existing algorithms for clique-width, we have
obtained tight algorithms parameterized by multi-clique-width for
\textsc{Dominating Set}, \textsc{Chromatic Number},
\textsc{$q$-Coloring}, \textsc{Connected Vertex Cover}, and \textsc{Connected Dominating Set}.
The running times are the same as for (S)ETH-optimal algorithms parameterized by clique-width.

For \textsc{Hamiltonian Cycle}, \textsc{MaxCut}, and \textsc{Edge Dominating Set}, we were not able to achieve analogous results and these complexities remain open.
Instead, we have introduced glue-expressions equivalent to fuse-expressions and then we employed them for these three problems to obtain tight algorithms parameterized by fusion-width with the same running times as ETH-optimal algorithms for clique-width.

Finally, in all algorithms we assume that a multi-$k$-expression / fuse-$k$-expression is provided.
However, the complexity of computing these parameters is unknown.
To the best of our knowledge, the best approximation would proceed via clique-width, have FPT running time, and a double-exponential approximation ratio.

    \bibliography{Bibliography}
\end{document}